\date{}
\newcommand{\comment}[1]{}
\newtheorem{theorem}{Theorem}
\newtheorem{lemma}[theorem]{Lemma}
\newtheorem{proposition}[theorem]{Proposition}
\newtheorem{corollary}[theorem]{Corollary}
\newtheorem{definition}[theorem]{Definition}
\newtheorem{remark}[theorem]{Remark}
\def\bp{{R}} % coordinate general vector
\def\bd{{\Xi}} % coordinate domain
\def\br{{R}} % Divergence second argument
\def\mc{{\mathsf{T}}}  % for  Test T
\def\tc{{\mathcal{T}}} % for type class T
\def\tp{{P}} % for type distribution
\let\emptyset\varnothing
\begin{document}
	\IEEEoverridecommandlockouts
\title{On the Second-Order Asymptotics of\\ the Hoeffding Test and Other Divergence Tests}
\author{
	\IEEEauthorblockN{K.~V.~Harsha, Jithin~Ravi,~\IEEEmembership{Member,~IEEE}, and Tobias~Koch,~\IEEEmembership{Senior Member,~IEEE}}
  \thanks{K.~V.~Harsha is with the Department of Mathematics, 
		Gandhi Institute of Technology and Management, Hyderabad, India (e-mail: hkalluma@gitam.edu).}
  \thanks{J.~Ravi is with the Department of Electronics and Electrical Communication Engineering, 
		Indian Institute of Technology Kharagpur, India (e-mail: jithin@ece.iitkgp.ac.in).}
  \thanks{T.~Koch is with the Signal Theory and Communications Department, Universidad Carlos III de Madrid, 28911, Leganés, Spain, and also with the Gregorio Marañón Health Research Institute, 28007 Madrid, Spain (e-mail: koch@tsc.uc3m.es).}
	\thanks{Part of this work was done while K.~V.~Harsha and J.~Ravi were with the Signal Theory and Communications Department, Universidad Carlos III de Madrid, Legan\'es, Spain. K.~V.~Harsha, J.~Ravi, and T.~Koch have received funding from the European Research Council (ERC) under the European Union's Horizon 2020 research and innovation programme (Grant No.~714161). J.~Ravi has further received funding from the Science and Engineering Research Board (SERB) under Start-up Research Grant (Grant No: SRG/2022/001393) and also from IIT Kharagpur under Faculty Start-up Research Grant (FSRG). T.~Koch has further received funding from the Spanish Ministerio de Ciencia e Innovaci\'on under Grant PID2020-116683GB-C21~/~AEI~/~10.13039/501100011033. The material in this paper was presented in part at the 2022 IEEE Information Theory Workshop (ITW), Mumbai, India, November 2022, and in part at the 2024 International Zurich Seminar on Information and Communication (IZS), Zurich, Switzerland, March 2024.}
}
	\maketitle
	\begin{abstract}
		Consider a binary statistical hypothesis testing problem, where $n$ independent and identically distributed random variables $Z^n$ are either distributed according to the null hypothesis $P$ or the alternative hypothesis $Q$, and only $P$ is known. A well-known test that is suitable for this case is the so-called Hoeffding test, which accepts $P$ if the Kullback-Leibler (KL) divergence between the empirical distribution of $Z^n$  and $P$ is below some threshold. This work characterizes the first and second-order terms of the type-II error probability for a fixed type-I error probability for the Hoeffding test as well as for divergence tests, where the KL divergence is replaced by a general divergence. It is demonstrated that, irrespective of the divergence, divergence tests achieve the first-order term of the Neyman-Pearson test, which is the optimal test when both $P$ and $Q$ are known. In contrast, the second-order term of divergence tests is strictly worse than that of the Neyman-Pearson test. It is further demonstrated that divergence tests with an invariant divergence achieve the same second-order term as the Hoeffding test, but divergence tests with a non-invariant divergence may outperform the Hoeffding test for some alternative hypotheses $Q$. Potentially, this behavior could be exploited by a composite hypothesis test with partial knowledge of the alternative hypothesis $Q$ by tailoring the divergence of the divergence test to the set of possible alternative hypotheses.
\end{abstract}

\begin{IEEEkeywords}
Composite hypothesis testing, divergence, divergence test, generalized likelihood ratio test, Hoeffding test, hypothesis test, second-order asymptotics.
 \end{IEEEkeywords}

\section{Introduction}
\label{sec:intro}
Consider a  binary hypothesis testing problem that decides whether a sequence of independent and identically distributed (i.i.d.)  random variables $Z^n$ is either generated from distribution $P$ or from distribution $Q$. Assume that both distributions are discrete and the hypothesis test has access to $P$ but not to $Q$. A suitable test for this case is  the well-known \emph{Hoeffding test} \cite{H65}, denoted by $\mc^{D_{\textnormal{KL}}}_n(r)$, which accepts  $P$ if $D_{\textnormal{KL}}(\tp_{Z^{n}} \|P) < r$,  for some $r>0$, and otherwise accepts $Q$.
Here, $\tp_{Z^{n}}$ is the type (the empirical distribution) of $Z^n$
 	and $D_{\textnormal{KL}}(P \| Q)$  is the Kullback-Leibler (KL) divergence between $P$ and $Q$~\cite{ATCB}.
 In this paper, we analyze the second-order performance of the Hoeffding test as well as of Hoeffding-like tests,  referred to as \emph{divergence tests}, where the KL divergence is replaced by other divergences (see Section \ref{sec:model} for a rigorous definition).
 
 We focus on the asymptotic behaviour of the type-II error $\beta_n$ (the probability of declaring hypothesis $P$ under hypothesis $Q$) for a fixed type-I error $\alpha_n$ (the probability of declaring hypothesis $Q$ under hypothesis $P$).
When both $P$ and $Q$ are known, the  optimal test is the \emph{likelihood ratio test}, also known as the \emph{Neyman-Pearson test}, denoted by $\mc^{\textnormal{NP}}_n$. For this test, the smallest type-II error $\beta_n$ for which $\alpha_n \leq \epsilon$
satisfies \cite[Prop.~2.3]{T214} 
\begin{equation}
-\ln \beta_{n} =nD_{\textnormal{KL}}(P \| Q) -  \sqrt{nV_{\textnormal{KL}}(P \| Q) }  \mathsf{Q}_{\mathcal{N}}^{-1}(\epsilon) + O(\ln n) \label{eq:NP}
\end{equation}
as $n \rightarrow \infty$, where the KL divergence $D_{\textnormal{KL}}(P \| Q) $  is defined as
\begin{equation}
	D_{\textnormal{KL}}(P \| Q)   \triangleq \sum_{i=1}^{k} P_{i} \ln \frac{P_{i}}{Q_{i}}; \label{eq:kl}
\end{equation}
$\mathsf{Q}_{\mathcal{N}}^{-1}(\cdot)$ denotes the inverse of the tail probability of the standard Normal distribution;
\begin{equation}
\label{eq:divergence_var}
V_{\textnormal{KL}}(P \| Q) \triangleq \sum_{i=1}^{k} P_{i} \left[ \left( \ln  \frac{P_{i}}{Q_{i}} -D_{\textnormal{KL}}(P \|Q) \right)^{2}  \right]
\end{equation}
denotes the KL divergence variance; and $O(\ln n)$ denotes terms that grow at most as fast as $\ln n$ as $n \to \infty$. In \eqref{eq:kl} and \eqref{eq:divergence_var}, $P_i$ and $Q_i$ denote the $i$-th components of $P$ and $Q$, and $k$ denotes their dimension.
 By inspecting the expansion of $-\ln \beta_n$ in \eqref{eq:NP}, one can define the first-order term 
 \begin{equation}
    \label{eq_beta'_g}
     \beta'\triangleq \lim_{n\rightarrow \infty} \frac{-\ln \beta_n}{n}
\end{equation}
and the second-order term
\begin{equation}
    \label{eq_beta"_g}
    \beta''\triangleq \lim_{n\rightarrow \infty} \frac{-\ln \beta_n -n \beta'} {\sqrt{n}}
\end{equation}
if the limits exist.
%The first-order term $\beta'$ is sometimes referred to as the \textit{error exponent}.
For the Neyman-Pearson test, we then have
\begin{equation}
    \beta'_{\textnormal{NP}} = D_{\textnormal{KL}}(P\|Q)
\end{equation}
and
\begin{equation}
    \label{eq:beta''_NP}
    \beta_{\textnormal{NP}}'' = -\sqrt{V_{\textnormal{KL}}(P \| Q) }  \mathsf{Q}_{\mathcal{N}}^{-1}(\epsilon).
\end{equation}
It was shown in \cite{H65} that the first-order term $\beta_{D_{\textnormal{KL}}}'$ of the Hoeffding test is also $D_{\textnormal{KL}}(P\|Q)$. Thus, the Hoeffding test is first-order optimal. 
 
 In this paper, we analyze the second-order performance of the Hoeffding test as well as of divergence tests. Specifically, a divergence test $\mc^{D}_n(r)$ accepts $P$ if $D(\tp_{Z^{n}} \|P) < r$,  for some $r>0$ and an arbitrary divergence $D$, and otherwise accepts $Q$. Consequently, the divergence test $\mc^{D}_n(r)$ includes the Hoeffding test as a special case when $D=D_{\textnormal{KL}}$.

\comment{
Recently, we have demonstrated \cite{HJT_ITW22} that the second-order term of the Hoeffding test is $\beta''=-\sqrt{V(P \| Q) \mathsf{Q}^{-1}_{\chi^{2}_{k-1}}(\epsilon) }$, where $\mathsf{Q}^{-1}_{\chi^{2}_{k-1}}(\cdot)$ denotes the inverse of the tail probability of the chi-square distribution with $k-1$ degrees of freedom. Since $ \sqrt{ \mathsf{Q}^{-1}_{\chi^{2}_{k-1}}(\epsilon) }>   \mathsf{Q}^{-1}(\epsilon)$, it follows that the second-order performance of the Hoeffding test is worse than that of the Neyman-Pearson test.

In this paper, we analyze the second-order performance of the divergence test  $\mc^{D}$, which accepts $P$ if $D(T_{Z^{n}} \|P) < c$,  for some $c>0$, and otherwise accepts $Q$. The divergence $D$ of the divergence test $\mc^{D}$ is arbitrary, so $\mc^{D}$ includes the Hoeffding test as a special case when $D=D_{\textnormal{KL}}$. We demonstrate that the divergence test $\mc^{D}$ achieves the same first-order term $\beta'$ as the Neyman-Pearson test, irrespective of the divergence $D$. Hence, $\mc^{D}$ is first-order optimal for every divergence $D$. We further demonstrate that, for the class of \emph{invariant divergences} \cite{CC82}, which includes the R\'enyi divergence and the f-divergence (and, hence, also the KL divergence), the divergence test $\mc^D$ achieves the same second-order term $\beta''$ as the Hoeffding test. In contrast, we show that a divergence test $\mc^{D}$ with a non-invariant divergence may achieve a second-order term $\beta''$ that is strictly better than that of the Hoeffding test for some $Q$ and $\epsilon$. 

}

\subsection{Related Work}
\label{sub:related}
The considered hypothesis testing problem falls under the category of \emph{composite hypothesis testing} \cite{FM02}. Indeed, in composite hypothesis testing, the test has no access to the null hypothesis $P$ and the alternative hypothesis $Q$, but it has the knowledge that $P$ and $Q$ belong to the sets of distributions $\mathcal{P}$ and $\mathcal{Q}$, respectively. Our setting corresponds to the case where $\mathcal{P}=\{P\}$ and $\mathcal{Q}=\mathcal{P}^c$ (where we use the notation $\mathcal{A}^c$ to denote the complement of a set $\mathcal{A}$).

The Hoeffding test is a particular instance of the \emph{generalized likelihood-ratio test (GLRT)} \cite{VT68}, which is arguably the most common test used in composite hypothesis testing. A useful benchmark for the Hoeffding test is the Neyman-Pearson test, which is the optimal test when both $P$ and $Q$ are known. As indicated before, the Hoeffding test achieves the same first-order term $\beta'$ as the Neyman-Pearson test, both in \emph{Stein's regime}, where the type-I error satisfies $\alpha_n \leq \epsilon$, as well as in the \emph{Chernoff regime}, where $\alpha_n \leq e^{-n\gamma}$, $\gamma >0$; see, e.g., \cite{H65,Zeitouni92,Unnikrishnan11,BF22-1,BF22-2}. Thus, the first-order term of the Neyman-Pearson test can be achieved without having access to the distribution $Q$ of the alternative hypothesis. However, not having access to $Q$ negatively affects higher-order terms. For example, for a given threshold $r$, the type-I error of the Hoeffding test satisfies \cite[Eq.~(10)]{BF22-2}
\begin{equation}
\alpha_n = n^{\frac{k-3}{2}}e^{-nr}(c'+o(1))
\end{equation}
whereas for the corresponding Neyman-Pearson test \cite[Eq.~(9)]{BF22-2}
\begin{equation}
\alpha_n = n^{-\frac{1}{2}}e^{-nr}(c+o(1)).
\end{equation}
Here, $c$ and $c'$ are constants that only depend on $P$, $Q$, and $r$. Moreover, it was demonstrated by Unnikrishnan \emph{et al.} \cite{Unnikrishnan11} that the variance of the normalized Hoeffding test statistic $n D_{\textnormal{KL}}(\tp_{Z^n}\|P)$ converges to $\frac{1}{2}(k-1)$ as $n\to\infty$. Both results suggest that, for moderate $n$, the Hoeffding test scales unfavorably with the cardinality of $P$ and $Q$. This was the motivation of \cite{Unnikrishnan11} to propose the \emph{test via mismatched divergence}. As we shall see, our characterization of the second-order term of the divergence test (and, hence, also the Hoeffding test) also suggests an unfavorable scaling with the cardinality of $P$ and $Q$.

Our setting where $\mathcal{P}=\{P\}$ and $\mathcal{Q}=\mathcal{P}^c$ was also studied by Watanabe \cite{Watanabe18}, who focused on the second-order regime of statistical tests that do not depend on $Q \in \mathcal{Q}$. The test proposed in \cite{Watanabe18} achieves a second-order term which was shown to be \emph{Pareto-optimal}. The related case where only training sequences are available for both $P$ and $Q$ was considered by Gutman \cite{Gut89}. The test proposed in \cite{Gut89} was later shown to be second-order optimal by Li and Tan \cite{YTan20}. %Last but not least, the case where $\mathcal{P}=\{P\}$ and a training sequence is available for $Q$ was studied in \cite{BF22-2}.

\subsection{Contributions and Outline}

The contributions of this paper can be summarized as follows:
\begin{enumerate}
\item We characterize the first-order and second-order terms of the divergence test $\mc^{D}_n(r)$ for an arbitrary divergence $D$ (Theorem~\ref{eq:NP}). \emph{Inter alia}, the proof of this result requires the generalization of a well-known result by Wilks~\cite{W938} regarding the convergence of the KL divergence between the type $\tp_{Z^n}$ and the null hypothesis $P$ to general divergences (Lemma~\ref{convgdiv}). Our characterization demonstrates that the divergence test $\mc^{D}_n(r)$ achieves the same first-order term $\beta'$ as the Neyman-Pearson test, irrespective of the divergence $D$. Hence, $\mc^{D}_n(r)$ is first-order optimal for every divergence $D$. This is consistent with the result that the Hoeffding test is first-order optimal \cite{H65}.
\item Specializing Theorem~\ref{eq:NP} to the case where the divergence $D$ is the KL divergence $D_{\textnormal{KL}}$, we obtain that the second-order term of the Hoeffding test is
\begin{equation}
\label{eq:beta''_KL}
\beta_{D_{\textnormal{KL}}}'' = -\sqrt{V_{\textnormal{KL}}(P \| Q) \mathsf{Q}^{-1}_{\chi^{2}_{k-1}}(\epsilon) } 
\end{equation}
where $ \mathsf{Q}^{-1}_{\chi^{2}_{k-1}}(\cdot)$ is the inverse of the tail probability of the chi-square distribution with $k-1$ degrees of freedom.\footnote{The first-order and second-order terms of the Hoeffding test where also presented by Watanabe in his presentation at the \emph{Bombay Information Theory Seminar (BITS) 2018
} \cite{WatanabeBITS18}.} We show that $\mathsf{Q}^{-1}_{\chi^{2}_{k-1}}(\epsilon)$ is monotonically increasing in $k$, and its square root is strictly larger than $\mathsf{Q}_{\mathcal{N}}^{-1}(\epsilon)$. By comparing \eqref{eq:beta''_KL} to the second-order term \eqref{eq:beta''_NP} of the Neyman-Pearson test, we thus again observe that the Hoeffding test scales unfavorably with the cardinality of $P$ and $Q$, and its second-order performance is worse than that of the Neyman-Pearson test.
\item We demonstrate that the divergence test $\mc^{D}_n(r)$ achieves the same second-order term as the Hoeffding test whenever the divergence $D$ is \emph{invariant} (see Definition~\ref{invariancediv}). The class of invariant divergences is large and includes the R\'enyi divergence and the f-divergences, which in turn include the KL divergence. In contrast, when the divergence $D$ is \emph{non-invariant}, the second-order term $\beta''_D$ of the divergence test differs from the second-order term $\beta_{D_{\textnormal{KL}}}''$  of the Hoeffding test.
\item In Section~\ref{sec:computations}, we numerically compare the second-order terms of the Hoeffding test $\mc^{D_{\textnormal{KL}}}_n(r)$ and of a divergence test $\mc^{D}_n(r)$ with a non-invariant divergence $D$. Our results show that the divergence test $\mc^{D}_n(r)$ may achieve a second-order term $\beta''_{D}$ that is strictly larger than the second-order term $\beta_{D_{\textnormal{KL}}}''$ of the Hoeffding test $\mc^{D_{\textnormal{KL}}}_n(r)$ for some alternative hypothesis $Q$. The set of distributions $Q$ for which the divergence test outperforms the Hoeffding test typically depends on the null hypothesis $P$ and on $\epsilon$. Potentially, this behavior could be exploited by a composite hypothesis test with partial knowledge of the alternative hypothesis $Q$ by tailoring the divergence $D$ of the divergence test $\mc^D_n(r)$ to the set of possible alternative hypotheses $\mathcal{Q}$.

\end{enumerate}

The rest of the paper is organized as follows. The problem setting and other definitions are introduced in Section~\ref{sec:model}. The main results are presented in Section~\ref{sec:results}. The second-order performances of various divergence tests are compared by means of numerical results in Section~\ref{sec:computations}. The proofs of our main results can be found in Section~\ref{Sec_Pf_Main}, with some of the details deferred to the appendices. The paper concludes with a summary of our results in Section~\ref{sec:conclusion}.

 \subsection{Notation}
We denote random variables by uppercase letters, such as $Z$, and their realizations by lowercase letters, such as $z$. Sequences of random variables and their realizations are denoted as $Z^n=(Z_1,\ldots,Z_n)$ and $z^n=(z_1,\ldots,z_n)$, respectively. Sets are denoted by calligraphic letters, such as $\mathcal{Z}$, and their complements are denoted with $(\cdot)^c$, such as $\mathcal{Z}^c$.

We next introduce  the order notation we shall use. Let $f(x)$ and $g(x)$ be  two real-valued functions. For \mbox{$a \in \mathbb{R} \cup \{\infty\}$}, we write  $f(x)=O(g(x))$ as $x \rightarrow a$ if $\limsup_{x \to a} \frac{|f(x)|}{|g(x)|} < \infty$.  Similarly, we  write  $f(x)=o(g(x))$ as $x \rightarrow a$ if $\lim_{x \to a} \frac{|f(x)|}{|g(x)|} = 0$. Finally, we write $f(x) = \Theta(g(x))$ as $x \to a$	 if $f(x)=O(g(x))$ and $\liminf_{x \to a} \frac{|f(x)|}{|g(x)|} > 0$.

Let $\{a_{n}\}$ be a sequence of positive real numbers. We say that a sequence of random variables $\{X_n\}$ is  \mbox{$X_{n}=o_{P}(a_{n})$} if, for every $\epsilon >0$, we have 
\begin{equation}
	\lim_{n \rightarrow \infty} P\left(   \left|  \frac{X_{n}}{a_{n}}\right| >\epsilon \right) =0.
\end{equation}
Similarly, we say that $X_{n}=O_{P}(a_{n})$ if, for every $\epsilon >0$, there exist constants $M_{\epsilon}>0$ and $N_{\epsilon} \in \mathbb{N}$ (possibly depending on $\epsilon$) such that 
\begin{equation}
	P\left(   \left|  \frac{X_{n}}{a_{n}}\right|  >M_{\epsilon} \right) \leq \epsilon,\quad  n \geq N_{\epsilon}.
\end{equation}

Convergence in distribution is denoted by $\xrightarrow{d}$. We use the notation $\mathcal{N}(\bm{\mu}, \bm{\Sigma})$ to either denote a Gaussian random vector with mean $\bm{\mu}$ and covariance matrix $\bm{\Sigma}$ or to denote its distribution.

\section{ Divergence and Divergence Test} 
\label{sec:model}
\subsection{Problem Setting} 
Let us consider a  random variable $Z$ that takes value in a discrete set  $\mathcal{Z}=\{ a_{1}, \ldots, a_{k}\}$, where $ k \geq 2$ denotes the cardinality of $\mathcal{Z}$. We denote the probability distribution of $Z$ by a $k$-length vector $\bp=(\bp_{1},\ldots,\bp_{k})^{\mathsf{T}}$, where
\begin{equation*}
	\bp_{i} \triangleq \text{Pr}\{Z=a_{i}\}, \quad i=1,\ldots,k.
\end{equation*}
Let 
\begin{equation}
	\overline{\mathcal{P}}(\mathcal{Z}) \triangleq \left\lbrace \bp=(\bp_{1}, \ldots, \bp_{k})^{\mathsf{T}} \colon \bp_{i} \geq 0, \; i=1,\ldots,k  \mbox{ and }  \sum_{i=1}^{k} \bp_{i}=1 \right\rbrace 
\end{equation}
be the set of all such probability distributions.  Similarly, let
\begin{equation}
	\mathcal{P}(\mathcal{Z}) \triangleq \left\lbrace \bp=(\bp_{1}, \ldots, \bp_{k})^{\mathsf{T}} \colon\bp_{i} > 0, \; i=1,\ldots,k  \mbox{ and }  \sum_{i=1}^{k} \bp_{i}=1 \right\rbrace
\end{equation}
be the set of all positive probability distributions on $\mathcal{Z}$. We denote the set of probability distributions on the boundary of $\mathcal{P}(\mathcal{Z})$ by $ \partial  \mathcal{P}(\mathcal{Z})$, i.e., $ \partial  \mathcal{P}(\mathcal{Z}) = \overline{\mathcal{P}}(\mathcal{Z}) \setminus  \mathcal{P}(\mathcal{Z})$.

We consider a binary hypothesis testing problem with null hypothesis $H_0$ and alternative hypothesis $H_1$.
We assume that, under hypothesis $H_0$,  the sequence of observations $Z^{n}$  is i.i.d.\ according to $P \in  \mathcal{P}(\mathcal{Z})$; under hypothesis $H_1$, the sequence of observations $Z^n$ is i.i.d. according to $Q$, where $Q \in  \mathcal{P}(\mathcal{Z})\setminus \{ P\} $.  Any statistical test, denoted by  $\mc_{n}$, for testing $H_{0}$ against $H_{1}$ is defined by choosing a subset $	\mathcal{A}_{n} \subseteq \mathcal{Z}^{n}$, called the \textit{acceptance region}, such that if $z^{n} \in 	\mathcal{A}_{n}$, then the null hypothesis $P$ is accepted, else the alternative hypothesis $Q$ is accepted. For the test $\mc_{n}$, the type-I error,  denoted by $\alpha_{n}(\mc_{n})$, and the type-II error,  denoted by $\beta_{n}(\mc_{n})$, are defined as follows: 
\begin{IEEEeqnarray}{lCl}
	\alpha_{n}(\mc_{n}) & \triangleq & P^{n} \left(   \mathcal{A}_{n}^{c} \right) \label{alpha_mc}\\
	\beta_{n}(\mc_{n}) &  \triangleq & Q^{n} ( \mathcal{A}_{n}) \label{eq:beta_mc}
\end{IEEEeqnarray}
where $P^{n}(z^{n}) = \prod_{i=1}^{n} P(z_i) $ and $Q^{n}(z^{n}) = \prod_{i=1}^{n} Q(z_i) $. 

For any given test $\mc_{n}$, our goal is to analyze the asymptotic behavior of the type-II error when the type-I error satisfies   $\alpha_{n}(\mc_{n}) \leq \epsilon, \; 0<\epsilon<1$. Following the discussion  in Section~\ref{sec:intro},  we define the first-order term  $\beta'$ and the second-order term $\beta''$ for $\mc_{n}$ by the respective \eqref{eq_beta'_g} and \eqref{eq_beta"_g}, namely,
\begin{IEEEeqnarray}{lCl}
	\beta'&\triangleq & \lim\limits_{n \rightarrow \infty} -\frac{1}{n}  \ln \beta_{n}  \\
	\beta''&\triangleq &  \lim\limits_{n \rightarrow \infty} \frac{-\ln \beta_n -n \beta'} {\sqrt{n}}   \
\end{IEEEeqnarray}
if the limits exist. Thus,  the type-II error satisfies the following asymptotic expansion:
\begin{equation}
	-\ln \beta_{n} (\mc_{n})  = n\beta'  + \sqrt{n} \beta''  + o(\sqrt{n}), \quad \text{as} \quad n \rightarrow \infty. \label{eq:so_gent}
\end{equation}
It follows from \eqref{eq:NP} that, for the Neyman-Pearson test $\mc^{\textnormal{NP}}_n$ satisfying $\alpha_{n}(\mc^{\textnormal{NP}}_n) \leq \epsilon$, 
the first-order term  $\beta'_{\textnormal{NP}}$ and the second-order term $\beta''_{\textnormal{NP}}$  are given by
\begin{IEEEeqnarray}{lCl}
\beta'_{\textnormal{NP}}& =&   D_{\textnormal{KL}}(P\|Q)\\
	\beta''_{\textnormal{NP}}&= &  - \sqrt{V_{\textnormal{KL}}(P \| Q) }  \mathsf{Q}_{\mathcal{N}}^{-1}(\epsilon). \label{eq:secNP}
\end{IEEEeqnarray}

In this paper, we consider tests $\mc_{n}$ that do not depend on the alternative hypothesis $Q$. An important special case of such tests is the Hoeffding test \cite{H65}, which depends only on the empirical distribution $\tp_{Z^n}$ (type distribution) of the observations $Z
^n$ and the null hypothesis $P$. Specifically, the Hoeffding test, denoted by $\mc^{D_{\textnormal{KL}}}_n(r)$, accepts  $P$ if $D_{\textnormal{KL}}(\tp_{Z^{n}} \|P) < r$,  for some $r>0$, and otherwise accepts $Q$. Here, $D_{\textnormal{KL}} $ is the KL-divergence and \mbox{$\tp_{z^{n}} =(\tp_{z^{n}}(a_{1}), \ldots,  \tp_{z^{n}}(a_{k}))^{\mathsf{T}}$} is the type distribution defined as
\begin{equation}
	\tp_{z^{n}}(\zeta) \triangleq \frac{1}{n} \sum_{\ell=1}^{n} \mathbf{1}\{ z_{\ell}=\zeta\}, \quad \zeta\in\mathcal{Z}
\end{equation}
where $\mathbf{1}\{\cdot\}$ denotes the indicator function. For the Hoeffding test, it was shown that the first-order term $\beta'_{D_{\textnormal{KL}}}$ is given by $D_{\textnormal{KL}}(P\|Q)$ \cite{H65}. Thus, the Hoeffding test achieves the same first-order term as the Neyman-Pearson test without requiring knowledge of the alternative distribution $Q$. This motivates us to consider a generalization of the Hoeffding test, referred to as \textit{divergence test}, which is obtained by replacing the KL-divergence in the Hoeffding test by an arbitrary divergence.%We characterize the first- and second-order terms of the divergence test, thereby obtaining the second-order asymptotics of the Hoeffding test as a special case.

\subsection{Divergence}
Given any two probability distributions $T, R \in \mathcal{P}(\mathcal{Z})$, one can define a  non-negative  function $D(T \| R) $, called a {\em divergence}, which represents  a measure of discrepancy between them. A divergence  is not necessarily  symmetric in its arguments and also need not satisfy the triangle inequality.

Mathematically, a divergence is defined on smooth manifolds, referred to as statistical manifolds in the information geometry literature; see \cite{AB216} for more details. In particular, the set $ \mathcal{P}(\mathcal{Z})$ is a smooth manifold which possesses a global coordinate system, say  $\bm{\theta}=(\theta_{1},\ldots,\theta_{k-1})^{\mathsf{T}} \in \bd $, such that every probability vector $\bp \in \mathcal{P}(\mathcal{Z})$ is mapped to $\bm{\theta}_{\bp} = (\theta_{1,R},\ldots,\theta_{k-1,R})^{\mathsf{T}}\in  \bd $, where $ \bd$ is an open subset of $\mathbb{R}^{k-1}$.  In this paper, we choose the coordinate of a probability distribution $\bp\in\mathcal{P}(\mathcal{Z})$ as its first $(k-1)$ components.  That is,  the coordinates of $\bp$ are given by   $\bm{\theta}_{\bp} =\mathbf{\bp} \triangleq (\bp_{1}, \ldots, \bp_{k-1})^{\mathsf{T}} $ and take value in the open subset
\begin{equation}
	\bd  \triangleq \left\lbrace \mathbf{\bp}=(\bp_{1}, \ldots, \bp_{k-1})^{\mathsf{T}} \colon \bp_{i} >0, \; i=1,\ldots,k-1  \mbox{ and }  \sum_{i=1}^{k-1} \bp_{i} <1 \right\rbrace. \label{eq:domaindef}
\end{equation}
Any function of $\bp$ can then be thought of as a function of the coordinate $\mathbf{\bp}$.

A formal definition of a divergence on $\mathcal{P}(\mathcal{Z})$ is as follows:
\begin{definition} \label{divdef} 
	Consider two distributions $T$ and $R$ in $ \mathcal{P}(\mathcal{Z})$ with coordinates $\mathbf{T}$ and $\mathbf{R}$. A {\em divergence}  \mbox{$D: \mathcal{P}(\mathcal{Z}) \times  \mathcal{P}(\mathcal{Z}) \rightarrow [0, \infty) $} between $T$ and $R$, denoted by  $ D(T \| R) $ or $D(\mathbf{T}\|\mathbf{R})$, is a smooth function of $\mathbf{T}$ and $\mathbf{R}$ satisfying the following conditions: 
	\begin{enumerate}
		\item $ D(T \| R) \geq 0$ for every $T, R \in \mathcal{P}(\mathcal{Z})$. 
		\item \label{div_cond2} $D(T \| R) =0$ if, and only if, $T=R$.
		\item   When $\mathbf{T}=\mathbf{R}+\bm{\varepsilon}$, the Taylor expansion of $D$ satisfies
		\begin{equation}
			D(\mathbf{R} +\bm{ \varepsilon} \| \mathbf{R}   ) =\frac{1}{2} \sum_{i,j =1}^{k-1} g_{ij}(\mathbf{R} )  \varepsilon_{i} \varepsilon_{j} +O(\| \bm{\varepsilon} \|_{2}^{3}), \quad \text{as } \| \bm{\varepsilon} \|_{2} \rightarrow 0 \label{eq:divdef1}
		\end{equation}
		for some  $(k-1) \times (k-1)$-dimensional positive-definite matrix $G(\mathbf{R})=[g_{ij}(\mathbf{R} )]$ that depends on $\mathbf{R}$ and $\bm{\varepsilon}=(\varepsilon_{1}, \ldots,\varepsilon_{k-1})^{\mathsf{T}}$. In \eqref{eq:divdef1}, $\| \bm{\varepsilon} \|_{2} \triangleq \sqrt{\sum_{i=1}^{k-1} (\varepsilon_{i})^{2}}$ is the Euclidean norm of $\bm{\varepsilon}$.
		
\item \label{div_cond4} Let $R \in \mathcal{P}(\mathcal{Z})$ and let $\{T_n\} $  be a sequence of distributions in $\mathcal{P}(\mathcal{Z})$ that converges to a distribution \mbox{$T'\in \partial  \mathcal{P}(\mathcal{Z})$}. Then, $D$ satisfies
		\begin{equation}
			\liminf_{n \to \infty}	D(T_n \| R  ) >0. \label{eq:divdefexcon}
		\end{equation}
	\end{enumerate}	
\end{definition}

\begin{remark}\label{rem:divdef} We follow the definition of divergence from the  information geometry literature. In particular, according to \cite[Def.~1.1]{AB216}, a divergence must satisfy the first three conditions in Definition~\ref{divdef}. Often, the behavior of a divergence on the boundary of $\mathcal{P}(\mathcal{Z})$ is not specified. In Definition~\ref{divdef}, we add the fourth condition to treat the case of sequences of distributions $\{T_n\}$ that lie in $\mathcal{P}(\mathcal{Z})$ but converge to a distribution in the boundary set $\partial \mathcal{P}(\mathcal{Z})$. Note that the fourth condition is consistent with the first two conditions.
\end{remark}
 
\comment{\begin{remark} 
 In this paper, for any probability distribution $ \bp  \in  \mathcal{P}(\mathcal{Z})$,  we choose the coordinate of $\bp$ as the first $(k-1)$ components of  $\bp$  since $\sum_{i=1}^{k}\bp_{i}=1$.  That is,  the coordinates of $\bp$ are given by   $\bm{\theta}_{\bp} =(\bp_{1}, \ldots, \bp_{k-1})^{\mathsf{T}} $. The coordinates take value in the open subset $\bd$  in $\mathbb{R}^{k-1}$ defined as	
\begin{equation}
	\bd  \triangleq \left\lbrace \mathbf{\bp}=(\bp_{1}, \ldots, \bp_{k-1})^{\mathsf{T}} \; :  \; \bp_{i} >0, \; i=1,\ldots,k-1  \mbox{ and }  \sum_{i=1}^{k-1} \bp_{i} <1 \right\rbrace. \label{eq:domaindef}
\end{equation}
Any function of $\bp$ can be thought of as a function of the coordinate $\mathbf{\bp}$. We shall therefore write sometimes   $D(\mathbf{T} \| \mathbf{R})$ instead of $D(T \| R)$. 
\end{remark}}

\comment{The partial derivatives of  $D(T \| R)$ with respect to the first variable $\mathbf{T}=(T_{1}, \ldots, T_{k-1})^{\mathsf{T}} $ and the second variable $\mathbf{R}=(R_{1}, \ldots, R_{k-1})^{\mathsf{T}} $ are  given by $\frac{\partial}{\partial T_{i}} D(T\| R) \triangleq \frac{\partial}{\partial T_{i}}D(\mathbf{T} \| \mathbf{R})$ and  $\frac{\partial}{\partial R_{i}} D(T \| R) \triangleq \frac{\partial}{\partial R_{i}} D(\mathbf{T} \| \mathbf{R})$, respectively. Since, by definition, every divergence  $D$ attains its minimum at $T =R$, it holds that
\begin{equation}
	\left. \frac{\partial}{\partial T_{i}} D(T \| R) \right|_{T=R} = \left. \frac{\partial}{\partial R_{i}} D(T \| R) \right|_{T =R}=0, \quad i=1,\ldots, k-1. \label{eq:divpart1}
\end{equation}
Furthermore, by differentiating the above equations, we obtain that
\begin{equation}
    
\end{equation}
	\left.	\frac{\partial^{2} D(T\| R)}{\partial T_{i} \partial  T_{j}}  \right|_{T =R}= 	\left. \frac{\partial^{2}}{\partial R_{i} \partial R_{j}}  D(T \| R) \right|_{T =R}=	\left.-\frac{\partial^{2}}{\partial T_{i} \partial  R_{j}}  D(T \| R) \right|_{T =R}, \quad i,j=1,\ldots, k-1.
\end{equation}
See~\cite[Eqs.~(3.1)--(3.4)]{E85} for more details.}

For a divergence  $D$ and $\br \in \mathcal{P}(\mathcal{Z})$, consider the function $T \mapsto D(T \| \br)$, and let  $\bm{A}_{D, \mathbf{\br}} $ have components 
\begin{equation}
	\bm{A}_{D,\mathbf{\br}}(i,j)=  \frac{1}{2} 	\left.	\frac{\partial^{2} D(T\| \br) }{\partial T_{i} \partial  T_{j}} \right|_{T=\br}, \quad  i,j=1,\ldots, k-1.  \label{eq:matrixa}
\end{equation}
We shall refer to $\bm{A}_{D, \mathbf{\br}} $ as the matrix associated with the divergence $D$ at $\mathbf{\br}$. It follows from Definition~\ref{divdef} that $\bm{A}_{D, \mathbf{\br}}$ is a symmetric and  positive-definite matrix. It further follows from  \eqref{eq:divdef1} that, for $T \in \mathcal{P}(\mathcal{Z})$, we have
\begin{equation}
	D(T \| \br) 
	= (\mathbf{T}-\mathbf{\br})^{T} \bm{A}_{D,\mathbf{\br}} ( \mathbf{T}-\mathbf{\br})+O(\|\mathbf{T}-\mathbf{\br}\|_{2}^{3}),\quad \text{as} \quad \| \mathbf{T}-\mathbf{\br} \|_{2} \rightarrow 0. \label{eq:tayldivA}
\end{equation}
Based on $\bm{A}_{D, \mathbf{\br}}$, we can distinguish between \emph{invariant} and \emph{non-invariant divergences}.

\begin{definition}\label{invariancediv}
let $D$ be a divergence and $\br  \in \mathcal{P}(\mathcal{Z})$. Then, $D$ is said to be an \emph{invariant divergence} on $\mathcal{P}(\mathcal{Z})$ if the matrix associated with the divergence $D$ at $\mathbf{\br}$  is of the form $\bm{A}_{D, \mathbf{\br}}=\eta \bm{\Sigma}_{\mathbf{\br}}$ for a constant $\eta >0$ \comment{(possibly depending on $\mathbf{\br}$)} and a matrix   $\bm{\Sigma}_{\mathbf{\br}}$ with components
\begin{equation}
	\bm{\Sigma}_{\mathbf{\br}}(i,j)
	 =
	\begin{cases}
	\frac{1}{\br_{i}} +	\frac{1}{\br_{k}},  \quad &  \text{for $i=j$} \\
		\frac{1}{\br_{k}}, \quad & \text{for $i\neq j$.}
	\end{cases}
	  \label{eq:covsigmain1}
\end{equation}
A divergence is said to be a \emph{non-invariant divergence} if it is not invariant.
\end{definition}
\begin{remark} The notion of an invariant divergence is adapted from the notion of invariance of geometric structures in information geometry; see \cite{AB216}, \cite{CL86} for more details. The matrix $\bm{\Sigma}_{\mathbf{\br}}$ represents the unique invariant Riemannian metric, known as the Fisher information metric, in $\mathcal{P}(\mathcal{Z})$ with respect to the coordinate system $\bd$ defined in \eqref{eq:domaindef}; see \cite[Eq. (47)]{AC210}. More precisely, any divergence on a finite-dimensional manifold induces a Riemannian metric. If the induced Riemannian metric is the Fisher information metric, then such divergence is referred to as an invariant divergence in the information geometry literature. In contrast, if the induced Riemannian metric is not a constant multiple of the Fisher information metric, then the divergence is referred to as a non-invariant divergence; see \cite{CC82} for more details. \comment{Note that, in the information geometry literature, the constant $\eta$ must be independent of $\mathbf{R}$.} 
	\end{remark}

For an invariant divergence, \eqref{eq:tayldivA} becomes
\begin{equation}
	D(T \| \br)
	= \eta (\mathbf{T}-\mathbf{\br})^{T} \bm{\Sigma}_{\mathbf{\br}}( \mathbf{T}-\mathbf{\br}) + O(\| \mathbf{T}-\mathbf{\br} \|_{2}^{3}),\quad \text{as} \quad \| \mathbf{T}-\mathbf{\br}\|_{2} \rightarrow 0. \label{eq:tylpfd}
\end{equation}

\subsection{Examples of Divergences}
There are several classes of divergences that are widely used in various fields of science and engineering; see, e.g., \cite[Ch.~2]{CZPA209} for more details.  We next discuss some well-known invariant and non-invariant divergences. 
\subsubsection{Invariant divergences} An important class of invariant divergences are the \emph{$f$-divergences} defined as follows.  
Let $f:(0,\infty) \rightarrow \mathbb{R}$ be a convex function with $f(1)=0$ and $f''(1)>0$. Then, the $f$-divergence between two distributions $T$ and $R$ in $\mathcal{P}(\mathcal{Z})$ is given by\footnote{In the information theory literature, the condition that $f''(1)>0$ is often omitted; see, e.g., \cite[Ch.~4]{CS204}. However, $f''(1)>0$ is required to satisfy the third condition of Definition~\ref{divdef} which, as mentioned in Remark~\ref{rem:divdef}, follows the definition of divergence from the information geometry literature. As a consequence, there are examples of $f$-divergences in the information theory literature, such as the total variation distance, that are not divergences according to Definition~\ref{divdef}.}
\begin{eqnarray}
	D_{f}(T \| R) \triangleq \sum_{i=1}^{k} R_{i} f\left( \frac{T_{i}}{R_{i}}\right).
	\label{eq:fdiv}
\end{eqnarray}
For  $f(u)=u \log u$, the $f$-divergence $D_{f}$ is the Kullback-Leibler divergence \eqref{eq:kl}. For $f(u)=(u-1)^{2}$, the \mbox{$f$-divergence} $D_{f}$ is the $\chi^{2}$-divergence
\begin{equation}
	d_{\chi^{2}} (T, R) \triangleq \sum_{i=1}^{k}  \frac{(T_{i} -R_{i})^{2}}{R_{i}}= (\mathbf{T}-\mathbf{\br})^{T} \bm{\Sigma}_{\mathbf{\br}}( \mathbf{T}-\mathbf{\br}). \label{eq:chisddef}
\end{equation}
For $f(u)= \frac{4}{1-\alpha^{2}} (u-u^{(1-\alpha)/2})$ and some $\alpha \neq \pm 1 $, the $f$-divergence $D_f$ is the \textit{$\alpha$-divergence} \cite[Eq.~(100)]{AC210}. The $f$-divergence $D_{f}$ satisfies \eqref{eq:tylpfd} with $\eta=\frac{f''(1)}{2}$ \cite[Th.~4.1]{CS204}.

Another class of invariant divergences, which is in general not an $f$-divergence,  is the \textit{R\'enyi divergence  of order $\alpha>0$}, given by
\begin{eqnarray}
	I_{\alpha}(T \| R) \triangleq  \frac{1}{\alpha-1} \left[  \ln \left(  \sum _{i=1}^{k} T_{i}^{\alpha} R_{i}^{1-\alpha} \right)   \right], \quad \alpha \neq 1. \label{eq:renyidiv}
\end{eqnarray}
By taking the Taylor-series expansion of $	I_{\alpha}(T \| \br) $ around $T=\br$, it can be shown that $I_{\alpha}$ satisfies \eqref{eq:tylpfd} with $\eta=\frac{\alpha}{2}$.

For other invariant divergences, we refer to \cite[Ch.~2]{CZPA209}.

\subsubsection{Non-invariant divergences} Certain Bregman divergences are non-invariant. Indeed, let $\phi$ be a smooth, real-valued, strictly-convex function on $\Xi$.\footnote{We shall say that a function is smooth if it has partial derivatives of all orders.} The Bregman divergence associated with $\phi$ is defined as
\begin{equation}
	D_{\phi}(T\| R)=\phi(\mathbf{T})-\phi(\mathbf{R})- (\mathbf{T}-\mathbf{R})^{\mathsf{T}}\nabla \phi(\mathbf{R}) \label{eq:Bregdef}
\end{equation}
where $\nabla \phi(\mathbf{R})$ denotes the gradient of $\phi$ evaluated at $\mathbf{R}$. When $\phi(\mathbf{x})=\mathbf{x}^{\mathsf{T}} \bm{W}_{\mathbf{\br}} \mathbf{x}$, $\mathbf{x}\in \bd$ for some $(k-1)\times (k-1)$-dimensional positive-definite matrix $\bm{W}_{\mathbf{\br}}$ which possibly depends on $\mathbf{\br}$, the Bregman divergence associated with $\phi$ specializes to the \emph{squared Mahalanobis distance}
\begin{equation}
	D_{\textnormal{SM}}(T\| \br)\triangleq (\mathbf{T}-\mathbf{\br})^{\mathsf{T}} \bm{W}_{\mathbf{\br}} ( \mathbf{T}-\mathbf{\br}). \label{eq:mahadiv}
\end{equation}
Note that the matrix  $\bm{A}_{D_{\textnormal{SM}}, \mathbf{\br}}$ associated with the squared Mahalanobis distance $D_{\textnormal{SM}}$ at $\mathbf{\br}$ is $\bm{W}_{\mathbf{\br}} $. It follows that the squared Mahalanobis distance is non-invariant if $\bm{W}_{\mathbf{\br}} $ is not a constant multiple of $\bm{\Sigma}_{\mathbf{\br}}$.

For a detailed list of divergences and their properties, we again refer to \cite[Ch.~2]{CZPA209}.

\subsection{Divergence Test}
For a divergence $D$ and a threshold $r$, a \emph{divergence test} $\mc_{n}^{D}(r)$ for testing $H_{0}$ against the alternative $H_{1}$ is defined as follows: 
\begin{center}
\begin{tabular}{ll} Observe  $Z^{n}$: &
if $D(\tp_{Z^n} \| P) < r$, then $H_0$ is accepted;\\
 & else $H_1$ is accepted.\end{tabular} \end{center} 
When the divergence $D$ is the Kullback-Leibler divergence $D_{\textnormal{KL}}$, the divergence test becomes the Hoeffding test. 

For a given $r >0$, the acceptance region for $P$ of the divergence test is defined as
	\begin{equation}
		\mathcal{A}^{D}_{n}(r) \triangleq	\left\lbrace  z^{n} \colon D(\tp_{z^{n}} \| P)  <  r \right\rbrace.  
	\end{equation}
The type-I and type-II errors of the divergence test follow then by evaluating \eqref{alpha_mc} and \eqref{eq:beta_mc} for $\mathcal{A}^{D}_{n}(r)$, i.e.,
	\begin{IEEEeqnarray}{lCl}
		\alpha_{n}(\mc_{n}^{D}(r)) & \triangleq & P^{n} \left(   \mathcal{A}^{D}_{n}(r) ^{c} \right) \\
		\beta_{n}(\mc_{n}^{D}(r)) &  \triangleq & Q^{n} ( \mathcal{A}^{D}_{n}(r) ).
	\end{IEEEeqnarray}
In this paper, we derive the first-order term  $\beta_{D}'$ and the second-order term $\beta_{D}''$ of the divergence test $\mc_{n}^{D}(r)$, thereby characterizing its second-order performance.

\section{Second-order Asymptotics of Divergence Tests}\label{sec:results}
This section presents the main results of this paper. In Subsection~\ref{sub:mainresults}, we characterize the first-order term $\beta'_D$ and the second-order term $\beta_D''$ of the divergence test $\mc_n^D(r)$. In Subsection~\ref{sub:compare}, we compare $\beta_D''$ with the second-order term $\beta_{\textnormal{NP}}''$ of the Neyman-Pearson test. \emph{Inter alia}, we observe that $\beta_D''$ scales unfavorably with the cardinality of $P$ and $Q$.

\subsection{The First-Order and Second-Order Terms}
\label{sub:mainresults}
The asymptotic behavior of divergence tests depends critically on the asymptotic behavior of the random variable $nD( \tp_{Z^{n}}\| P) $. 
For certain divergences, the limiting distribution of $nD( \tp_{Z^{n}}\| P)$ (as $n \rightarrow \infty$) has been analyzed in the literature. For example, when $D=D_{\textnormal{KL}}$, a well-known result by Wilks~\cite{W938} yields that   $ 2 n D_{\textnormal{KL}} (\tp_{Z^{n}} \| P)$ converges in distribution to a chi-square random variable with $k-1$ degrees of freedom. This result generalizes to $\alpha$-divergences \cite[Th.~3.1]{TR84}, \cite[Th.~3]{Y72}. 
In Lemma~\ref{convgdiv}, we show that, for a general divergence $D$,  $n D( \tp_{Z^{n}}\| P) $ converges in distribution to a {\em generalized chi-square random variable}, defined as follows:
\begin{definition} \label{genchisrv}  The generalized chi-square distribution is the distribution of the random variable 
	\begin{eqnarray}
		\xi = \sum_{i=1}^{m} w_{i} \Upsilon_{i} \label{eq:genchirv}
	\end{eqnarray}	
	where $w_{i}, i=1,\ldots,m$ are deterministic weight parameters and  $\Upsilon_{i}$, $i=1,\ldots,m$ are independent chi-square random variables with degrees of freedom $k_{i}$, $i=1,\ldots,m$.  We shall denote the generalized chi-square distribution by $ \chi^{2}_{\bm{w},\mathbf{k}}$
	with vector parameters  $\mathbf{w}=(w_{1},\ldots,w_{m})^{\mathsf{T}}$ and $\mathbf{k}=(k_{1},\ldots,k_{m})^{\mathsf{T}}$. As a special case, if $\Upsilon_{i}, i=1,\ldots,m$ are chi-square random variables with degrees of freedom $1$, then we denote the generalized chi-square random variable as $\chi^{2}_{\mathbf{w},m}$ with  weight vector $\mathbf{w}=(w_{1},\ldots,w_{m})^{\mathsf{T}}$ and degrees of freedom $m$. 
\end{definition}

\begin{lemma} \label{convgdiv} Let $Z^{n}$ be a sequence of i.i.d. random variables  distributed according to the null hypothesis $P$, and  let $D$ be a divergence. Further let  $\bm{\lambda}=(\lambda_{1}, \ldots, \lambda_{k-1})^{\mathsf{T}}$ be a vector that contains the eigenvalues of the matrix $ \bm{\Sigma}_{\mathbf{P}}^{-1/2}\bm{A}_{D,\mathbf{P}} \bm{\Sigma}_{\mathbf{P}}^{-1/2} $, where  $\bm{A}_{D,\mathbf{P}}$ is the matrix associated with the divergence $D$ at $\mathbf{P}$ defined in \eqref{eq:matrixa}, and the matrix  $\bm{\Sigma}_{\mathbf{P}}$ is defined in \eqref{eq:covsigmain1}. Then, the tail probability of the random variable $ nD (\tp_{Z^n} \| P)$ can be approximated as 
	\begin{eqnarray}
		P^{n}( 	 n	D (\tp_{Z^{n}} \| P) \geq c)=  \mathsf{Q}_{\chi^{2}_{\bm{\lambda},k-1}}(c) + O(\delta_{n}) \label{eq:betdiv}
	\end{eqnarray}
	for all $c > 0$ and some positive  sequence  $\{\delta_{n}\}$ that is independent of $c$ and  satisfies  $\lim_{n \rightarrow \infty} \delta_{n}=0$. Here, $\mathsf{Q}_{\chi^{2}_{\bm{\lambda},k-1}}(\cdot)$  denotes the tail probability of the generalized chi-square random variable $\chi^{2}_{\bm{\lambda},k-1}$.
\end{lemma}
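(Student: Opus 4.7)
The plan is to linearize $D$ around $P$ via the Taylor expansion \eqref{eq:tayldivA}, apply the multinomial CLT to the type $\tp_{Z^n}$, and identify the resulting limiting quadratic form as a generalized chi-square. Writing $\mathbf{T}_{Z^n}$ for the $(k-1)$-dimensional coordinate vector of $\tp_{Z^n}$, a direct calculation confirms that the single-sample covariance matrix in these coordinates is precisely $\bm{\Sigma}_{\mathbf{P}}^{-1}$, the inverse of \eqref{eq:covsigmain1}, and so the multivariate CLT yields
\begin{equation}
\sqrt{n}\,(\mathbf{T}_{Z^n}-\mathbf{P})\xrightarrow{d}\mathcal{N}(\mathbf{0},\bm{\Sigma}_{\mathbf{P}}^{-1}).
\end{equation}
In particular $\|\mathbf{T}_{Z^n}-\mathbf{P}\|_2=O_P(n^{-1/2})$. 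Since $P$ has strictly positive components, Sanov's theorem makes the boundary event $\{\tp_{Z^n}\in\partial\mathcal{P}(\mathcal{Z})\}$ negligible at an exponential rate, so I would discard it at the outset.

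On the complementary event $\{\tp_{Z^n}\in\mathcal{P}(\mathcal{Z})\}$, I would substitute $T=\tp_{Z^n}$, $\br=P$ in \eqref{eq:tayldivA}, multiply by $n$, and rewrite
\begin{equation}
n\,D(\tp_{Z^n}\|P)=\bigl[\sqrt{n}(\mathbf{T}_{Z^n}-\mathbf{P})\bigr]^{\mathsf{T}}\bm{A}_{D,\mathbf{P}}\bigl[\sqrt{n}(\mathbf{T}_{Z^n}-\mathbf{P})\bigr]+R_n,
\end{equation}
where $R_n=n\cdot O_P(n^{-3/2})=o_P(1)$. By the continuous mapping theorem applied to $\mathbf{y}\mapsto \mathbf{y}^{\mathsf{T}}\bm{A}_{D,\mathbf{P}}\mathbf{y}$ together with Slutsky's theorem, this converges in distribution to $Y^{\mathsf{T}}\bm{A}_{D,\mathbf{P}}Y$ with $Y\sim\mathcal{N}(\mathbf{0},\bm{\Sigma}_{\mathbf{P}}^{-1})$. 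To diagonalize this limit I would write $Y=\bm{\Sigma}_{\mathbf{P}}^{-1/2}X$ with $X\sim\mathcal{N}(\mathbf{0},\mathbf{I}_{k-1})$ and spectrally decompose $\bm{\Sigma}_{\mathbf{P}}^{-1/2}\bm{A}_{D,\mathbf{P}}\bm{\Sigma}_{\mathbf{P}}^{-1/2}=\bm{U}\,\mathrm{diag}(\bm{\lambda})\bm{U}^{\mathsf{T}}$ with $\bm{U}$ orthogonal; since $\bm{U}^{\mathsf{T}}X\sim\mathcal{N}(\mathbf{0},\mathbf{I}_{k-1})$, the limit equals $\sum_{i=1}^{k-1}\lambda_i \tilde X_i^2$ with $\tilde X_i$ i.i.d.\ standard normal, which is exactly $\chi^{2}_{\bm{\lambda},k-1}$ of Definition~\ref{genchisrv}.

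Because $\bm{A}_{D,\mathbf{P}}$ and $\bm{\Sigma}_{\mathbf{P}}$ are positive definite, every $\lambda_i>0$, so the limiting CDF is continuous on $\mathbb{R}$. Polya's theorem then promotes pointwise convergence of CDFs to uniform convergence and produces a single sequence $\delta_n\downarrow 0$, independent of $c$, for which \eqref{eq:betdiv} holds.

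\textbf{Main obstacle.} The delicate step is controlling $R_n$ \emph{uniformly} in $c$, so that both the cubic remainder and the exceptional boundary event can be absorbed into a single $c$-free $\delta_n$. The cleanest route is to truncate on the high-probability event $\{\|\mathbf{T}_{Z^n}-\mathbf{P}\|_2\le n^{-1/2}\ln n\}$, using multinomial concentration to bound the probability of its complement uniformly and bounding the cubic remainder deterministically on the good event. Condition~\ref{div_cond4} of Definition~\ref{divdef} plays an auxiliary role here: it guarantees that $D$ cannot vanish on the boundary, so that the rare boundary contribution is confined to large values of $c$ where the exponentially small Sanov estimate dominates. Once all three error sources are made $c$-free, Polya's theorem packages them into the claimed $\delta_n$.
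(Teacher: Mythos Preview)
Your proposal is correct and follows essentially the same route as the paper: multinomial CLT for $\sqrt{n}(\mathbf{T}_{Z^n}-\mathbf{P})$ with covariance $\bm{\Sigma}_{\mathbf{P}}^{-1}$, a stochastic second-order Taylor expansion giving the quadratic form plus an $o_P(1)$ remainder, continuous mapping and Slutsky for convergence in distribution, spectral diagonalization of $\bm{\Sigma}_{\mathbf{P}}^{-1/2}\bm{A}_{D,\mathbf{P}}\bm{\Sigma}_{\mathbf{P}}^{-1/2}$ to identify the generalized chi-square limit, and finally Polya's theorem for uniformity in $c$. Your ``Main obstacle'' is not actually an obstacle: once Slutsky delivers $nD(\tp_{Z^n}\|P)\xrightarrow{d}\chi^2_{\bm{\lambda},k-1}$ and you have noted that the limiting CDF is continuous, Polya's theorem immediately produces a $c$-free $\delta_n$ with no truncation or separate Sanov-type boundary analysis required, and the paper proceeds exactly this way.
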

\begin{proof} See Appendix~\ref{proofconvgdiv}.
\end{proof}

We are now ready to present the main result of this paper, the second-order asymptotic behavior of the type-II error when the type-I error is bounded by $\epsilon$:
\begin{theorem} \label{divergence}
	Let $D$ be a divergence as defined in Definition~\ref{divdef}, and  let $0< \epsilon < 1$.  Consider a divergence test $\mc_{n}^{D}(r)$  for testing $H_{0}\colon Z^{n} \sim P^{n}$ against the alternative $H_{1}\colon Z^{n} \sim Q^{n}$, where $P, Q \in \mathcal{P}(\mathcal{Z})$ and  $P \neq Q$. Recall that the cardinality of $\mathcal{Z}$ is $k\geq 2$. Then, we have the following results:  
	\begin{enumerate}
		\item There exists a threshold value $r_{n}$ (which depends on $n$) satisfying
		\begin{equation}
			\alpha_{n}(\mc_{n}^{D}(r_{n})) \leq \epsilon \label{eq:typeepsilon}
		\end{equation}
		such that, as $n\rightarrow \infty$,
		\begin{equation}
			-\ln \beta_{n}(\mc_{n}^{D}(r_{n}))   \geq   nD_{\textnormal{KL}}(P \| Q) - \sqrt{n} \sqrt{\mathbf{c}^{\mathsf{T}} \bm{A}_{D,\mathbf{P}} ^{-1} \mathbf{c} }\sqrt{\mathsf{Q}^{-1}_{\chi^{2}_{\bm{\lambda},k-1}}(\epsilon) } +O( \max\{ \delta_{n} \sqrt{n}, \ln n \}). \hspace{3mm} \label{eq:achiev}
		\end{equation}
		\item For all $r_{n} >0$ satisfying \eqref{eq:typeepsilon}, we have as $n\rightarrow \infty$
		\begin{equation}
			-\ln \beta_{n}(\mc_{n}^{D}(r_{n}))   \leq  nD_{\textnormal{KL}}(P \| Q) -   \sqrt{n} \sqrt{\mathbf{c}^{\mathsf{T}} \bm{A}_{D,\mathbf{P}} ^{-1} \mathbf{c} }\sqrt{\mathsf{Q}^{-1}_{\chi^{2}_{\bm{\lambda},k-1}}(\epsilon) } +O( \max\{ \delta_{n} \sqrt{n}, \ln n \}). \hspace{3mm} \label{eq:converse}
		\end{equation}
	\end{enumerate}
Here, $\bm{A}_{D,\mathbf{P}} $ is the matrix associated with the divergence $D$ at $\mathbf{P}$ defined in \eqref{eq:matrixa}; the sequence $\{\delta_{n}\}$ was defined in \eqref{eq:betdiv}; $ \mathbf{c}=(c_{1},\ldots,c_{k-1})^{\mathsf{T}}$ is a vector with components
	\begin{equation}
		c_{i} \triangleq\ln    \left( \frac{P_{i}}{ Q_{i}} \right)-\ln    \left( \frac{P_{k}}{ Q_{k}} \right),\quad  i=1,\ldots, k-1; \label{eq:cidef}
	\end{equation}
and  $\mathsf{Q}^{-1}_{\chi^{2}_{\bm{\lambda},k-1}}$ is the inverse of the tail probability of the generalized chi-square distribution $\chi^{2}_{\bm{\lambda},k-1}$ with vector parameter
$\bm{\lambda}=(\lambda_{1}, \ldots, \lambda_{k-1})^{\mathsf{T}}$ containing the eigenvalues of the matrix $ \bm{\Sigma}_{\mathbf{P}}^{-1/2}\bm{A}_{D,\mathbf{P}} \bm{\Sigma}_{\mathbf{P}}^{-1/2}$, and with degrees of freedom $k-1$.
\end{theorem}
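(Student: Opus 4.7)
The plan is to combine Lemma~\ref{convgdiv} (which pins down the admissible threshold) with a refined Sanov-type analysis of the $Q^n$-probability of the acceptance region $\mathcal{A}^D_n(r_n)$. By Lemma~\ref{convgdiv}, any $r_n$ with $\alpha_n(\mc_n^D(r_n)) \le \epsilon$ must satisfy $nr_n \ge \mathsf{Q}^{-1}_{\chi^2_{\bm{\lambda},k-1}}(\epsilon) - O(\delta_n)$, while the choice $nr_n = \mathsf{Q}^{-1}_{\chi^2_{\bm{\lambda},k-1}}(\epsilon) + O(\delta_n)$ already meets the type-I constraint. Since $\beta_n(\mc_n^D(r_n))$ is monotone non-decreasing in $r_n$, the converse bound \eqref{eq:converse} only needs to be proved for the smallest valid $r_n$; once established there, it automatically extends to every $r_n$ satisfying \eqref{eq:typeepsilon}. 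Both parts of the theorem thus reduce to evaluating $-\ln \beta_n$ when $nr_n = \mathsf{Q}^{-1}_{\chi^2_{\bm{\lambda},k-1}}(\epsilon) + O(\delta_n)$.

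The heart of the argument is the exponential rate
\begin{equation*}
I_n \triangleq \inf\bigl\{ D_{\textnormal{KL}}(T\|Q) : T \in \mathcal{P}(\mathcal{Z}),\; D(T\|P) < r_n \bigr\}.
\end{equation*}
Setting $\mathbf{v} = \mathbf{T} - \mathbf{P}$ and using \eqref{eq:tayldivA} for the constraint together with a Taylor expansion of $D_{\textnormal{KL}}(T\|Q)$ around $T=P$ (after eliminating $T_k = 1 - \sum_{i<k} T_i$, the gradient evaluates to the vector $\mathbf{c}$ of \eqref{eq:cidef} and the Hessian evaluates to $\bm{\Sigma}_{\mathbf{P}}$), the problem becomes
\begin{equation*}
\min_{\mathbf{v}} \Bigl\{ \mathbf{c}^{\mathsf{T}} \mathbf{v} + \tfrac{1}{2} \mathbf{v}^{\mathsf{T}} \bm{\Sigma}_{\mathbf{P}} \mathbf{v} + O(\|\mathbf{v}\|_2^3) : \mathbf{v}^{\mathsf{T}} \bm{A}_{D,\mathbf{P}} \mathbf{v} + O(\|\mathbf{v}\|_2^3) < r_n \Bigr\} + D_{\textnormal{KL}}(P\|Q).
\end{equation*}
Lagrange multipliers on the linear/quadratic relaxation give the unique minimizer $\mathbf{v}^\star = -\sqrt{r_n/\mathbf{c}^{\mathsf{T}} \bm{A}_{D,\mathbf{P}}^{-1}\mathbf{c}}\,\bm{A}_{D,\mathbf{P}}^{-1}\mathbf{c}$ with optimal value $-\sqrt{r_n\,\mathbf{c}^{\mathsf{T}}\bm{A}_{D,\mathbf{P}}^{-1}\mathbf{c}}$. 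Since $\|\mathbf{v}^\star\|_2 = \Theta(\sqrt{r_n}) = \Theta(n^{-1/2})$, the residual quadratic and cubic terms perturb $I_n$ by at most $O(r_n)=O(1/n)$, i.e.\ an $O(1)$ summand after multiplication by $n$.

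For achievability \eqref{eq:achiev}, the standard method-of-types upper bound $\beta_n \le (n+1)^k \exp(-n I_n)$, combined with the preceding estimate of $I_n$ and $\sqrt{nr_n} = \sqrt{\mathsf{Q}^{-1}_{\chi^2_{\bm{\lambda},k-1}}(\epsilon)} + O(\delta_n)$, immediately yields the right-hand side. For the converse \eqref{eq:converse}, I would construct an explicit $n$-type $\tilde T^\star \in \mathcal{A}^D_n(r_n)$ by first shrinking the continuous minimizer $P + \mathbf{v}^\star$ toward $P$ by a factor $1-\eta_n$ with $\eta_n = C/\sqrt{n}$ (for $C$ large enough to absorb the lattice error) and then rounding each coordinate to the nearest multiple of $1/n$; the shrinkage plus rounding shifts $\mathbf{c}^{\mathsf{T}}\mathbf{v}^\star$ by $O(\eta_n\sqrt{r_n} + \|\mathbf{c}\|_2/n) = O(1/n)$ and shifts $D(\tilde T^\star\|P)$ by $O(\eta_n r_n + \sqrt{r_n}/n) = o(r_n)$, so strict feasibility is preserved. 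The lower bound $\beta_n \ge (n+1)^{-k}\exp(-n D_{\textnormal{KL}}(\tilde T^\star\|Q))$ then gives \eqref{eq:converse}, and combining the $\sqrt{n}$ factor with the $O(\delta_n)$ slack in $\sqrt{nr_n}$ yields the $O(\sqrt{n}\delta_n)$ contribution; the $O(\ln n)$ from the prefactor $(n+1)^{\pm k}$ accounts for the remaining remainder $O(\max\{\delta_n\sqrt{n},\ln n\})$.

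The main technical obstacle is the bookkeeping of several small interacting scales---the $\delta_n$ from Lemma~\ref{convgdiv}, the $1/n$ type lattice, the shrinkage $\eta_n$, and the cubic Taylor remainders in both $D(\cdot\|P)$ and $D_{\textnormal{KL}}(\cdot\|Q)$. In particular, one must verify that (i) the quadratic minimizer $\mathbf{v}^\star$ lies safely in the interior of $\Xi$, so that the local expansion \eqref{eq:tayldivA} remains valid for every candidate $T$ considered in the analysis, and (ii) the infimum defining $I_n$ is not attained on $\partial \mathcal{P}(\mathcal{Z})$, where the Taylor expansion breaks down. Condition~\ref{div_cond4} of Definition~\ref{divdef}, which keeps $D(T\|P)$ bounded away from zero as $T$ approaches $\partial \mathcal{P}(\mathcal{Z})$, is essential for (ii): it guarantees that, for $n$ sufficiently large, any $T$ with $D(T\|P) < r_n$ lies in a fixed compact subset of $\mathcal{P}(\mathcal{Z})$, on which the local quadratic picture governing $I_n$ is uniformly valid.
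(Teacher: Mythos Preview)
Your proposal is correct and follows essentially the same route as the paper's proof in Section~\ref{Sec_Pf_Main}: Lemma~\ref{convgdiv} to calibrate the threshold, the method of types combined with a second-order Taylor expansion of $D_{\textnormal{KL}}(\cdot\|Q)$ around $P$ (the paper's Lemma~\ref{ctaylor}) and the quadratic approximation \eqref{eq:tayldivA} of the constraint (the paper's Lemmas~\ref{pinskercon} and~\ref{lemmaball}/\ref{lemmaballw}), followed by the linear-objective/quadratic-constraint minimization (the paper's Lemma~\ref{lemmaoptm}), and finally a shrink-then-round construction of a feasible type for the converse. The one minor difference is in that last step: the paper (Lemma~\ref{type}, Appendix~\ref{lemmatype}) rounds each coordinate via floor or ceiling depending on the sign of $\langle \bm{A}_{D,\mathbf{P}}\mathbf{x}^*,\bm{e}_i\rangle$, which forces the cross term $2\bm{\bar{\delta}}^{\mathsf{T}}\bm{A}_{D,\mathbf{P}}\mathbf{\bar{y}}$ to be non-positive and therefore permits a shrinkage factor $\bar{\alpha}=\Theta(1/n)$; your coordinate-blind rounding needs the larger $\eta_n=\Theta(1/\sqrt{n})$ to absorb that cross term. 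Either way the perturbation to $n\,\mathbf{c}^{\mathsf{T}}\mathbf{v}^\star$ is $O(1)$, so the resulting bounds coincide.
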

\begin{proof}
	See Section~\ref{Sec_Pf_Main}.
\end{proof}
\begin{remark}Since the sequence $\{\delta_n\}$ in 
 \eqref{eq:achiev}--\eqref{eq:converse} tends to zero as $n \to \infty$, we have that $O( \max\{ \delta_{n} \sqrt{n}, \ln n \}) = o(\sqrt{n})$. Consequently, \eqref{eq:achiev} and~\eqref{eq:converse} imply that
	\begin{equation}\label{eq:third_order}
	\sup_{r_n\colon  \alpha_{n}(\mc_{n}^{D}(r_{n}))\leq \epsilon} -\ln \beta_{n}\bigl(\mc_{n}^{D}(r_{n})\bigr) =  nD_{\textnormal{KL}}(P \| Q) - \sqrt{n} \sqrt{\mathbf{c}^{\mathsf{T}} \bm{A}_{D,\mathbf{P}} ^{-1} \mathbf{c} }\sqrt{\mathsf{Q}^{-1}_{\chi^{2}_{\bm{\lambda},k-1}}(\epsilon) } + o(\sqrt{n}). 
	\end{equation}
Thus, Theorem~\ref{divergence} characterizes the first and second-order terms of the divergence test $\mc^D_n(r)$ for every divergence $D$:
\begin{IEEEeqnarray}{lCl}
	\beta'_{D} &=&  D_{\textnormal{KL}}(P \| Q)  \\
	\beta''_{D} &=&-\sqrt{\mathbf{c}^{\mathsf{T}} \bm{A}_{D,\mathbf{P}} ^{-1} \mathbf{c} }\sqrt{\mathsf{Q}^{-1}_{\chi^{2}_{\bm{\lambda},k-1}}(\epsilon) }. \label{eq:soterm}
\end{IEEEeqnarray}
\end{remark}

For the class of invariant divergences, Theorem~\ref{divergence} specializes to the following corollary. 
\begin{corollary} \label{inv_divergence}
Let  $D$ be an invariant divergence, and let $0< \epsilon < 1$. Consider the divergence test $\mc_{n}^{D}(r)$  for testing $H_{0}: Z^{n} \sim P^{n}$ against the alternative $H_{1}: Z^{n} \sim Q^{n}$, where $P, Q \in \mathcal{P}(\mathcal{Z})$ and  $P \neq Q$.  Then we have the following results:
	\begin{enumerate}
\item There exists a threshold value $r_{n}$ (which depends on $n$) satisfying \eqref{eq:typeepsilon} such that, as $n\rightarrow \infty$,
	\begin{equation}\label{eq:achieva}
		-\ln \beta_{n}(\mc_{n}^{D}(r_{n}))  \geq   nD_{\textnormal{KL}}(P \| Q)  -  \sqrt{nV_{\textnormal{KL}}(P \| Q) \mathsf{Q}^{-1}_{\chi^{2}_{k-1}}(\epsilon) } +O( \max\{ \delta_{n} \sqrt{n}, \ln n \}). 
	\end{equation}
\item For all $r_{n} >0$ satisfying \eqref{eq:typeepsilon}, we have as $n\rightarrow \infty$
	\begin{equation}\label{eq:conversea}
		-\ln \beta_{n}(\mc_{n}^{D}(r_{n}))  \leq  nD_{\textnormal{KL}}(P \| Q)  -  \sqrt{nV_{\textnormal{KL}}(P \| Q) \mathsf{Q}^{-1}_{\chi^{2}_{k-1}}(\epsilon) } +O( \max\{ \delta_{n} \sqrt{n}, \ln n \}). 
	\end{equation}
	\end{enumerate}
\end{corollary}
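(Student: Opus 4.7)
The plan is to specialize the two bounds in Theorem~\ref{divergence} by exploiting the structural simplification that invariance imposes on the matrix $\bm{A}_{D,\mathbf{P}}$ and on the eigenvalue vector $\bm{\lambda}$. By Definition~\ref{invariancediv}, the matrix associated with an invariant divergence $D$ at $\mathbf{P}$ satisfies $\bm{A}_{D,\mathbf{P}}=\eta\,\bm{\Sigma}_{\mathbf{P}}$ for some $\eta>0$. Consequently, $\bm{\Sigma}_{\mathbf{P}}^{-1/2}\bm{A}_{D,\mathbf{P}}\bm{\Sigma}_{\mathbf{P}}^{-1/2}=\eta\,\bm{I}_{k-1}$, so every eigenvalue $\lambda_i$ equals $\eta$. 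Since a generalized chi-square with equal weights $\eta$ and $k-1$ independent degrees of freedom is $\eta$ times a standard chi-square random variable with $k-1$ degrees of freedom, we immediately obtain $\mathsf{Q}^{-1}_{\chi^{2}_{\bm{\lambda},k-1}}(\epsilon)=\eta\,\mathsf{Q}^{-1}_{\chi^{2}_{k-1}}(\epsilon)$. Also, $\bm{A}_{D,\mathbf{P}}^{-1}=\eta^{-1}\bm{\Sigma}_{\mathbf{P}}^{-1}$, so the product of the two square roots appearing in Theorem~\ref{divergence} simplifies to $\sqrt{\mathbf{c}^{\mathsf{T}}\bm{\Sigma}_{\mathbf{P}}^{-1}\mathbf{c}\;\mathsf{Q}^{-1}_{\chi^{2}_{k-1}}(\epsilon)}$; the constant $\eta$ cancels cleanly.

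The only remaining identity to establish is
\begin{equation}
\mathbf{c}^{\mathsf{T}}\bm{\Sigma}_{\mathbf{P}}^{-1}\mathbf{c}=V_{\textnormal{KL}}(P\|Q). \label{eq:key_identity_plan}
\end{equation}
To obtain a closed form for $\bm{\Sigma}_{\mathbf{P}}^{-1}$, I would write $\bm{\Sigma}_{\mathbf{P}}=\operatorname{diag}(1/P_1,\ldots,1/P_{k-1})+\tfrac{1}{P_k}\mathbf{1}\mathbf{1}^{\mathsf{T}}$ and apply the Sherman--Morrison formula. A short computation yields $\bm{\Sigma}_{\mathbf{P}}^{-1}(i,j)=P_i\delta_{ij}-P_iP_j$, i.e., the familiar $(k-1)\times(k-1)$ multinomial-covariance matrix associated with the first $k-1$ coordinates of $P$.

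With this expression for $\bm{\Sigma}_{\mathbf{P}}^{-1}$, identity~\eqref{eq:key_identity_plan} becomes $\mathbf{c}^{\mathsf{T}}\bm{\Sigma}_{\mathbf{P}}^{-1}\mathbf{c}=\sum_{i=1}^{k-1}c_i^2 P_i-\bigl(\sum_{i=1}^{k-1}c_i P_i\bigr)^2$. Setting $\ell_i\triangleq\ln(P_i/Q_i)$ so that $c_i=\ell_i-\ell_k$ and using $\sum_{i=1}^k P_i\ell_i=D_{\textnormal{KL}}(P\|Q)$ and $\sum_{i=1}^k P_i\ell_i^2=V_{\textnormal{KL}}(P\|Q)+D_{\textnormal{KL}}(P\|Q)^2$, I would expand both terms, observe that the contribution of the missing index $k$ is compensated by the shift $-\ell_k$, and verify that the cross terms $(D_{\textnormal{KL}}(P\|Q)-\ell_k)^2$ cancel exactly, leaving $V_{\textnormal{KL}}(P\|Q)$. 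This algebraic cancellation is the one step that is not immediate from the definitions, and I would regard it as the main---though entirely elementary---obstacle in the proof.

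Plugging \eqref{eq:key_identity_plan} together with $\mathsf{Q}^{-1}_{\chi^{2}_{\bm{\lambda},k-1}}(\epsilon)=\eta\,\mathsf{Q}^{-1}_{\chi^{2}_{k-1}}(\epsilon)$ into the achievability bound \eqref{eq:achiev} and the converse bound \eqref{eq:converse} of Theorem~\ref{divergence} turns the second-order term into $-\sqrt{nV_{\textnormal{KL}}(P\|Q)\,\mathsf{Q}^{-1}_{\chi^{2}_{k-1}}(\epsilon)}$, with the same $O(\max\{\delta_n\sqrt{n},\ln n\})$ remainder. This establishes \eqref{eq:achieva} and \eqref{eq:conversea} and thus completes the proof of Corollary~\ref{inv_divergence}.
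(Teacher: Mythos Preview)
Your proof is correct and follows essentially the same approach as the paper: both specialize Theorem~\ref{divergence} by using $\bm{A}_{D,\mathbf{P}}=\eta\,\bm{\Sigma}_{\mathbf{P}}$ to conclude that $\bm{\lambda}=(\eta,\ldots,\eta)$, hence $\mathsf{Q}^{-1}_{\chi^{2}_{\bm{\lambda},k-1}}(\epsilon)=\eta\,\mathsf{Q}^{-1}_{\chi^{2}_{k-1}}(\epsilon)$, and that $\mathbf{c}^{\mathsf{T}}\bm{A}_{D,\mathbf{P}}^{-1}\mathbf{c}=\eta^{-1}V_{\textnormal{KL}}(P\|Q)$, so that the $\eta$'s cancel in the product of square roots. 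The paper simply asserts the identity $\mathbf{c}^{\mathsf{T}}\bm{\Sigma}_{\mathbf{P}}^{-1}\mathbf{c}=V_{\textnormal{KL}}(P\|Q)$ ``after some algebraic manipulations,'' whereas you spell out the computation via the explicit form $\bm{\Sigma}_{\mathbf{P}}^{-1}(i,j)=P_i\delta_{ij}-P_iP_j$ (which the paper also records in \eqref{eq:covsigma}); this is additional detail rather than a different route.
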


\begin{proof} When the divergence $D$ is invariant, we have that $\bm{A}_{D,\mathbf{P}}= \eta \bm{\Sigma}_{\mathbf{P}}$ for some $\eta>0$. After some algebraic manipulations, it can be shown that
\begin{equation}
    \mathbf{c}^{\mathsf{T}} \bm{A}_{D,\mathbf{P}} ^{-1} \mathbf{c} =\frac{1}{\eta} \mathbf{c}^{\mathsf{T}} \bm{\Sigma}_{\mathbf{P}}^{-1} \mathbf{c}=\frac{1}{\eta} V_{\textnormal{KL}}(P \|Q).
\end{equation}
Furthermore, in this case, $\bm{\lambda}=(\eta,\ldots,\eta)$, and the tail probability of the generalized chi-square random variable  satisfies
\begin{equation}
    \mathsf{Q}_{\chi^{2}_{\bm{\lambda},k-1}} (c)= \mathsf{Q}_{\chi^{2}_{k-1}} (c/\eta), \quad c > 0.
\end{equation}
Consequently,  $\mathsf{Q}^{-1}_{\chi^{2}_{\bm{\lambda},k-1}} (\cdot)=\eta \mathsf{Q}^{-1}_{\chi^{2}_{k-1}}(\cdot)$, from which we obtain that
	\begin{equation}
		\sqrt{\mathbf{c}^{\mathsf{T}} \bm{A}_{D,\mathbf{P}} ^{-1} \mathbf{c} }\sqrt{\mathsf{Q}^{-1}_{\chi^{2}_{\bm{\lambda},k-1}}(\epsilon) } =\sqrt{V_{\textnormal{KL}}(P \| Q) \mathsf{Q}^{-1}_{\chi^{2}_{k-1}}(\epsilon) }. \label{eq:gentoinv}
	\end{equation}
The corollary follows then directly from Theorem~\ref{divergence}.
\end{proof}
Note that Corollary~\ref{inv_divergence} continues to hold if the constant $\eta$ in the definition of an invariant divergence (Definition~\ref{invariancediv}) depends on $P$, i.e., if $\bm{A}_{D,\mathbf{P}}= \eta_P \bm{\Sigma}_{\mathbf{P}}$ for some constant $\eta_P$ which depends on $P$.
\begin{remark} From \eqref{eq:soterm} and \eqref{eq:gentoinv}, we obtain that the second-order term of the divergence test $\mc^D_n(r)$ for an invariant divergence $D$ is given by
\begin{equation}
	\beta''_{D} = -\sqrt{V_{\textnormal{KL}}(P \| Q)}\sqrt{\mathsf{Q}^{-1}_{\chi^{2}_{k-1}}(\epsilon) }. \label{eq:sotermin}
\end{equation}
\end{remark}

For certain divergences,  we can obtain more precise asymptotics by characterizing the sequence $\{\delta_{n}\}$ in \eqref{eq:betdiv} more precisely.  For example, the following result implies that, for both the $\alpha$-divergence and the KL divergence, \eqref{eq:betdiv} holds with  $\delta_{n}=\frac{1}{\sqrt{n}}$.  Indeed, it follows from~\cite[Th.~3.1]{TR84} and \cite[Th.~3]{Y72} that the power divergence statistic 	
\begin{IEEEeqnarray}{lCl}
	T_{\bar{\lambda}} (X)  &\triangleq & \frac{2}{\bar{\lambda} (\bar{\lambda}+1)}\sum_{i=1}^{k} X_{i} \left[  \left(  \frac{X_{i}}{nP_{i}}\right) ^{\bar{\lambda}}-1 \right], \quad  \bar{\lambda} \in \mathbb{R} \setminus \{0,-1\} \\
	T_{0} (X)  &\triangleq &	\lim_{\bar{\lambda} \rightarrow 0}T_{\bar{\lambda}} (X) \\
	T_{-1} (X)  &\triangleq &	\lim_{\bar{\lambda} \rightarrow -1}T_{\bar{\lambda}} (X) 
\end{IEEEeqnarray}
satisfies
\begin{eqnarray}
	P^{n} (T_{\bar{\lambda}}(X) \geq c)=  \mathsf{Q}_{\chi^{2}_{k-1}}(c) + O(n^{-1/2}),\quad c>0, \; \bar{\lambda} \in \mathbb{R}. \label{eq:ratealphadiv1}
\end{eqnarray}
See  \cite{UZ09} for more details. Letting  $\bar{\lambda}=\frac{-(1+\alpha)}{2}, \; \alpha \neq \pm 1$, we can write 	$T_{\bar{\lambda}} (X) $ as
\begin{eqnarray*}
	T_{\bar{\lambda}} (X)  =
	2n D_{\alpha}(\tp_{Z^{n}} \| P), \quad \alpha \neq \pm 1.
\end{eqnarray*}
Similarly, for $\bar{\lambda}=0$, $T_{0} (X) $ can be expressed as $ T_{0} (X)  =
2n D_{\textnormal{KL}}(\tp_{Z^{n}} \| P)$. It then follows from \eqref{eq:ratealphadiv1} that $D_{\alpha}$, $\alpha \neq \pm 1$ and $D_{\textnormal{KL}}$  satisfy \eqref{eq:betdiv} with $\delta_{n}=\frac{1}{\sqrt{n}}$.  Together with \eqref{eq:achieva} and \eqref{eq:conversea}, this implies that the type-II error of the divergence test $\mc_{n}^{D}(r)$ with $D=D_{\alpha}$, $\alpha \neq \pm 1$ or $D=D_{\textnormal{KL}}$ satisfies 
		\begin{equation}\label{eq:achievainb}
			\sup_{r_n\colon  \alpha_{n}(\mc_{n}^{D}(r_{n}))\leq \epsilon} -\ln \beta_{n}\bigl(\mc_{n}^{D}(r_{n})\bigr) =  nD_{\textnormal{KL}}(P \| Q)  -  \sqrt{nV_{\textnormal{KL}}(P \| Q) \mathsf{Q}^{-1}_{\chi^{2}_{k-1}}(\epsilon) } +O(\ln n).
		\end{equation}
Recall that the divergence test $\mc_{n}^{D}(r)$ with $D=D_{\textnormal{KL}}$ is the Hoeffding test. Hence, \eqref{eq:achievainb} characterizes the second-order asymptotic behavior of the Hoeffding test as a special case. 

Next, consider the squared Mahalanobis distance $D_{\textnormal{SM}}$ defined in \eqref{eq:mahadiv}. We have that 
\begin{eqnarray}
	nD_{\textnormal{SM}}(\tp_{Z^{n}}\| P) &=&
n	\left( \mathbf{\tp}_{Z^{n}}-\mathbf{P}\right) ^{\mathsf{T}} \bm{W}_{\mathbf{P}} \left( \mathbf{\tp}_{Z^{n}}-\mathbf{P} \right) \notag \\
	& =& (\mathbf{V}_{n}^{k-1})^{\mathsf{T}}   \bm{W}_{\mathbf{P}} \mathbf{V}_{n}^{k-1} \label{eq:bregstat}
\end{eqnarray}
where   $\mathbf{\tp}_{Z^{n}}=(\tp_{Z^{n}}(a_{1}), \ldots, \tp_{Z^{n}}(a_{k-1}) )^{\mathsf{T}}$ and $	\mathbf{V}^{k-1}_{n} =\sqrt{n}\left(  \mathbf{\tp}_{Z^{n}}-\mathbf{P} \right) $. Since  $\mathbf{x}^{\mathsf{T}} \bm{W}_{\mathbf{P}} \mathbf{x}$ is a strictly convex function of  $\mathbf{x}$, the set $\mathcal{C}= \left\lbrace \mathbf{x}\in\mathbb{R}^{k-1}\colon \mathbf{x}^{\mathsf{T}} \bm{W}_{P} \mathbf{x} \leq c  \right\rbrace $ is convex.  Let  $\mathbf{V}^{k-1} \sim \mathcal{N}(\bm{0}, \bm{\Sigma}^{-1}_{\mathbf{P}})$. Using that \mbox{$\text{Pr}(\mathbf{V}^{k-1} \in \mathcal{C})=1-\mathsf{Q}_{\chi^{2}_{\bm{\lambda},k-1}}(c)$}, we then obtain from \eqref{eq:bregstat} and the Vector Berry-Esseen Theorem \cite[Cor.~1.9]{T214} that
\begin{equation}
	\left| P^{n} \left( n D_{\textnormal{SM}}(\tp_{Z^{n}} \| P) \geq c \right) - \mathsf{Q}_{\chi^{2}_{\bm{\lambda},k-1}}(c) \right|  \leq \frac{ M_{0}}{\sqrt{n}} \label{eq:bebreg}
\end{equation} 
for some constant $M_{0}$ depending on $P$ and the dimension $k$. It follows that, for the squared Mahalanobis distance, \eqref{eq:betdiv} holds again with $\delta_{n}=\frac{1}{\sqrt{n}}$, which together with \eqref{eq:achiev} and \eqref{eq:converse} implies that
	\begin{equation}
			\sup_{r_n\colon  \alpha_{n}(\mc_{n}^{D_{\textnormal{SM}}}(r_{n}))\leq \epsilon} -\ln \beta_{n}\bigl(\mc_{n}^{D_{\textnormal{SM}}}(r_{n})\bigr) = nD_{\textnormal{KL}}(P \| Q) - \sqrt{n} \sqrt{\mathbf{c}^{\mathsf{T}} \bm{A}_{D,\mathbf{P}} ^{-1} \mathbf{c} }\sqrt{\mathsf{Q}^{-1}_{\chi^{2}_{\bm{\lambda},k-1}}(\epsilon) } +O(\ln n). \label{eq:achievsmd}
		\end{equation}

 \subsection{Comparison with the Neyman-Pearson Test}
 \label{sub:compare}
Clearly, the second-order term $\beta_{D}''$  of the divergence test cannot be larger than the second-order term  $ \beta_{\textnormal{NP}}''$ of the Neyman-Pearson test, since the Neyman-Pearson test achieves the minimum type-II error among all hypothesis tests that have access to both $P$ and $Q$. The following proposition shows that $\beta_{D}'' $ is strictly smaller than $ \beta_{\textnormal{NP}}''$. 
\begin{proposition}\label{lem:chi-normal}
	For every $0 <\epsilon<1$ and $\mathbf{c}$ defined in \eqref{eq:cidef}, we have 
	\begin{equation}
		\sqrt{\mathbf{c}^{\mathsf{T}} \bm{A}_{D,\mathbf{P}} ^{-1} \mathbf{c} }\sqrt{\mathsf{Q}^{-1}_{\chi^{2}_{\bm{\lambda},k-1}}(\epsilon) } >\sqrt{V_{\textnormal{KL}}(P \| Q)} \mathsf{Q}_{\mathcal{N}}^{-1}(\epsilon).\label{eq:qsotgnp}
	\end{equation}
\end{proposition}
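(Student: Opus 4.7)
My plan is to split the proposition into two cases according to the sign of $\mathsf{Q}_\mathcal{N}^{-1}(\epsilon)$. The substantive case will be handled by a one-line Cauchy--Schwarz argument in the eigenbasis of $M := \bm{\Sigma}_\mathbf{P}^{-1/2}\bm{A}_{D,\mathbf{P}}\bm{\Sigma}_\mathbf{P}^{-1/2}$, and the main difficulty I foresee is extracting a \emph{strict} inequality, since plain Neyman--Pearson optimality would give only $\geq$.

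If $\epsilon \geq 1/2$, then $\mathsf{Q}_\mathcal{N}^{-1}(\epsilon)\leq 0$, so the right-hand side of \eqref{eq:qsotgnp} is nonpositive while the left-hand side is strictly positive (since $\bm{A}_{D,\mathbf{P}}$ is positive definite, $\mathbf{c}\neq \mathbf{0}$ -- because $P\neq Q$ precludes all ratios $P_i/Q_i$ from coinciding -- and $\mathsf{Q}^{-1}_{\chi^2_{\bm{\lambda},k-1}}(\epsilon)>0$ for $\epsilon<1$). For the main case $\epsilon<1/2$, I would set $t := \mathsf{Q}_\mathcal{N}^{-1}(\epsilon)>0$, choose an orthogonal matrix $O$ with $OMO^\mathsf{T} = \operatorname{diag}(\lambda_1,\dots,\lambda_{k-1})$, and let $\mathbf{e} := O\bm{\Sigma}_\mathbf{P}^{-1/2}\mathbf{c}$. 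Diagonalization then gives
\begin{equation*}
\mathbf{c}^\mathsf{T}\bm{A}_{D,\mathbf{P}}^{-1}\mathbf{c} \,=\, \sum_{i=1}^{k-1}\frac{e_i^2}{\lambda_i}, \qquad V_{\textnormal{KL}}(P\|Q) \,=\, \mathbf{c}^\mathsf{T}\bm{\Sigma}_\mathbf{P}^{-1}\mathbf{c} \,=\, \sum_{i=1}^{k-1}e_i^2,
\end{equation*}
where the second identity already appeared in the proof of Corollary~\ref{inv_divergence}. Introducing i.i.d.\ $Z_1,\dots,Z_{k-1}\sim\mathcal{N}(0,1)$, together with $Y := \sum_i \lambda_i Z_i^2 \sim \chi^2_{\bm{\lambda},k-1}$ and $G := \sum_i e_i Z_i \sim \mathcal{N}(0, V_{\textnormal{KL}}(P\|Q))$, Cauchy--Schwarz applied to the pairing $e_i Z_i = (e_i/\sqrt{\lambda_i})(\sqrt{\lambda_i}\,Z_i)$ yields the almost-sure bound $G^2 \leq (\mathbf{c}^\mathsf{T}\bm{A}_{D,\mathbf{P}}^{-1}\mathbf{c})\,Y$.

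To finish, I would take $s := V_{\textnormal{KL}}(P\|Q)\,t^2 / (\mathbf{c}^\mathsf{T}\bm{A}_{D,\mathbf{P}}^{-1}\mathbf{c})$ and combine this event inclusion with the two-sided Gaussian tail to get
\begin{equation*}
\Pr(Y\geq s) \,\geq\, \Pr\!\bigl(G^2 \geq V_{\textnormal{KL}}(P\|Q)\,t^2\bigr) \,=\, 2\mathsf{Q}_\mathcal{N}(t) \,=\, 2\epsilon \,>\, \epsilon.
\end{equation*}
Since $Y$ has a continuous, strictly positive density on $(0,\infty)$, the tail $\mathsf{Q}_{\chi^2_{\bm{\lambda},k-1}}$ is strictly decreasing, so this strict inequality forces $s < \mathsf{Q}^{-1}_{\chi^2_{\bm{\lambda},k-1}}(\epsilon)$; rearranging and taking square roots produces \eqref{eq:qsotgnp}. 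As anticipated, the strictness is entirely supplied by the factor $2$ from the two-sided Gaussian tail; a one-sided bound on $G$ would only recover the weaker inequality already guaranteed by the optimality of the Neyman--Pearson test.
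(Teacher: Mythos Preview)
Your proof is correct. The approach differs from the paper's in an interesting way, so a brief comparison is worthwhile.

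The paper factors the inequality into two independent pieces. First, it shows that for \emph{every} eigenvalue $\lambda_j$ of $M=\bm{\Sigma}_{\mathbf P}^{-1/2}\bm{A}_{D,\mathbf P}\bm{\Sigma}_{\mathbf P}^{-1/2}$ one has $\mathsf{Q}_{\mathcal N}^{-1}(\epsilon)<\lambda_j^{-1/2}\sqrt{\mathsf{Q}^{-1}_{\chi^2_{\bm\lambda,k-1}}(\epsilon)}$, by exactly the same two-sided tail trick you use (namely $\Pr(Y_j^2\ge d^2)=2\mathsf{Q}_{\mathcal N}(d)$). Second, it invokes the Rayleigh--Ritz theorem to get $\mathbf c^{\mathsf T}\bm{A}_{D,\mathbf P}^{-1}\mathbf c/V_{\textnormal{KL}}(P\|Q)\ge 1/\lambda_{\max}$, and then combines the two bounds at $j=\arg\max_j\lambda_j$. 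Your argument is more direct: by working with the specific vector $\mathbf e=O\bm{\Sigma}_{\mathbf P}^{-1/2}\mathbf c$ and coupling the linear form $G=\sum_i e_i Z_i$ to the quadratic form $Y=\sum_i\lambda_i Z_i^2$ via Cauchy--Schwarz, you bypass the extremal-eigenvalue detour entirely and obtain the comparison of quantiles in one stroke. In effect, the paper's two-step proof is the special case of your argument in which $\mathbf e$ is replaced by the eigenvector $\bm e_{\max}$ of $M$ (where Cauchy--Schwarz is an equality), followed by a separate Rayleigh--Ritz step to reach the actual $\mathbf e$. Both proofs extract strictness from the same factor of~$2$, and neither needs the case split on $\epsilon\gtrless 1/2$ (the paper's quantile inequality holds for all $0<\epsilon<1$ because $\mathsf{Q}_{\mathcal N}^{-1}(\epsilon)$ is simply allowed to be negative), though your split makes the role of that factor more transparent.
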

\begin{IEEEproof}  
	Consider any eigenvalue $\lambda_{j}$  of the positive-definite matrix $\bm{B}= \bm{\Sigma}_{\mathbf{P}}^{-1/2}\bm{A}_{D,\mathbf{P}} \bm{\Sigma}_{\mathbf{P}}^{-1/2}$. Let $Y_{1},\ldots,Y_{k-1}$ be i.i.d. standard Normal random variables. Then, for \mbox{$y \geq 0$},
	\begin{equation}
		\mathsf{Q}_{\mathcal{N}}\left( \frac{y}{\sqrt{\lambda_{j}}}\right) 	<  \text{Pr} \left( Y_{j}^{2} \geq  \frac{y^{2}}{\lambda_{j}}\right)  \leq \text{Pr} \left( \sum_{i=1}^{k-1}  \lambda_{i}Y_{i}^{2} \geq  y^{2}\right) = \mathsf{Q}_{\chi^{2}_{\bm{\lambda},k-1}}(y^{2}) \label{eq:qcomp1}
	\end{equation}
	where the first inequality follows because
 \begin{equation}
    \text{Pr}( Y_{j}^{2} \geq  d^{2})=\text{Pr}(|Y_j| \geq d) = 2 \mathsf{Q}_{\mathcal{N}}(d),\quad d \geq 0
    \end{equation}
    and the second inequality follows because $\lambda_{j}Y_j^2 \leq \sum_{i=1}^{k-1} \lambda_{i} Y_i^2$ with probability one. Let $y$ be such that \mbox{$ \mathsf{Q}_{\chi^{2}_{\bm{\lambda},k-1}}(y^{2})=\epsilon$}. Since $y \mapsto\mathsf{Q}_{\mathcal{N}}(y)$ is strictly decreasing, we obtain from \eqref{eq:qcomp1} that there exists a \mbox{$y' < \frac{y}{\sqrt{\lambda_{j}}}$} such that \mbox{$\mathsf{Q}_{\mathcal{N}}(y')=\epsilon$}. It then follows that $\mathsf{Q}_{\mathcal{N}}^{-1}(\epsilon)  =y'$, which implies that
	\begin{equation}
		\mathsf{Q}_{\mathcal{N}}^{-1}(\epsilon)  < \frac{1}{\sqrt{\lambda_{j}}}\sqrt{\mathsf{Q}^{-1}_{\chi^{2}_{\bm{\lambda},k-1}}(\epsilon)}, \quad j = 1,\ldots,k-1. \label{eq:qcomp2}
	\end{equation}
	Next, we note that
	\begin{equation}
		\frac{\mathbf{c}^{\mathsf{T}} \bm{A}_{D,\mathbf{P}} ^{-1} \mathbf{c}}{V_{\textnormal{KL}}(P \| Q)} =\frac{\mathbf{c}^{\mathsf{T}} \bm{A}_{D,\mathbf{P}} ^{-1} \mathbf{c}}{	\mathbf{c}^{\mathsf{T}} \bm{\Sigma}^{-1}_{\mathbf{P}} \mathbf{c}}.
	\end{equation}
	Using the transformation $\tilde{\mathbf{c}}=\bm{\Sigma}_{\mathbf{P}}^{-1/2} \mathbf{c}$, this can be written as
	\begin{IEEEeqnarray}{lCl}
		\frac{\mathbf{c}^{\mathsf{T}} \bm{A}_{D,\mathbf{P}} ^{-1} \mathbf{c}}{V_{\textnormal{KL}}(P \| Q)} &= &\frac{\tilde{\mathbf{c}}^{\mathsf{T}}   \bm{\Sigma}_{\mathbf{P}}^{1/2}\bm{A}_{D,\mathbf{P}} ^{-1}   \bm{\Sigma}_{\mathbf{P}}^{1/2}\tilde{\mathbf{c}}}{	\tilde{\mathbf{c}}^{\mathsf{T}} \tilde{\mathbf{c}}} \notag\\
		&=& \frac{\tilde{\mathbf{c}}^{\mathsf{T}}  \bm{B}^{-1}\tilde{\mathbf{c}}}{\tilde{\mathbf{c}}^{\mathsf{T}} \tilde{\mathbf{c}}}.\label{eq:quadbound1} 
	\end{IEEEeqnarray}
	It then follows from the Rayleigh-Ritz theorem \cite[Th.~4.2.2]{RJ90} that
	\begin{eqnarray}
		\frac{\tilde{\mathbf{c}}^{\mathsf{T}}  \bm{B}^{-1}\tilde{\mathbf{c}}}{\tilde{\mathbf{c}}^{\mathsf{T}} \tilde{\mathbf{c}}}\geq \frac{1}{\lambda_{\textnormal{max}}}  \label{eq:quadbound3} 
	\end{eqnarray}
	where $\lambda_{\textnormal{max}}>0$ denotes the maximum eigenvalue of the positive-definite matrix $\bm{B}$. Together with \eqref{eq:quadbound1}, this yields
	\begin{eqnarray}
		 \frac{ \sqrt{\mathbf{c}^{\mathsf{T}} \bm{A}_{D,\mathbf{P}}^{-1} \mathbf{c}}}{\sqrt{V_{\textnormal{KL}}(P \| Q)}} \geq \frac{1}{\sqrt{\lambda_{\textnormal{max}}}}. \label{eq:quadbound4} 
	\end{eqnarray}
	Combining  \eqref{eq:qcomp2} and \eqref{eq:quadbound4}, we obtain the desired inequality
	\begin{IEEEeqnarray}{lCl}
	    \sqrt{\mathbf{c}^{\mathsf{T}} \bm{A}_{D,\mathbf{P}} ^{-1} \mathbf{c} }\sqrt{\mathsf{Q}^{-1}_{\chi^{2}_{\bm{\lambda},k-1}}(\epsilon) }&\geq &  \frac{\sqrt{V_{\textnormal{KL}}(P \| Q)} \sqrt{\mathsf{Q}^{-1}_{\chi^{2}_{\bm{\lambda},k-1}}(\epsilon)}}{\sqrt{\lambda_{\textnormal{max}}}} \nonumber \\
		& > & 	\mathsf{Q}_{\mathcal{N}}^{-1}(\epsilon) \sqrt{V_{\textnormal{KL}}(P \| Q)}. \label{eq:quadbound5} 
	\end{IEEEeqnarray} 
\end{IEEEproof}

For an invariant divergence $D$, it follows from Proposition~\ref{lem:chi-normal} and \eqref{eq:gentoinv} that 
\begin{equation}
\sqrt{\mathsf{Q}^{-1}_{\chi^{2}_{k-1}}(\epsilon) } > \mathsf{Q}_{\mathcal{N}}^{-1}(\epsilon), \quad k \geq 2,\, 0<\epsilon<1.\label{eq:q_fun_g}
\end{equation}
Furthermore, we have the following proposition. 
\begin{proposition} \label{chis_coro} For every $0 <\epsilon<1$ and  $k=2, 3, \ldots$, we have $ \mathsf{Q}^{-1}_{\chi^{2}_{k}}(\epsilon) >\mathsf{Q}^{-1}_{\chi^{2}_{k-1}}(\epsilon)$.
\end{proposition}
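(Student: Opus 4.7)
The plan is to reduce the claim about inverses to a stochastic-dominance statement for the tail probabilities $\mathsf{Q}_{\chi^{2}_{k}}$ themselves, and then invoke the additive property of chi-square random variables to couple the two distributions on a single probability space.

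First I would restate the target inequality $\mathsf{Q}^{-1}_{\chi^{2}_{k}}(\epsilon) > \mathsf{Q}^{-1}_{\chi^{2}_{k-1}}(\epsilon)$ in the equivalent form
\begin{equation}
\mathsf{Q}_{\chi^{2}_{k}}(c) > \mathsf{Q}_{\chi^{2}_{k-1}}(c), \quad \text{for every } c > 0.
\end{equation}
This equivalence follows from the fact that $c \mapsto \mathsf{Q}_{\chi^{2}_{m}}(c)$ is continuous and strictly decreasing on $(0,\infty)$ for every fixed $m \geq 1$ (the chi-square density is strictly positive on $(0,\infty)$), so each $\mathsf{Q}^{-1}_{\chi^{2}_{m}}(\epsilon)$ is well defined and the displayed strict tail inequality at $c = \mathsf{Q}^{-1}_{\chi^{2}_{k-1}}(\epsilon)$ immediately forces $\mathsf{Q}^{-1}_{\chi^{2}_{k}}(\epsilon) > \mathsf{Q}^{-1}_{\chi^{2}_{k-1}}(\epsilon)$.

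Next I would prove the tail inequality by a direct coupling. Let $X \sim \chi^{2}_{k-1}$ and let $Y \sim \chi^{2}_{1}$ be independent of $X$. By the standard additivity of independent chi-square random variables, $X+Y \sim \chi^{2}_{k}$, so
\begin{equation}
\mathsf{Q}_{\chi^{2}_{k}}(c) - \mathsf{Q}_{\chi^{2}_{k-1}}(c) = \Pr(X+Y \geq c) - \Pr(X \geq c) = \Pr(X < c,\, X + Y \geq c).
\end{equation}
To finish, I need to show that this last probability is strictly positive for every $c > 0$. Since the joint density of $(X,Y)$ is strictly positive on $(0,\infty)\times(0,\infty)$, the event $\{c/2 < X < c,\, Y > c/2\}$ has strictly positive probability and is contained in $\{X<c,\,X+Y\geq c\}$, which yields the strict inequality.

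I do not expect any serious obstacle: the main content of the argument is the additivity $\chi^{2}_{k-1} + \chi^{2}_{1} \stackrel{d}{=} \chi^{2}_{k}$, and the only subtlety is taking care to get the \emph{strict} inequality rather than just $\geq$; that is handled by exhibiting an explicit subset of the relevant event with positive Lebesgue (hence positive probability) measure, as above. The inversion step is then immediate from the strict monotonicity of the continuous tail function.
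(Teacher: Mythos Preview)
Your proof is correct and follows essentially the same route as the paper: both arguments couple $\chi^2_{k-1}$ and $\chi^2_1$ additively to get $\chi^2_k$, establish the strict tail inequality $\mathsf{Q}_{\chi^2_k}(c)>\mathsf{Q}_{\chi^2_{k-1}}(c)$ for all $c>0$, and then invert using monotonicity. The only cosmetic difference is that the paper extracts strictness via the inclusion $B\cup C\subseteq A$ and independence of $B^c$ and $C$, whereas you compute the difference directly as $\Pr(X<c,\,X+Y\geq c)$ and exhibit a positive-probability subset.
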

\begin{IEEEproof} Let $Y_{1},\ldots,Y_{k-1}$ be i.i.d. standard normal random variables. Then, for \mbox{$y >0$}, consider the events
	\begin{equation} 
		A= \left\lbrace  \sum_{i=1}^{k}  Y_{i}^{2} \geq  y\right\rbrace , \quad B= \left\lbrace  \sum_{i=1}^{k-1}  Y_{i}^{2} \geq  y\right\rbrace, \quad C= \left\lbrace  Y_{k}^{2} \geq  y \right\rbrace.
	\end{equation}
	Since $B \cup C \subseteq A$, we have
	\begin{IEEEeqnarray}{lCl}
		\mathrm{Pr}(A) & \geq & \mathrm{Pr}(B \cup C) \nonumber \\
		& =&  \mathrm{Pr}(B) +  \mathrm{Pr}( C \cap B^c)\nonumber\\
		&=&  \mathrm{Pr}(B) +  \mathrm{Pr}( C)  \mathrm{Pr}( B^{c}) \label{eq:ind2}
	\end{IEEEeqnarray}
	because the events $B^c$ and $C$ are independent. 
	Since, for any  $y>0$, we have that $\mathrm{Pr}( C) >0$ and $  \mathrm{Pr}( B^{c})>0$, this implies that
	$\mathrm{Pr}(A)>  \mathrm{Pr}(B)$, i.e., 
	\begin{equation} 
		\mathsf{Q}_{\chi^{2}_{k}}(y) >  \mathsf{Q}_{\chi^{2}_{k-1}}(y), \quad y >0 \label{eq:ind3}.
	\end{equation}
	Now, for $0 < \epsilon <1$, let $y_0$ be such that $\mathsf{Q}_{\chi^{2}_{k-1}}(y_{0})=\epsilon$.  Then, \eqref{eq:ind3} yields that $\mathsf{Q}_{\chi^{2}_{k}}(y_{0}) >  \epsilon$. Since $y \mapsto\mathsf{Q}_{\chi^{2}_{k}}(y)$ is strictly decreasing, there exists a \mbox{$y' >y_{0}$} such that $\mathsf{Q}_{\chi^{2}_{k}}(y')=\epsilon$, so $ \mathsf{Q}^{-1}_{\chi^{2}_{k}}(\epsilon) >\mathsf{Q}^{-1}_{\chi^{2}_{k-1}}(\epsilon)$.
	
\end{IEEEproof}

\begin{figure} 
	\centering
	\includegraphics[width=0.5\textwidth]{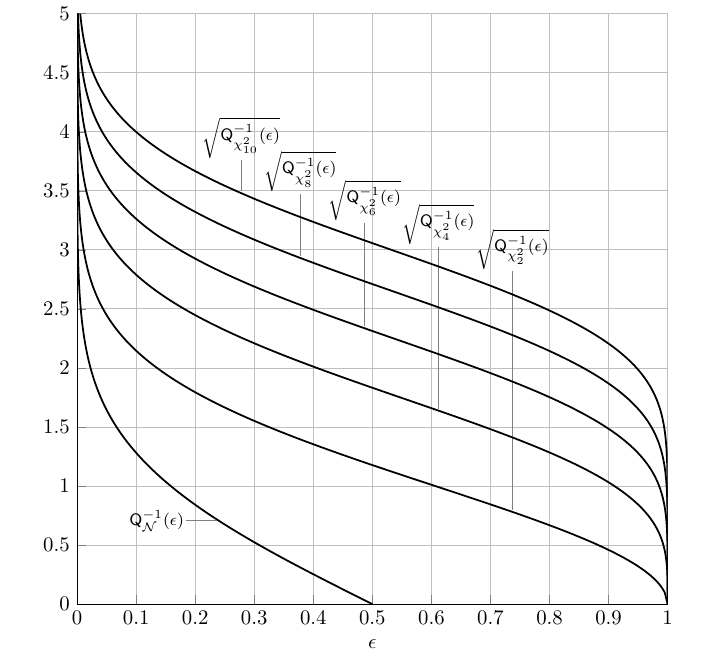}
	\caption{Inverse tail probability functions of the Normal and chi-square distributions as a function of $\epsilon$.}
	\label{quantile}
\end{figure}

Fig.~\ref{quantile} plots the inverse tail probabilities discussed in Propositions~\ref{lem:chi-normal} and \ref{chis_coro}, illustrating the presented inequalities. Theorem~\ref{divergence} together with Proposition~\ref{lem:chi-normal} demonstrate that the divergence test is first-order optimal, but not  second-order optimal. Furthermore, Proposition~\ref{chis_coro} suggests that the divergence test $\mc^D_n(r)$ with an invariant divergence $D$ scales unfavorably with the cardinality $k$ of $P$ and $Q$. The same observation was made by Unnikrishnan \emph{et al.} \cite{Unnikrishnan11} by studying the variance of the normalized Hoeffding test statistic $n D_{\textnormal{KL}}(\tp_{Z^n}\|P)$ and by Boroumand and Guillén i Fàbregas \cite{BF22-2} by studying the subexponential terms of the type-I error; see also Section~\ref{sub:related}.
\comment{ In this context, it is worth to mention the work by , where  they showed that the mismatched divergence has a significant advantage over the Hoeffding test  for the case of composite hypothesis testing problem in a low-dimensional parametric families.}

\section{Numerical Results}
\label{sec:computations}
In order to contrast the performances of different divergence tests, we numerically evaluate the second-order performances of the Neyman-Pearson test $\mc^{\textnormal{NP}}_n$, the Hoeffding test $\mc^{D_{\textnormal{KL}}}_n(r)$, and the divergence test $\mc^{D_{\textnormal{SM}}}_n(r)$ with the squared Mahalanobis distance. Recall that the KL divergence is an invariant divergence. For the squared Mahalanobis distance, we shall consider \eqref {eq:mahadiv}  with  $\bm{W}_{\mathbf{P}}$ having components
\begin{equation}
	\bm{W}_{\mathbf{P}}(i,j)
	=
	\begin{cases}
		\frac{1}{2P_{i}^{2}} +	\frac{1}{2P_{k}^{2}}, \quad & i=j \\
		\frac{1}{2P_{k}^{2}}, \quad & i \neq j. \\
	\end{cases}
\end{equation}
To better visualize the second-order performances, we focus on distributions with dimension $k=3$ and represent them by the two-dimensional vectors $\mathbf{P}=(P_{1},P_{2})^{\mathsf{T}} $ and  $\mathbf{Q}=(Q_{1}, Q_{2})^{\mathsf{T}} $ in the coordinate space $\Xi$ (defined in \eqref{eq:domaindef}).

We shall approximate the second-order performances of the aforementioned hypothesis tests by
\begin{IEEEeqnarray}{lCl}
	-\frac{1}{n}\ln \beta_{n}(\mc_{n}^{\textnormal{NP}})   &\approx &  D_{\textnormal{KL}}(P \| Q) - \sqrt{\frac{V_{\textnormal{KL}}(P \| Q)}{n}}\mathsf{Q}_{\mathcal{N}}^{-1}(\epsilon)   \label{eq:so1}\\
	-\frac{1}{n}\ln \beta_{n}(\mc_{n}^{D_{\textnormal{KL}}})   &\approx &  D_{\textnormal{KL}}(P \| Q) - \sqrt{\frac{V_{\textnormal{KL}}(P \| Q)}{n}}\sqrt{\mathsf{Q}^{-1}_{\chi^{2}_{2}}(\epsilon) }   \label{eq:so2}\\
	-\frac{1}{n}\ln \beta_{n}(\mc_{n}^{D_{\textnormal{SM}}})  & \approx &  D_{\textnormal{KL}}(P \| Q) -  \sqrt{\frac{\mathbf{c}^{\mathsf{T}} \bm{W}_{\mathbf{P}}^{-1} \mathbf{c} }{n}}\sqrt{\mathsf{Q}^{-1}_{\chi^{2}_{\bm{\lambda},2}(\epsilon)}}.   \label{eq:so3}
\end{IEEEeqnarray}
Since the first-order term $\beta'$ is not affected by the choice of the hypothesis test, we shall further compare them by means of their second-order terms $\beta''$. In particular, we shall compare the Hoeffding test $\mc^{D_{\textnormal{KL}}}_n(r)$ and the divergence test $\mc^{D_{\textnormal{SM}}}_n(r)$ by considering the ratio of their second-order terms $\beta''$ as a function of $P$, $Q$, and $\epsilon$:
\begin{equation}
\rho(P,Q,\epsilon)= \frac{  \sqrt{\mathbf{c}^{\mathsf{T}} (\bm{W}_{\mathbf{P}})^{-1} \mathbf{c} } \sqrt{\mathsf{Q}^{-1}_{\chi^{2}_{\bm{\lambda},2}}(\epsilon)}}{\sqrt{V_{\textnormal{KL}}(P \| Q)}\sqrt{ \mathsf{Q}^{-1}_{\chi^{2}_{2}}(\epsilon)}}. \label{eq:ratiodh}
\end{equation}
If $\rho(P,Q,\epsilon)>1$, then the second-order term of the divergence test $\mc^{D_{\textnormal{SM}}}_n(r)$ is strictly smaller than the second-order term of the Hoeffding test $\mc^{D_{\textnormal{KL}}}_n(r)$, hence the Hoeffding test has a better second-order performance. In contrast, if $\rho(P,Q,\epsilon)<1$, then the divergence test has a better second-order performance.

\subsection{Comparison Based on Ratio of Second-Order Terms}

\begin{figure}[t]
	\centering
	\begin{subfigure}[b]{0.49\textwidth}
		\centering
		\includegraphics[width=\textwidth]{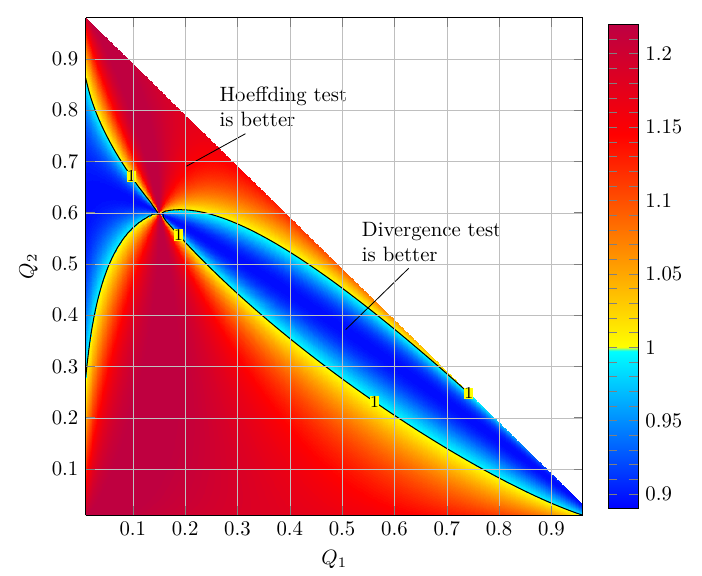}
		\caption{$P=(0.15, 0.6, 0.25)$}
		\label{fig:is_p1}
	\end{subfigure}
	\hfill
	\begin{subfigure}[b]{0.49\textwidth}
		\centering
		\includegraphics[width=\textwidth]{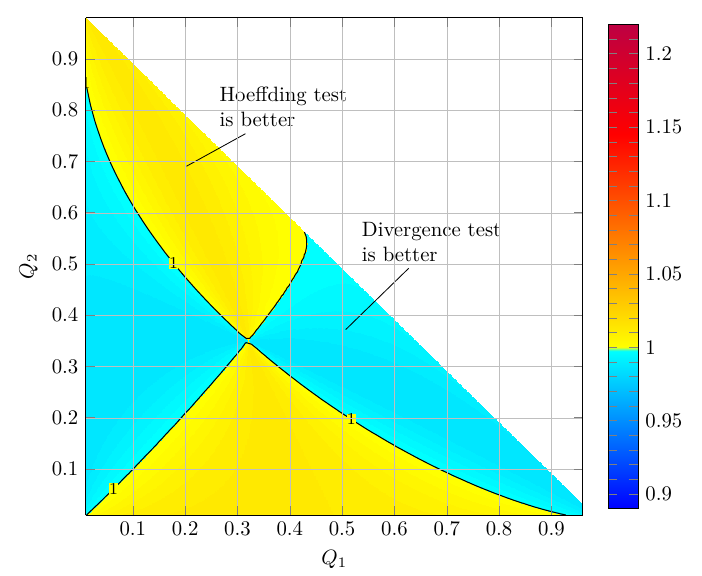}
		\caption{$P=(0.32,  0.35, 0.33)$}
		\label{fig:is_p2}
	\end{subfigure}
	\caption{Ratio $\rho(P,Q,\epsilon)$ of the second-order terms of the Hoeffding test $\mc^{D_{\textnormal{KL}}}_n(r)$ and the divergence test $\mc^{D_{\textnormal{SM}}}_n(r)$ as a function of $\mathbf{Q}$ for $\epsilon=0.02$.}
	\label{fig:is_dif_p}
\end{figure}
\begin{figure}[t]
	\centering
	\begin{subfigure}{0.49\textwidth}
		\centering
		\includegraphics[width=\textwidth]{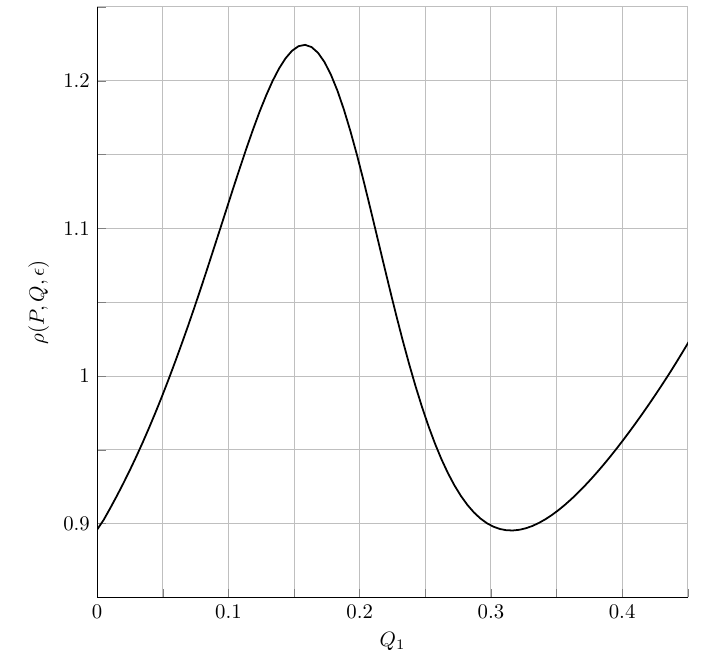}
		\caption{$P=(0.15, 0.6, 0.25)$}
		%\label{fig:is_p1}
	\end{subfigure}
	\hfill
	\begin{subfigure}{0.49\textwidth}
		\centering
		\includegraphics[width=\textwidth]{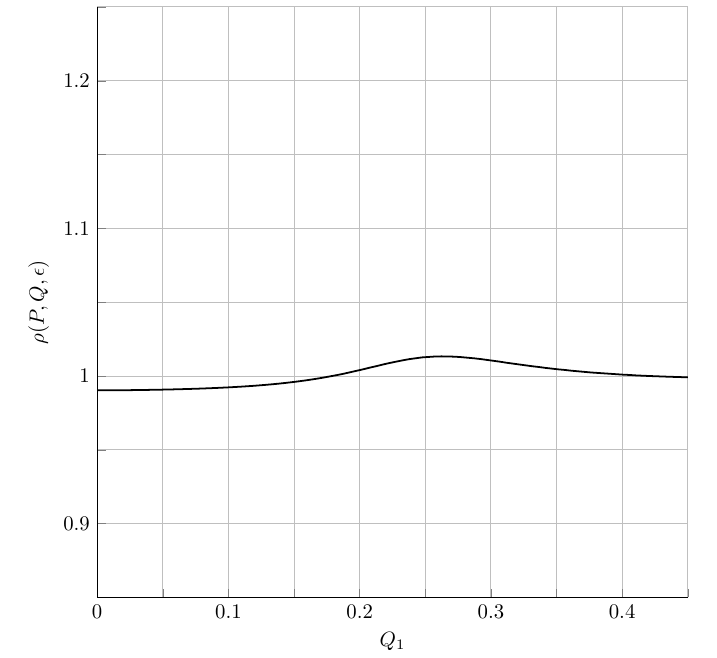}
		\caption{$P=(0.32, 0.35, 0.33)$}
		%\label{fig:is_p2}
	\end{subfigure}
	\caption{Ratio $\rho(P,Q,\epsilon)$ of the second-order terms of the Hoeffding test $\mc^{D_{\textnormal{KL}}}_n(r)$ and the divergence test $\mc^{D_{\textnormal{SM}}}_n(r)$ as a function of $Q_1$ for $Q_2=0.5$ and $\epsilon=0.02$.}
	\label{fig:ratio_dif_p}
\end{figure}

In Fig.~\ref{fig:is_dif_p}, we plot the contour lines of $\rho(P,Q,\epsilon)$ as a function of $\mathbf{Q}\in\Xi$ for $\epsilon=0.02$ and two different null hypotheses $P=(0.15,0.6,0.25)$ and $P=(0.32,0.35,0.33)$. In the figure, the coordinate space $\Xi$ is divided into two regions: one region is labeled as ``Hoeffding test is better" and includes the points $\mathbf{Q}  \in \Xi$ for which $\rho(P,Q,\epsilon)>1$; the other region is labeled as ``Divergence test is better" and includes the points $\mathbf{Q}  \in \Xi$ for which $\rho(P,Q,\epsilon)<1$. The solid contour lines drawn in the subfigures show all the points $\mathbf{Q}  \in \Xi$ for which the Hoeffding test and the divergence test have the same second-order performance. For each subfigure, the color-bar on the right indicates the values of the ratio $\rho(P,Q,\epsilon)$. Note that both subfigures use the same color-coding, i.e., they use the same colors for the same values of $\rho(P,Q,\epsilon)$. In Fig.~\ref{fig:ratio_dif_p}, we further plot $\rho(P,Q,\epsilon)$ as a function of $Q_1$, the first component of $Q$, when the second component of $Q$ is set to $Q_2 = 0.5$, and $\epsilon$ and $P$ are as in Fig.~\ref{fig:is_dif_p}.

Observe that there are alternative distributions $Q$ for which the Hoeffding test $\mc^{D_{\textnormal{KL}}}_n(r)$ has a better second-order performance than the divergence test $\mc^{D_{\textnormal{SM}}}_n(r)$, and there are distributions $Q$ for which the opposite is true. The set of distributions $Q$ for which one test outperforms the other, as well as the fluctuations of the ratio $\rho(P,Q,\epsilon)$ with $Q$, depend on the distribution $P$.

\begin{figure}[t]
	\centering
	\begin{subfigure}[b]{0.49\textwidth}
		\centering
		\includegraphics[width=\textwidth]{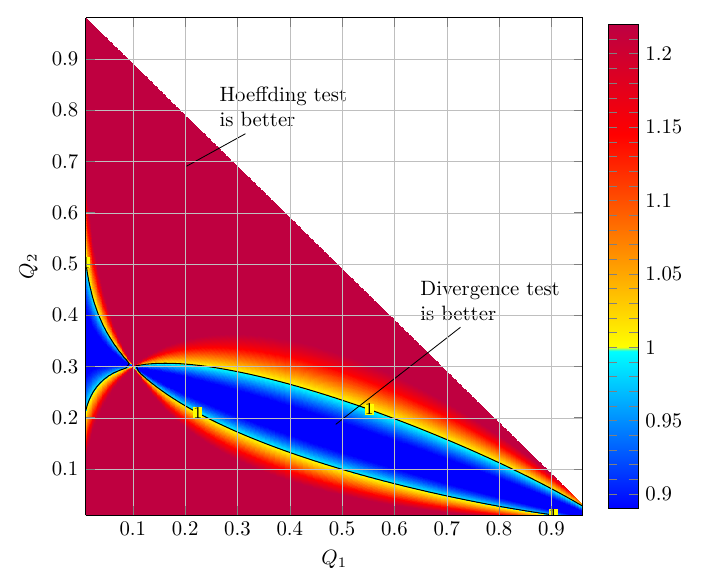}
		\caption{$\epsilon=0.02$}
		%\label{fig:is_p1}
	\end{subfigure}
	\hfill
	\begin{subfigure}[b]{0.49\textwidth}
		\centering
		\includegraphics[width=\textwidth]{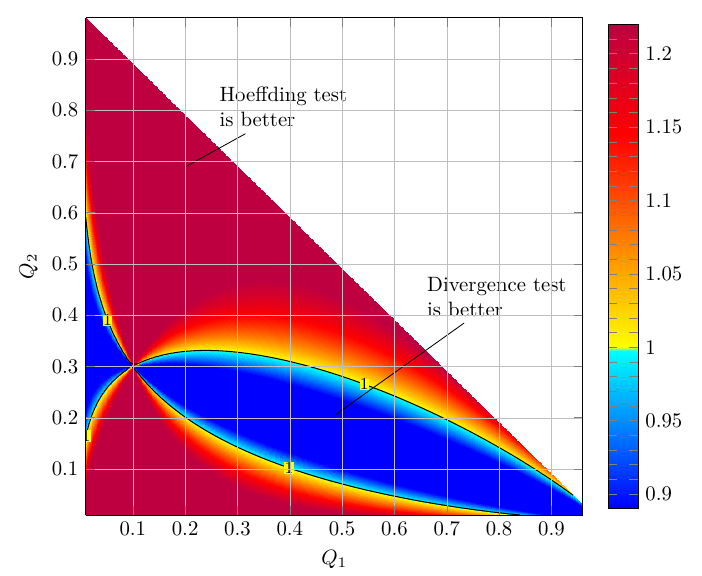}
		\caption{$\epsilon=0.5$}
		%\label{fig:is_p2}
	\end{subfigure}
	\caption{Ratio $\rho(P,Q,\epsilon)$ of the second-order terms of the Hoeffding test $\mc^{D_{\textnormal{KL}}}_n(r)$ and the divergence test $\mc^{D_{\textnormal{SM}}}_n(r)$ as a function of $\mathbf{Q}$ for $P=(0.1,0.3,0.6)$.}
	\label{fig:is_dif_eps}
\end{figure}

Fig.~\ref{fig:is_dif_eps} depicts the contour lines of $\rho(P,Q,\epsilon)$ as a function of $\mathbf{Q}\in\Xi$ for $P=(0.1, 0.3, 0.6)$ and two different values of $\epsilon$, namely, $\epsilon=0.02$ and $\epsilon=0.5$. Again, the coordinate space $\Xi$ is divided into the two regions ``Hoeffding test is better" (where $\rho(P,Q,\epsilon)>1$) and ``Divergence test is better" (where $\rho(P,Q,\epsilon)<1$), and the solid contour lines in the subfigures show all the points of $\mathbf{Q}\in\Xi$ for which the Hoeffding test and the divergence test have the same second-order performance. Observe again that there are alternative distributions $Q$ for which the Hoeffding test $\mc^{D_{\textnormal{KL}}}_n(r)$ has a better second-order performance than the divergence test $\mc^{D_{\textnormal{SM}}}_n(r)$, and there are distributions $Q$ for which the opposite is true. The set of distributions $Q$ for which one test outperforms the other, as well as the fluctuations of the ratio $\rho(P,Q,\epsilon)$ with $Q$, depend on $\epsilon$.

 \subsection{Comparison Depending on Sample Size}
 \begin{figure}[t]
	\centering
	\begin{subfigure}[b]{0.49\textwidth}
		\centering
		\includegraphics[width=\textwidth]{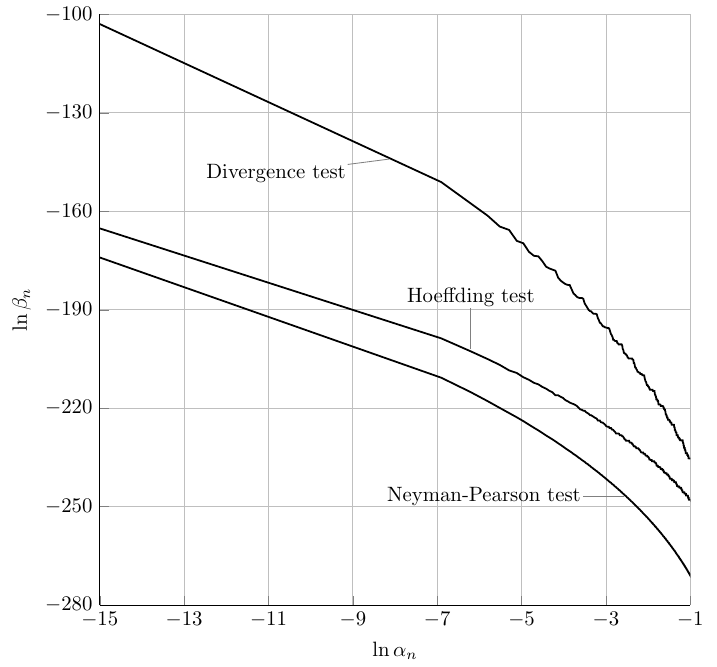}
		\caption{$Q=(0.45, 0.15, 0.4)$}
		%\label{fig:is_p1}
	\end{subfigure}
	\hfill
	\begin{subfigure}[b]{0.49\textwidth}
		\centering
		\includegraphics[width=\textwidth]{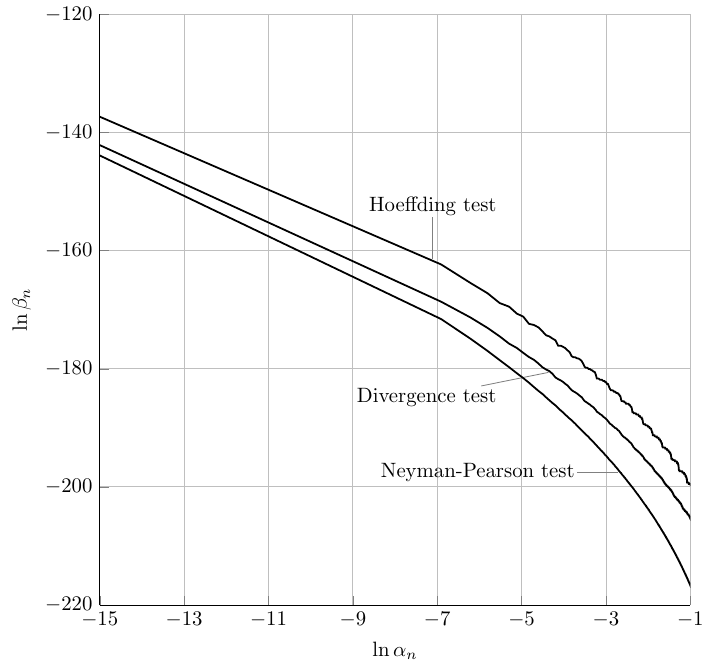}
		\caption{$Q=(0.6, 0.3,0.1)$}
		%\label{fig:is_p2}
	\end{subfigure}
	\caption{Type-I error vs.\ type-II error for $P=(0.15, 0.6, 0.25)$ and $n=500$.}
	\label{fig:actual_type12}
\end{figure}

In Fig.~\ref{fig:actual_type12}, we plot the logarithm of the type-I error $\ln \alpha_n$ versus the logarithm of the type-II error $\ln\beta_n$ for the Neyman-Pearson test $\mc^{\textnormal{NP}}_n$, the Hoeffding test $\mc^{D_{\textnormal{KL}}}_n(r)$, and the divergence test $\mc^{D_{\textnormal{SM}}}_n(r)$ for the null hypothesis $P=(0.15, 0.6,0.25)$, sample size $n=500$, and two different alternative hypotheses $Q=(0.45, 0.15, 0.4)$ and $Q=(0.6, 0.3,0.1)$. Note that $\ln\alpha_n$ and $\ln\beta_n$ are the actual (nonasymptotic) values computed numerically for the considered hypothesis tests, they are not obtained from the second-order approximations \eqref{eq:so1}--\eqref{eq:so3}. Observe that, for the alternative distribution $Q=(0.45, 0.15, 0.4)$, the Hoeffding test achieves a smaller type-II error than the divergence test, whereas for the alternative distribution $Q=(0.6, 0.3,0.1)$, the divergence test achieves a smaller type-II error. As expected, in both cases the smallest type-II error is achieved by the Neyman-Pearson test.

\begin{figure}[t]
	\centering
	\begin{subfigure}[b]{0.49\textwidth}
		\centering
		\includegraphics[width=\textwidth]{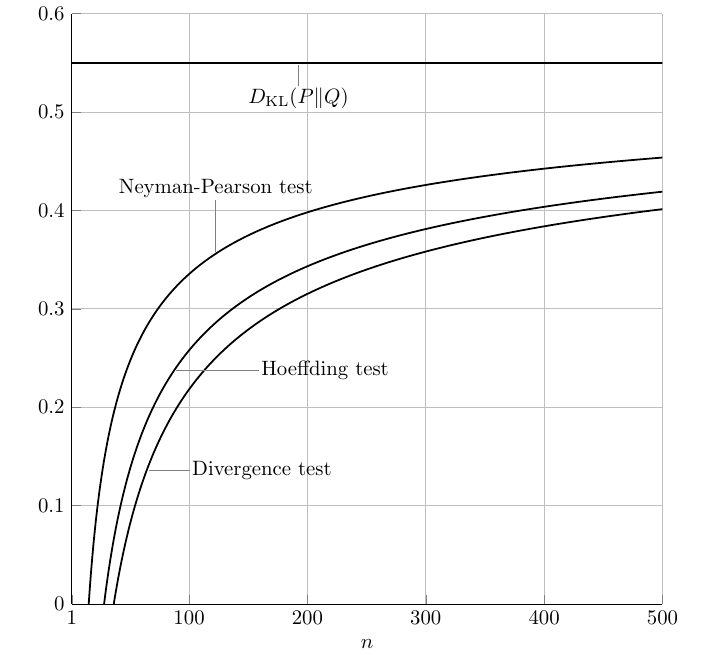}
		\caption{$Q=(0.45, 0.15, 0.4)$}
		%\label{fig:is_p1}
	\end{subfigure}
	\hfill
	\begin{subfigure}[b]{0.49\textwidth}
		\centering
		\includegraphics[width=\textwidth]{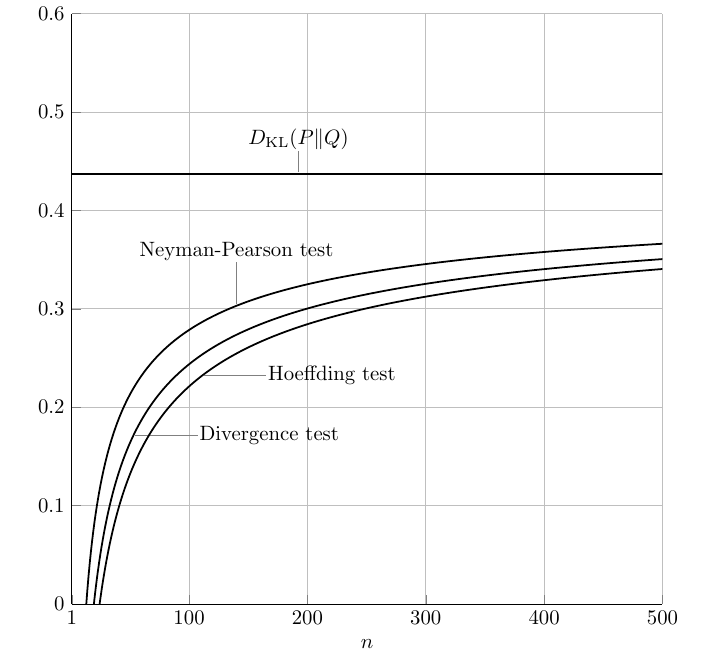}
		\caption{$Q=(0.6, 0.3,0.1)$}
		%\label{fig:is_p2}
	\end{subfigure}
 
	\caption{Second-order approximations \eqref{eq:so1}--\eqref{eq:so3} as a function of the sample size $n$ for $P=(0.15, 0.6, 0.25)$ and $\epsilon=0.02$.}
	\label{fig:so_n}
\end{figure}

To better illustrate the behavior of the type-II error $\beta_n$ as a function of the sample size $n$, we plot in Fig.~\ref{fig:so_n} the second-order approximations \eqref{eq:so1}--\eqref{eq:so3} versus $n$ for $P=(0.15, 0.6, 0.25)$, $\epsilon=0.02$, and two different alternative hypotheses $Q=(0.45, 0.15, 0.4)$ and $Q=(0.6, 0.3,0.1)$. We further indicate the value of the first-order term $\beta'=D_{\textnormal{KL}}(P\|Q)$. As expected, $-\frac{1}{n}\ln \beta_{n}(\mc_{n}^{\textnormal{NP}})$ exhibits the fastest convergence to $D_{\textnormal{KL}}(P \| Q)$. However, whether the Hoeffding test $\mc_n^{D_{\textnormal{KL}}}(r)$ or the divergence test $\mc_n^{D_{\textnormal{SM}}}(r)$ converges faster to $D_{\textnormal{KL}}(P \| Q)$ depends on $Q$. In this example, the Hoeffding test $\mc_n^{D_{\textnormal{KL}}}(r)$ exhibits a faster convergence for $Q=(0.45, 0.15, 0.4)$, whereas the divergence test $\mc_n^{D_{\textnormal{SM}}}(r)$ exhibits a faster convergence for $Q=(0.6,0.3,0.1)$.

\subsection{Comparison Based on Second-Order Term}
\begin{figure}
	\centering
	\begin{subfigure}[b]{0.49\textwidth}
		\centering
		\includegraphics[width=\textwidth]{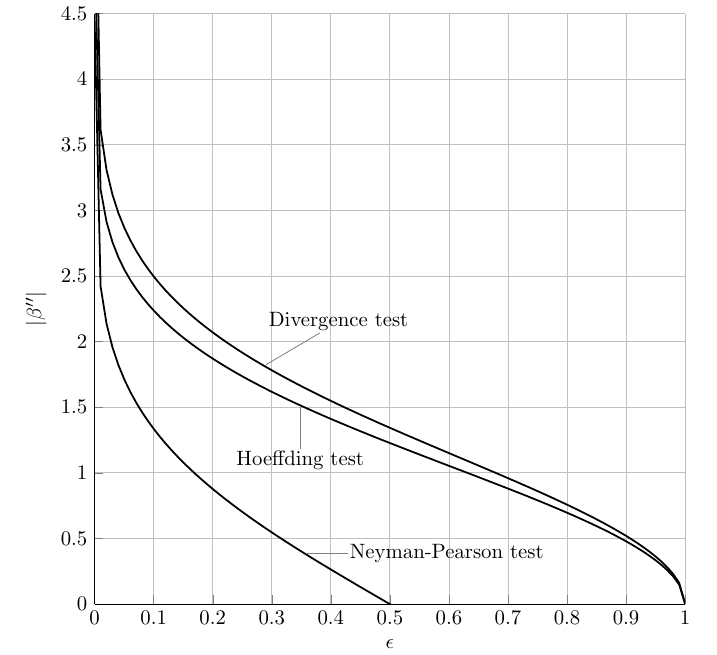}
		\caption{$Q=(0.45, 0.15, 0.4)$}
		%\label{fig:is_p1}
	\end{subfigure}
	\hfill
	\begin{subfigure}[b]{0.49\textwidth}
		\centering
		\includegraphics[width=\textwidth]{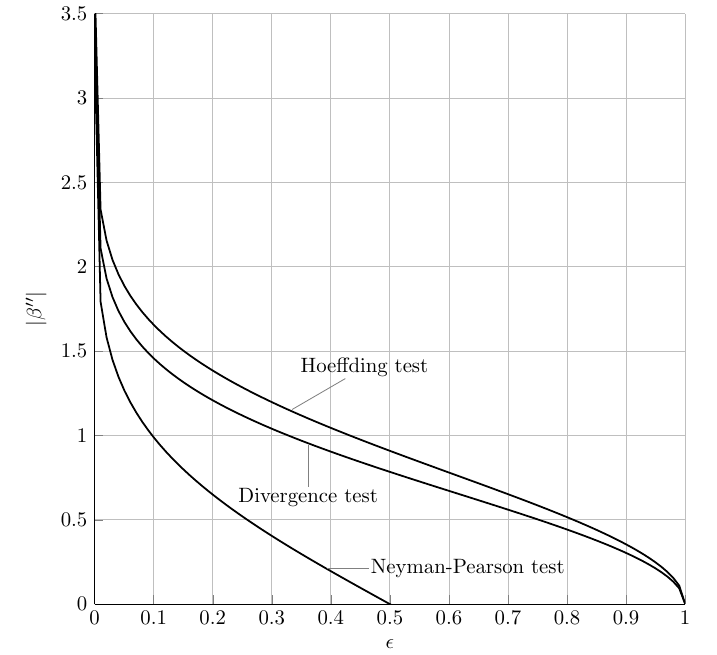}
		\caption{$Q=(0.6, 0.3,0.1)$}
		%\label{fig:is_p2}
	\end{subfigure}
	\caption{Absolute values of the second-order terms $\beta''_{\textnormal{NP}}$, $\beta''_{D_{\textnormal{KL}}}$, and $\beta''_{D_{\textnormal{SM}}}$ as a function of $\epsilon$ for $P=(0.15, 0.6, 0.25)$ and $k=3$.}
	\label{fig:eps_so_a3}
\end{figure}

We next compare the second-order terms $\beta''_{\textnormal{NP}}$, $\beta''_{D_{\textnormal{KL}}}$, and $\beta''_{D_{\textnormal{SM}}}$ as a function of $\epsilon$. Specifically, Fig.~\ref{fig:eps_so_a3} shows the absolute values of these second-order terms versus $\epsilon$ for $P=(0.15, 0.6, 0.25)$ and two different alternative hypotheses $Q=(0.45, 0.15, 0.4)$ and $Q=(0.6, 0.3,0.1)$. Note that, since the second-order term $\beta''$ is negative, a small absolute value implies a large second-order term and, hence, a better second-order performance. Again, we observe that the Neyman-Pearson test has the largest second-order term. However, whether the second-order term of the Hoeffding test $\mc_n^{D_{\textnormal{KL}}}(r)$ is larger or smaller than the second-order term of the divergence test $\mc_n^{D_{\textnormal{SM}}}(r)$ depends on the alternative hypothesis $Q$. 

\begin{figure} 
	\centering
	\includegraphics[width=0.5\textwidth]{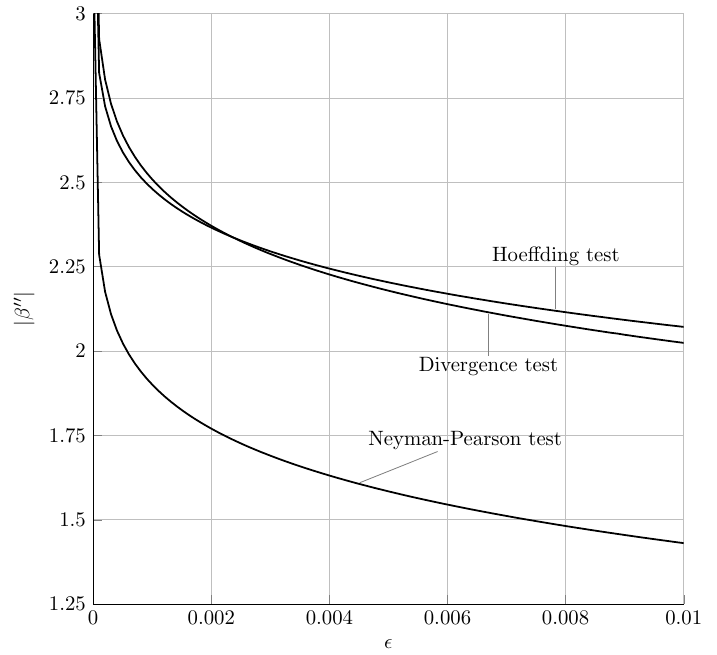}
	\caption{Absolute values of the second-order terms $\beta''_{\textnormal{NP}}$, $\beta''_{D_{\textnormal{KL}}}$, and $\beta''_{D_{\textnormal{SM}}}$ as a function of $\epsilon$ for $P=(0.1,0.3,0.4,0.2)$ and $Q=(0.36, 0.16, 0.22, 0.26)$.}
	\label{fig:tik_fig1}
\end{figure}

Finally, Fig.~\ref{fig:tik_fig1} shows the absolute values of the second-order terms $\beta''_{\textnormal{NP}}$, $\beta''_{D_{\textnormal{KL}}}$, and $\beta''_{D_{\textnormal{SM}}}$ as a function of $\epsilon$ for the $4$-dimensional distributions $P=(0.1,0.3,0.4,0.2)$ and $Q=(0.36, 0.16, 0.22, 0.26)$. Interestingly, in this example the second-order terms of the Hoeffding test $\mc_n^{D_{\textnormal{KL}}}(r)$ and the divergence test $\mc_n^{D_{\textnormal{KL}}}(r)$ cross: when $\epsilon$ is below a threshold, we have $|\beta''_{D_{\textnormal{KL}}}| < |\beta''_{D_{\textnormal{SM}}}|$; when $\epsilon$ is above the threshold, we have $|\beta''_{D_{\textnormal{KL}}}| > |\beta''_{D_{\textnormal{SM}}}|$.

\section{Proof of Theorem~\ref{divergence}}\label{Sec_Pf_Main}
\subsection{Proof of  Part~1 of  Theorem~\ref{divergence}}
From  Lemma~\ref{convgdiv}, it follows that  there exist $M_{0}>0$ and $N_{0} \in \mathbb{N}$ such that 
\begin{equation}
	\left| P^{n} \left( n D(\tp_{Z^{n}} \| P) \geq c \right) - \mathsf{Q}_{\chi^{2}_{\bm{\lambda},k-1}}(c) \right|  \leq M_{0}\delta_{n}, \quad n \geq N_{0} \label{eq:ratekl1}
\end{equation} 
for some positive $\delta_n$ that vanishes as $n \to \infty$.
For $0<\epsilon<1$, let 
\begin{equation}
	r_{n}= \frac{1}{n} \mathsf{Q}^{-1}_{\chi^{2}_{\bm{\lambda},k-1}} ( \epsilon- M_{0} \delta_{n}).  \label{eq:rvalue}
\end{equation}
It follows then from \eqref{eq:ratekl1} that the type-I error is bounded by $\epsilon$ since
\begin{IEEEeqnarray}{lCl}
	\alpha_{n}(\mc_{n}^{D}(r_{n})) & =&  P^{n}( D(\tp_{Z^{n}} \| P)  \geq  r_{n}) \nonumber\\
	& =&  P^{n} \left(  n D(\tp_{Z^{n}} \| P) \geq  \mathsf{Q}^{-1}_{\chi^{2}_{\bm{\lambda},k-1}}( \epsilon- M_{0} \delta_{n} ) \right) \nonumber\\
	& \leq &  \mathsf{Q}_{\chi^{2}_{\bm{\lambda},k-1}} \left( \mathsf{Q}^{-1}_{\chi^{2}_{\bm{\lambda},k-1}} \left( \epsilon- M_{0} \delta_{n}  \right) \right) +  M_{0} \delta_{n}  \nonumber\\
	& =& \epsilon, \quad n \geq N_0.
\end{IEEEeqnarray}
Thus, the threshold $r_n$ satisfies \eqref{eq:typeepsilon}.

We next consider the type-II error for this threshold. To this end, we define the  divergence ball of radius $r'>0$ around the distribution $P$ as
\begin{equation}
	\mathcal{B}_{D,P}(r') \triangleq \left\lbrace T \in \mathcal{P}(\mathcal{Z}) \colon D (T \|  P) <r' \right\rbrace, \quad r'>0. \label{eq:klball}  
\end{equation} 
We  denote  the set of types with denominator $n$ by $\mathcal{P}_{n}$, and $\tc(\tilde{P})$ is the type class  of $\tilde{P}$,  defined as
\begin{equation}
	\tc(\tilde{P}) \triangleq \left\lbrace z^{n} \in \mathcal{Z}^{n} \colon P_{z^{n}} = \tilde{P} \right\rbrace. \label{eq:typeclass}
\end{equation}
It follows that
\begin{IEEEeqnarray}{lCl}
	\beta_{n}(\mc_{n}^{D}(r_{n}))  &=&  Q^{n}(D(\tp_{Z^{n}} \| P)   < r_{n}) \nonumber  \\
	&= &  \sum_{z^{n}: D(\tp_{z^{n}} \| P)   < r_{n} } Q^{n}(z^{n}) \nonumber  \\
	&=&  \sum_{\tilde{P} \in \mathcal{P}_{n} \cap \mathcal{B}_{D,P}(r_{n}) }  Q^{n}(\tc(\tilde{P})) \nonumber  \\
	& \leq&   \sum_{\tilde{P} \in \mathcal{P}_{n} \cap \mathcal{B}_{D,P}(r_{n})}  \exp \{-n D_{\text{KL}}(\tilde{P} \| Q) \},\quad n\geq N_{0}  \label{eq:tyerr}
\end{IEEEeqnarray}
where the last step follows from  \cite[Th.~11.1.4]{ATCB}. 

We next derive a lower bound on $D_{\text{KL}}(\tilde{P} \| Q)$ for $\tilde{P} \in \mathcal{P}_{n} \cap \mathcal{B}_{D,P}(r_{n})$. To this end,  we shall use  the following auxiliary results.
\begin{lemma} \label{pinskercon} For any divergence $D$ and distribution  $\br \in \mathcal{P}(\mathcal{Z})$,  consider  the function 	$f_{\br}(\mathbf{T})\triangleq D(T \| \br)$. Suppose  that, for any sequence $\{\mathbf{T}_{n}\}$ of coordinates, we  have $f_{\br}(\mathbf{T}_{n})=O(b_{n})$ for a strictly positive sequence $\left\lbrace b_{n} \right\rbrace $ satisfying $b_{n} \rightarrow 0$  as $n \rightarrow \infty$.  Then, we have 
	$\| \mathbf{T}_{n}-\mathbf{\br}\|_{2} =O(\sqrt{b_{n}})$ as $n \rightarrow \infty$.
\end{lemma}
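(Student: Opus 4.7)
The goal is, roughly speaking, a local quantitative converse to the fact that $D(T\|R)=0$ forces $T=R$: the divergence locally dominates a positive-definite quadratic form in the coordinate displacement, so smallness of $D(T_n\|R)$ must imply smallness of $\|\mathbf{T}_n-\mathbf{R}\|_2$ at the rate $\sqrt{D(T_n\|R)}$. The plan is to first show that the hypothesis $D(T_n\|R)=O(b_n)\to 0$ forces $\mathbf{T}_n\to\mathbf{R}$, and then to invoke the Taylor expansion from condition~3 of Definition~\ref{divdef} to lower bound $D(T_n\|R)$ by a constant multiple of $\|\mathbf{T}_n-\mathbf{R}\|_2^2$ once the sequence is close to $R$.

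First, I would establish that $\mathbf{T}_n\to\mathbf{R}$. Suppose, towards a contradiction, that there is a subsequence $\{\mathbf{T}_{n_k}\}$ staying at Euclidean distance at least $\eta>0$ from $\mathbf{R}$. Since the closure $\overline{\mathcal{P}}(\mathcal{Z})$ is compact, I can pass to a further subsequence, which I still denote by $\{T_{n_k}\}$, converging to some $T^\star\in\overline{\mathcal{P}}(\mathcal{Z})$ with $T^\star\neq R$. Two cases arise: if $T^\star\in\mathcal{P}(\mathcal{Z})$, smoothness of $D$ (condition~3) and $T^\star\neq R$ combined with condition~2 force $D(T_{n_k}\|R)\to D(T^\star\|R)>0$, contradicting $D(T_n\|R)=O(b_n)$ with $b_n\to 0$. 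If $T^\star\in\partial\mathcal{P}(\mathcal{Z})$, condition~4 gives $\liminf_k D(T_{n_k}\|R)>0$, the same contradiction. Hence every accumulation point is $R$, and $\mathbf{T}_n\to\mathbf{R}$.

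Next, I would apply the Taylor expansion \eqref{eq:divdef1} with $\bm{\varepsilon}_n\triangleq\mathbf{T}_n-\mathbf{R}$:
\begin{equation*}
D(T_n\|R)=\tfrac12\,\bm{\varepsilon}_n^{\mathsf{T}} G(\mathbf{R})\,\bm{\varepsilon}_n+O(\|\bm{\varepsilon}_n\|_2^3).
\end{equation*}
Because $G(\mathbf{R})$ is positive definite, let $\lambda_{\min}>0$ denote its smallest eigenvalue, so that $\tfrac12\bm{\varepsilon}_n^{\mathsf{T}} G(\mathbf{R})\bm{\varepsilon}_n\geq\tfrac{\lambda_{\min}}{2}\|\bm{\varepsilon}_n\|_2^2$. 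Since $\|\bm{\varepsilon}_n\|_2\to 0$ by the previous step, the remainder obeys $|O(\|\bm{\varepsilon}_n\|_2^3)|\leq C\|\bm{\varepsilon}_n\|_2^3$ for some constant $C$, and for all sufficiently large $n$ we have $C\|\bm{\varepsilon}_n\|_2\leq\lambda_{\min}/4$. This yields the clean local lower bound
\begin{equation*}
D(T_n\|R)\;\geq\;\tfrac{\lambda_{\min}}{4}\,\|\mathbf{T}_n-\mathbf{R}\|_2^2
\end{equation*}
for all large $n$. Rearranging and using $D(T_n\|R)=O(b_n)$ then gives $\|\mathbf{T}_n-\mathbf{R}\|_2^2=O(b_n)$, i.e., $\|\mathbf{T}_n-\mathbf{R}\|_2=O(\sqrt{b_n})$, as claimed.

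The main subtlety is the first step: one cannot conclude $\mathbf{T}_n\to\mathbf{R}$ from $D(T_n\|R)\to 0$ via condition~2 alone, because $\{\mathbf{T}_n\}$ may in principle approach the boundary $\partial\mathcal{P}(\mathcal{Z})$ on which $D$ need not vanish only at $R$. This is precisely where condition~4 of Definition~\ref{divdef} (the strengthening stressed in Remark~\ref{rem:divdef}) is essential and is brought into play. Once convergence is secured, the remainder of the argument is a standard local quadratic lower bound from the Hessian-positivity guaranteed by condition~3, and no finer structure of $D$ is needed.
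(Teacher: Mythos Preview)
Your proof is correct and follows essentially the same approach as the paper: a compactness/subsequence argument using conditions~2 and~4 of Definition~\ref{divdef} to force $\mathbf{T}_n\to\mathbf{R}$, followed by the quadratic lower bound from the positive-definite Hessian in condition~3 to extract the rate $O(\sqrt{b_n})$. The only cosmetic differences are that the paper phrases the first step as ``every convergent subsequence has limit $\mathbf{R}$'' rather than by contradiction, and absorbs the cubic remainder via a choice of $\delta$ rather than via $\|\bm{\varepsilon}_n\|_2\to 0$, but the substance is identical.
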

\begin{IEEEproof} See Appendix~\ref{proofpinsker}.
\end{IEEEproof}
\begin{remark}
Lemma~\ref{pinskercon} and \eqref{eq:tayldivA} imply that, if the sequence $\{\mathbf{T}_{n}\}$ of coordinates satisfies \mbox{$D_1(\mathbf{T}_n\| \mathbf{R}) = O(b_n)$} for some divergence $D_1$ and a strictly positive sequence $\{b_n\}$ that vanishes as $n\to\infty$, then the same sequence of coordinates also satisfies $D_2(\mathbf{T}_n\| \mathbf{R}) = O(b_n)$ for any other divergence $D_2$.
\end{remark}

For our choice of $r_n$ in \eqref{eq:rvalue}, we have that $r_{n}=O\left( \frac{1}{n}\right)$ as $\delta_{n} \rightarrow 0$.  By taking $\br=P$, Lemma~\ref{pinskercon} thus implies that, for every $T \in 	\mathcal{B}_{D,P}(r_{n})$,  
\begin{eqnarray}
	\|\mathbf{T}-\mathbf{P}\|_{2}  = O\left( \frac{1}{\sqrt{n}}\right). \label{eq:remaindern}
\end{eqnarray}
 Then, we obtain from \eqref{eq:tayldivA} that, for every $T \in \mathcal{B}_{D,P}(r_{n})$,
\begin{eqnarray}
	D(T \| P) 
	&= &(\mathbf{T}-\mathbf{P})^{\mathsf{T}} \bm{A}_{D,\mathbf{P}} ( \mathbf{T}-\mathbf{P})+O\left( \frac{1}{n^{3/2}}\right). \label{eq:tayldivAn}
\end{eqnarray}
This implies that  there exist $M>0$ and $N_{1} \geq  N_{0}$ such that, for all $ n \geq  N_{1} $ and $T \in \mathcal{B}_{D,P}(r_{n}) $,
\begin{eqnarray}
	|D(T \| P) - (\mathbf{T}-\mathbf{P})^{\mathsf{T}} \bm{A}_{D,\mathbf{P}} ( \mathbf{T}-\mathbf{P}) | \leq  \frac{ M }{n^{3/2}}. \label{eq:tylp2}
\end{eqnarray}
Next, we define for $ r'>0$,
\begin{IEEEeqnarray}{lCl}
	\mathcal{B}_{\bm{A}_{D,\mathbf{P}}}(r') &\triangleq & \left\lbrace T  \in \mathcal{P}(\mathcal{Z}) \colon (\mathbf{T}-\mathbf{P})^{\mathsf{T}} \bm{A}_{D,\mathbf{P}} ( \mathbf{T}-\mathbf{P}) <r' \right\rbrace\\
	\bar{\mathcal{B}}_{\bm{A}_{D,\mathbf{P}}}(r') &\triangleq & \left\lbrace T  \in \mathcal{P}(\mathcal{Z}) \colon (\mathbf{T}-\mathbf{P})^{\mathsf{T}} \bm{A}_{D,\mathbf{P}} ( \mathbf{T}-\mathbf{P}) \leq r' \right\rbrace.
\end{IEEEeqnarray}
We then have the following lemma.
\begin{lemma} \label{lemmaball}  Let $ r_{n}$ be as in \eqref{eq:rvalue} and $\mathcal{B}_{D,P}(r_{n})$ as in \eqref{eq:klball}. Then,
	\begin{equation}
	\mathcal{B}_{D,P}(r_{n}) \subseteq
		\bar{\mathcal{B}}_{\bm{A}_{D,\mathbf{P}}} \left( r_{n}+ \frac{M}{  n^{3/2}}\right), \quad   n \geq N_{1}. \label{eq:klchi}
	\end{equation}
\end{lemma}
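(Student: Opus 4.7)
The plan is to invoke directly the uniform Taylor-type bound \eqref{eq:tylp2} that has been established just before the lemma statement. That bound says that there exist $M>0$ and $N_1\ge N_0$ such that, for every $n\ge N_1$ and every $T\in\mathcal{B}_{D,P}(r_n)$,
\begin{equation*}
\bigl|D(T\|P) - (\mathbf{T}-\mathbf{P})^{\mathsf{T}}\bm{A}_{D,\mathbf{P}}(\mathbf{T}-\mathbf{P})\bigr| \le \frac{M}{n^{3/2}}.
\end{equation*}
This is what makes the quadratic form a good surrogate for $D(T\|P)$ on the divergence ball, and the lemma just reads off the containment obtained from this uniform approximation.

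Concretely, I would fix $n\ge N_1$ and pick an arbitrary $T\in\mathcal{B}_{D,P}(r_n)$, so that $D(T\|P)<r_n$ by definition of the divergence ball in \eqref{eq:klball}. The above Taylor bound then gives
\begin{equation*}
(\mathbf{T}-\mathbf{P})^{\mathsf{T}}\bm{A}_{D,\mathbf{P}}(\mathbf{T}-\mathbf{P}) \le D(T\|P) + \frac{M}{n^{3/2}} < r_n + \frac{M}{n^{3/2}},
\end{equation*}
which is exactly the condition that $T\in\bar{\mathcal{B}}_{\bm{A}_{D,\mathbf{P}}}\!\bigl(r_n+M/n^{3/2}\bigr)$. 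Since $T$ was arbitrary in $\mathcal{B}_{D,P}(r_n)$, the inclusion \eqref{eq:klchi} follows.

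There is essentially no obstacle in this lemma itself: the work has already been done in the two preceding ingredients, namely Lemma~\ref{pinskercon} (which forces $\|\mathbf{T}-\mathbf{P}\|_2=O(n^{-1/2})$ on $\mathcal{B}_{D,P}(r_n)$ because $r_n=O(1/n)$) and \eqref{eq:tayldivA} (the quadratic expansion of any divergence near the diagonal), whose combination yields the uniform cubic remainder in \eqref{eq:tylp2}. The only thing one must be careful about is that the Taylor bound is valid only for $T$ in the divergence ball, which is precisely the quantifier used in the argument above; there is no need to invert the bound or control arbitrary $T\in\mathcal{P}(\mathcal{Z})$. Hence the proof reduces to the single-line implication $D(T\|P)<r_n\Rightarrow$ quadratic form $\le r_n+M/n^{3/2}$, and the containment follows.
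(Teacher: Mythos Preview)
Your proposal is correct and follows essentially the same approach as the paper's proof: invoke the uniform bound \eqref{eq:tylp2} on $\mathcal{B}_{D,P}(r_n)$ to get $(\mathbf{T}-\mathbf{P})^{\mathsf{T}}\bm{A}_{D,\mathbf{P}}(\mathbf{T}-\mathbf{P}) < r_n + M/n^{3/2}$, which places $T$ in the (open, hence closed) quadratic ball. The paper's argument is line-for-line the same, with the only cosmetic difference being that it explicitly passes through the open ball $\mathcal{B}_{\bm{A}_{D,\mathbf{P}}}(r_n+M/n^{3/2})$ before noting its containment in $\bar{\mathcal{B}}_{\bm{A}_{D,\mathbf{P}}}(r_n+M/n^{3/2})$.
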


\begin{IEEEproof} 
	It follows from \eqref{eq:tylp2} that, for  $T \in 	\mathcal{B}_{D,P}(r_{n})$  and $ n \geq N_{1}$, 
	\begin{equation}
		(\mathbf{T}-\mathbf{P})^{T} \bm{A}_{D,\mathbf{P}} ( \mathbf{T}-\mathbf{P})- \frac{ M }{n^{3/2}} 
		\leq D(T \| P)<r_{n}.
	\end{equation}	
	This implies that
	\begin{eqnarray*}
		(\mathbf{T}-\mathbf{P})^{\mathsf{T}} \bm{A}_{D,\mathbf{P}} ( \mathbf{T}-\mathbf{P}) < r_{n}+ \frac{M}{ n^{3/2}}.
	\end{eqnarray*}
	Thus $T \in \mathcal{B}_{\bm{A}_{D,\mathbf{P}}}\left( r_{n}+ \frac{M}{  n^{3/2}}\right) \subseteq \bar{\mathcal{B}}_{\bm{A}_{D,\mathbf{P}}}  \left( r_{n}+ \frac{M}{  n^{3/2}}\right)$.
\end{IEEEproof}
\begin{lemma} \label{ctaylor} For any two probability distributions  $Q=(Q_{1},\ldots,Q_{k})^{\mathsf{T}}$ and  $T=(T_{1},\ldots, T_{k})^{\mathsf{T}}$, the second-order Taylor approximation  of the function $T \mapsto D_{\textnormal{KL}}(T \|Q)$ around  the null hypothesis $T=P$ is given by
	\begin{equation}
		D_{\textnormal{KL}}(T \|Q)=   D_{\textnormal{KL}}(P \| Q) + \sum_{i=1}^{k} (T_{i} -P_{i} ) \ln    \left( \frac{P_{i}}{ Q_{i}} \right)+  \frac{1}{2} d_{\chi^{2}} (T, P) + O(\|\mathbf{T}-\mathbf{P}\|_{2}^{3})  \label{eq:tylq}
	\end{equation}
	as $\|\mathbf{T}-\mathbf{P}\|_{2} \rightarrow 0$.
\end{lemma}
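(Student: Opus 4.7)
The plan is to Taylor-expand the function $\mathbf{T}\mapsto D_{\textnormal{KL}}(T\|Q)$ in the $(k-1)$-dimensional coordinate system $\mathbf{T}=(T_1,\ldots,T_{k-1})^{\mathsf{T}}\in\Xi$ (see \eqref{eq:domaindef}), where the last component is implicitly given by $T_k = 1-\sum_{j=1}^{k-1}T_j$. Since both $P$ and $Q$ lie in $\mathcal{P}(\mathcal{Z})$ (all components strictly positive), the map is $C^\infty$ on an open neighborhood of $\mathbf{P}\in\Xi$, so Taylor's theorem with the Peano form of the remainder yields a quadratic approximation with error $O(\|\mathbf{T}-\mathbf{P}\|_2^3)$ as $\|\mathbf{T}-\mathbf{P}\|_2\to 0$.

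First I would compute the gradient and Hessian at $\mathbf{T}=\mathbf{P}$. Using the chain rule to account for $T_k$'s dependence on $T_1,\ldots,T_{k-1}$, a direct calculation yields
\[
\left.\frac{\partial D_{\textnormal{KL}}(T\|Q)}{\partial T_j}\right|_{\mathbf{T}=\mathbf{P}} = \ln\frac{P_j}{Q_j}-\ln\frac{P_k}{Q_k} = c_j,\qquad \left.\frac{\partial^2 D_{\textnormal{KL}}(T\|Q)}{\partial T_i\,\partial T_j}\right|_{\mathbf{T}=\mathbf{P}} = \frac{\delta_{ij}}{P_i}+\frac{1}{P_k},
\]
matching \eqref{eq:cidef} and recognizing the Hessian as exactly the matrix $\bm{\Sigma}_{\mathbf{P}}$ from \eqref{eq:covsigmain1}. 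Assembling Taylor's formula then produces
\[
D_{\textnormal{KL}}(T\|Q) = D_{\textnormal{KL}}(P\|Q) + \mathbf{c}^{\mathsf{T}}(\mathbf{T}-\mathbf{P}) + \tfrac{1}{2}(\mathbf{T}-\mathbf{P})^{\mathsf{T}}\bm{\Sigma}_{\mathbf{P}}(\mathbf{T}-\mathbf{P}) + O(\|\mathbf{T}-\mathbf{P}\|_2^3).
\]

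The last step is to rewrite the linear and quadratic terms in the $k$-dimensional form stated in the lemma. Using the constraint $T_k-P_k = -\sum_{j=1}^{k-1}(T_j-P_j)$, the linear term $\mathbf{c}^{\mathsf{T}}(\mathbf{T}-\mathbf{P})$ expands immediately into $\sum_{i=1}^{k}(T_i-P_i)\ln(P_i/Q_i)$, while \eqref{eq:chisddef} already identifies the quadratic form $(\mathbf{T}-\mathbf{P})^{\mathsf{T}}\bm{\Sigma}_{\mathbf{P}}(\mathbf{T}-\mathbf{P})$ with $d_{\chi^2}(T,P)$.

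There is no real obstacle here: the proof is a routine Taylor expansion combined with elementary bookkeeping via the simplex constraint $\sum_i(T_i-P_i)=0$, which lets us pass from the $(k-1)$-dimensional coordinate expression back to the symmetric $k$-dimensional form. The only minor subtlety is justifying a uniform $O(\|\mathbf{T}-\mathbf{P}\|_2^3)$ remainder, which follows from smoothness of $x\mapsto x\ln x$ on $(0,\infty)$ together with $P$ lying strictly in the interior of the simplex, so the third-order partial derivatives of $D_{\textnormal{KL}}(\,\cdot\,\|Q)$ are bounded on a suitable closed neighborhood of $\mathbf{P}$.
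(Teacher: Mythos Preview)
Your approach is essentially identical to the paper's: both Taylor-expand $D_{\textnormal{KL}}(T\|Q)$ in the $(k-1)$-dimensional coordinate $\mathbf{T}$, compute the gradient and Hessian at $\mathbf{P}$, bound the remainder via the third-order partial derivatives (which are bounded near $\mathbf{P}$ since $P\in\mathcal{P}(\mathcal{Z})$), and then use the simplex constraint to pass to the symmetric $k$-dimensional form. One small slip: the Peano form of the remainder only gives $o(\|\mathbf{T}-\mathbf{P}\|_2^2)$, not $O(\|\mathbf{T}-\mathbf{P}\|_2^3)$; the $O(\|\mathbf{T}-\mathbf{P}\|_2^3)$ bound you want comes from the Lagrange (or integral) form together with the boundedness of the third derivatives that you correctly invoke at the end, which is exactly what the paper does.
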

\begin{IEEEproof} See Appendix~\ref{claimtaylor}.
\end{IEEEproof}

Using \eqref{eq:remaindern} and \eqref{eq:tylq}, we obtain that, for every $T \in \mathcal{B}_{D,P}(r_{n}) $ and $r_{n}$ in \eqref{eq:rvalue},
\begin{equation}
	D_{\text{KL}}(T \|Q)=  D_{\text{KL}}(P \| Q) + \sum_{i=1}^{k} (T_{i} -P_{i} ) \ln    \left( \frac{P_{i}}{ Q_{i}} \right)+  \frac{1}{2} d_{\chi^{2}} (T, P) + O\left( \frac{1}{n^{3/2}}\right). \label{eq:tylqb}
\end{equation}
Consequently, there exist $M_{2}>0$ and $\ddot{N}_{2} \in \mathbb{N}$ such that, for all $n \geq  \ddot{N}_{2} $ and  $T \in \mathcal{B}_{D,P}(r_{n}) $,
\begin{IEEEeqnarray}{lCl}
	D _{\text{KL}}(T\|  Q) & \geq &  D_{\text{KL}}(P \| Q) + \sum_{i=1}^{k} (T_{i} -P_{i} ) \ln    \left( \frac{P_{i}}{ Q_{i}} \right)+ \frac{1}{2} d_{\chi^{2}} (T, P)- \frac{ M_{2} }{n^{3/2}} \nonumber  \\
	& \geq &  D_{\textnormal{KL}}(P \| Q) + \sum_{i=1}^{k} (T_{i} -P_{i} ) \ln    \left( \frac{P_{i}}{ Q_{i}} \right)- \frac{ M_{2} }{n^{3/2}}  
	\label{eq:tylq4}
\end{IEEEeqnarray}
where the last inequality follows since the $\chi^2$-divergence $ d_{\chi^{2}} (T, P)$ is nonnegative.

To lower-bound the right-hand side of \eqref{eq:tylq4}, we bound the second term in the right-hand side of \eqref{eq:tylq4}. To this end, we introduce the function
\begin{equation}
	\ell(\Gamma)
	\triangleq \sum_{i=1}^{k} (\Gamma_{i} -P_{i} ) \ln    \left( \frac{P_{i}}{ Q_{i}} \right)  = \mathbf{c}^{\mathsf{T}} (\mathbf{\Gamma} -\mathbf{P}), \quad \Gamma=(\Gamma_{1}, \ldots, \Gamma_{k})^{\mathsf{T}} \in  \mathcal{P}(\mathcal{Z}) \label{eq:hfun1}
\end{equation}
where $ \mathbf{c}=(c_{1},\ldots,c_{k-1})^{\mathsf{T}}$ was defined in \eqref{eq:cidef}. By Lemma~\ref{lemmaball}, we have that, for $n \geq N_{1}$ and $\tilde{P} \in  \mathcal{P}_{n} \cap \mathcal{B}_{D,P}(r_{n})$,
\begin{equation}
	\ell(\tilde{P}) \geq \min_{\Gamma \in \mathcal{P}_{n} \cap \mathcal{B}_{D,P}(r_{n}) } \ell(\Gamma)  \geq \min_{\Gamma \in \bar{\mathcal{B}}_{\bm{A}_{D,\mathbf{P}}}  \left( r_{n}+ \frac{M}{  n^{3/2}}\right) } \ell(\Gamma).  \label{eq:min}
\end{equation}
We next compute the right-most minimum in \eqref{eq:min}. To this end, we define the following quantities:

The $i$-th component of the vector $\bm{A}_{D,\mathbf{P}}^{-1} \mathbf{c}$ is denoted as
\begin{eqnarray}
	b_{i} \triangleq (\bm{A}_{D,\mathbf{P}}^{-1} \mathbf{c})_{i},\quad  1, \ldots,k-1. \label{eq:ithcom}
\end{eqnarray}
We further define the index sets
\begin{IEEEeqnarray}{lCl}
	\mathcal{I} & \triangleq & \left\lbrace   i = 1, \ldots,k-1   \colon b_{i} \neq 0  \right\rbrace \label{eq:index1} \\
	\mathcal{I}_{+} & \triangleq & \left\lbrace   i \in	\mathcal{I}   \colon b_{i} > 0  \right\rbrace. \label{eq:index2} 
\end{IEEEeqnarray}  
Since $\bm{A}_{D,\mathbf{P}}$ is invertible and $\mathbf{c}$ is non-zero (since, by assumption,  $P \neq Q$), there exists an index $ i= 1, \ldots,k-1$ such that $b_i = (\bm{A}_{D,\mathbf{P}}^{-1} \mathbf{c})_{i} \neq 0$. Hence $\mathcal{I}$ is non-empty. We next define
\begin{IEEEeqnarray}{lCl}
	\psi  &\triangleq &
	\underset{ i =1, \ldots,k }{\min{P_{i}}}  \label{eq:index3} \\
	\tau &\triangleq & \max{ \left\lbrace \tau_{1},\tau_{2} \right\rbrace } \label{eq:index3a}
\end{IEEEeqnarray}  
where 
\begin{equation}
	\tau_{1}
	=
	\begin{cases}
	-\sum_{i=1}^{k-1} b_{i},   \quad & \textnormal{if $\sum_{i=1}^{k-1} b_{i} <0$}  \\
	0, \quad & \text{otherwise} \\
	\end{cases} 	\label{eq:index4} 
\end{equation}
and
\begin{equation}
	\tau_{2}
	=
	\begin{cases}
		\underset{i \in \mathcal{I}_{+} }{\max{b_{i}}}, \quad & \textnormal{if $\mathcal{I}_{+} \neq \emptyset$} \\
		0, \quad & \textnormal{if $\mathcal{I}_{+} = \emptyset$} \\
	\end{cases} \label{eq:index4a}
\end{equation}
where $\emptyset$ denotes the empty set. Note that $\tau >0$. To see this, first suppose that $ \mathcal{I}_{+}=\emptyset$. Since $\mathcal{I}$ is non-empty, this implies that $\sum_{i=1}^{k-1} b_{i} <0$. In this case, we have $\tau \geq \tau_{1}= 	-\sum_{i=1}^{k-1} b_{i}>0$.  Now suppose that
$ \mathcal{I}_{+}\neq \emptyset$. This implies that there exists an index $j= 1, \ldots,k-1$ such that $b_{j} >0$, so $\tau_{2}=\underset{i \in \mathcal{I}_{+} }{\max{b_{i}}}>0$. In this case, we have $\tau \geq \tau_{2}>0$.

 \comment{ Now suppose that $\sum_{i=1}^{k-1} b_{i} \geq 0$. As $\mathcal{I}$ is non-empty, this implies that there exists an index $i= 1, \ldots,k-1$ such that $b_{i} >0$, hence $ \mathcal{I}_{+} $ is non-empty. It follows that $\tau_{2}>0$, from which we obtain that $\tau \geq \tau_2>0$.}

We are now ready to characterize the right-most minimum in \eqref{eq:min} as well as the minimizing distribution $\Gamma $:

\begin{lemma} \label{lemmaoptm} Let the threshold $\tilde{r}>0$ satisfy $\sqrt{\tilde{r}} < \frac{\psi}{\tau}  \sqrt{\mathbf{c}^{T} \bm{A}_{D,\mathbf{P}}^{-1} \mathbf{c}}$, where $\psi$ defined in  \eqref{eq:index3} and $\tau$ defined in  \eqref{eq:index3a}. Then, 
\begin{equation} 
		\min_{\Gamma \in \bar{\mathcal{B}}_{\bm{A}_{D,\mathbf{P}}}  \left( \tilde{r} \right)} \ell(\Gamma)   =-\sqrt{\tilde{r}} \sqrt{\mathbf{c}^{\mathsf{T}} \bm{A}_{D,\mathbf{P}}^{-1} \mathbf{c}}.   \label{eq:min1}
	\end{equation} 
Moreover, the probability distribution $\Gamma^{*}=(\Gamma^{*}_{1},\ldots,\Gamma^{*}_{k} )^{\mathsf{T}}$ that minimizes $\ell(\Gamma)$  over $\bar{\mathcal{B}}_{\bm{A}_{D,\mathbf{P}}}  \left( \tilde{r} \right)$ is given by
	\begin{equation}
		\Gamma^{*}_{i} = P_{i} + x_{i}^{*}, \quad i=1,\ldots,k \label{eq:minproba}
	\end{equation} 
	where 
	\begin{IEEEeqnarray}{lCl}
		x_{i}^{*}&=& \frac{-\sqrt{\tilde{r}} b_{i}}{ \sqrt{\mathbf{c}^{\mathsf{T}} \bm{A}_{D,\mathbf{P}}^{-1} \mathbf{c}}},\quad i=1,\ldots,k-1\\ \label{eq:minprobc}
		x_{k}^{*} &=& -\sum_{i=1}^{k-1}x_{i}^{*}  \label{eq:minprobb}
	\end{IEEEeqnarray}	
 with $b_{i}$ as defined in \eqref{eq:ithcom}.
 \end{lemma}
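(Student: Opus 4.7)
The plan is to recast the problem as a quadratic-constrained linear minimization over the coordinate vector $\mathbf{x} = \mathbf{\Gamma} - \mathbf{P} \in \mathbb{R}^{k-1}$. In these coordinates, the objective becomes $\ell(\Gamma) = \mathbf{c}^{\mathsf{T}}\mathbf{x}$ and the feasible set is the ellipsoid $\mathbf{x}^{\mathsf{T}} \bm{A}_{D,\mathbf{P}}\, \mathbf{x} \leq \tilde{r}$ intersected with the image of $\mathcal{P}(\mathcal{Z})$ under the coordinate map. First I will ignore the positivity constraint and compute the unconstrained ellipsoid minimum; then I will verify that the minimizer indeed lies in $\mathcal{P}(\mathcal{Z})$ provided the hypothesis on $\sqrt{\tilde{r}}$ holds.

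For the ellipsoid minimization, write $\mathbf{c}^{\mathsf{T}}\mathbf{x} = (\bm{A}_{D,\mathbf{P}}^{-1/2}\mathbf{c})^{\mathsf{T}}(\bm{A}_{D,\mathbf{P}}^{1/2}\mathbf{x})$ and apply the Cauchy--Schwarz inequality together with $\mathbf{x}^{\mathsf{T}}\bm{A}_{D,\mathbf{P}}\mathbf{x}\leq\tilde r$ to obtain
\begin{equation}
\mathbf{c}^{\mathsf{T}}\mathbf{x} \;\geq\; -\sqrt{\mathbf{c}^{\mathsf{T}} \bm{A}_{D,\mathbf{P}}^{-1}\mathbf{c}}\,\sqrt{\tilde{r}}.
\end{equation}
Equality is attained when $\bm{A}_{D,\mathbf{P}}^{1/2}\mathbf{x}$ is a negative multiple of $\bm{A}_{D,\mathbf{P}}^{-1/2}\mathbf{c}$ and the constraint is tight, which gives $\mathbf{x}^{*} = -\sqrt{\tilde{r}/\mathbf{c}^{\mathsf{T}}\bm{A}_{D,\mathbf{P}}^{-1}\mathbf{c}}\; \bm{A}_{D,\mathbf{P}}^{-1}\mathbf{c}$. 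Componentwise, this is precisely $x_i^{*} = -\sqrt{\tilde{r}}\, b_i / \sqrt{\mathbf{c}^{\mathsf{T}}\bm{A}_{D,\mathbf{P}}^{-1}\mathbf{c}}$ for $i=1,\ldots,k-1$. The $k$-th component $x_k^{*} = -\sum_{i=1}^{k-1}x_i^{*}$ is forced by the normalization $\sum_i \Gamma_i^{*}=1$, so automatically $\sum_i \Gamma_i^{*}=1$.

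The main obstacle, and really the reason the technical condition on $\sqrt{\tilde r}$ is present, is to ensure that $\Gamma^{*} \in \mathcal{P}(\mathcal{Z})$, i.e.\ that $\Gamma_i^{*} = P_i + x_i^{*} > 0$ for every $i$. I will handle this by cases on the sign of $b_i$ and of $\sum_{i=1}^{k-1} b_i$. For $i \in\{1,\ldots,k-1\}$, if $b_i\leq 0$ then $x_i^{*}\geq 0$ so $\Gamma_i^{*}\geq P_i \geq \psi>0$; if $b_i > 0$, then $b_i \leq \tau_2 \leq \tau$, whence
\begin{equation}
\Gamma_i^{*} \;=\; P_i - \frac{\sqrt{\tilde{r}}\,b_i}{\sqrt{\mathbf{c}^{\mathsf{T}} \bm{A}_{D,\mathbf{P}}^{-1}\mathbf{c}}} \;\geq\; \psi - \frac{\sqrt{\tilde{r}}\,\tau}{\sqrt{\mathbf{c}^{\mathsf{T}} \bm{A}_{D,\mathbf{P}}^{-1}\mathbf{c}}} \;>\;0,
\end{equation}
where the last inequality uses the hypothesis $\sqrt{\tilde r} < (\psi/\tau)\sqrt{\mathbf{c}^{\mathsf{T}}\bm{A}_{D,\mathbf{P}}^{-1}\mathbf{c}}$. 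For the $k$-th component, if $\sum_{i=1}^{k-1}b_i\geq 0$ then $x_k^{*}\geq 0$ and $\Gamma_k^{*}\geq P_k\geq \psi>0$; otherwise $-\sum_{i=1}^{k-1}b_i = \tau_1 \leq \tau$, and the same chain of inequalities as above yields $\Gamma_k^{*}>0$. Consequently, $\Gamma^{*}\in \mathcal{P}(\mathcal{Z})$ and also satisfies $(\mathbf{\Gamma}^{*}-\mathbf{P})^{\mathsf{T}}\bm{A}_{D,\mathbf{P}}(\mathbf{\Gamma}^{*}-\mathbf{P}) = \tilde{r}$ by construction, so it lies in $\bar{\mathcal{B}}_{\bm{A}_{D,\mathbf{P}}}(\tilde r)$.

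Combining the two steps, $\Gamma^{*}$ is feasible and achieves the Cauchy--Schwarz lower bound, so it is the minimizer and the minimum value equals $-\sqrt{\tilde{r}\,\mathbf{c}^{\mathsf{T}}\bm{A}_{D,\mathbf{P}}^{-1}\mathbf{c}}$, as claimed. An alternative derivation via Lagrange multipliers on the single active quadratic constraint would produce the same stationary point $\mathbf{x}^{*}$ proportional to $\bm{A}_{D,\mathbf{P}}^{-1}\mathbf{c}$, with the positive-definiteness of $\bm{A}_{D,\mathbf{P}}$ guaranteeing that this critical point is the global minimum over the (convex) ellipsoid; the positivity verification is identical.
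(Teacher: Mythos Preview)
Your proof is correct and follows essentially the same approach as the paper: reformulate as minimizing $\mathbf{c}^{\mathsf{T}}\mathbf{x}$ over the ellipsoid $\mathbf{x}^{\mathsf{T}}\bm{A}_{D,\mathbf{P}}\mathbf{x}\leq\tilde r$, solve this relaxed problem, then verify the minimizer satisfies the positivity constraints via the same case analysis on the signs of $b_i$ and $\sum_i b_i$. The only difference is that the paper uses the KKT conditions (the Lagrange-multiplier alternative you mention at the end), whereas you obtain the same optimizer more directly via Cauchy--Schwarz applied to $(\bm{A}_{D,\mathbf{P}}^{-1/2}\mathbf{c})^{\mathsf{T}}(\bm{A}_{D,\mathbf{P}}^{1/2}\mathbf{x})$; this is a minor stylistic variation rather than a genuinely different route.
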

\begin{IEEEproof}
	See Appendix~\ref{prooflemmaoptm}.
\end{IEEEproof}

Lemma~\ref{lemmaoptm} together with \eqref{eq:tylq4} allow us to lower-bound $D_{\textnormal{KL}}(T\|P)$ for every $T \in \mathcal{B}_{D,P}(r_{n}) $. Indeed, let
\begin{equation}
	r'_{n}=r_{n}+ \frac{M}{ n^{3/2}}. \label{eq:rvalueprime}
\end{equation}
Since $r'_{n}$ vanishes as $n\rightarrow \infty$, we can choose an $\ddot{N}$ such that, for $n \geq \ddot{N}$,  $r'_{n}$  satisfies $\sqrt{r'_n} < \frac{\psi}{\tau}  \sqrt{\mathbf{c}^{\mathsf{T}} \bm{A}_{D,\mathbf{P}}^{-1} \mathbf{c}}$. Then, \eqref{eq:min} and Lemma~\ref{lemmaoptm} yield that, for $\tilde{P} \in  \mathcal{P}_{n} \cap \mathcal{B}_{D,P} ( r_{n})$ and  $n \geq N_{2}\triangleq\max\{ N_{1}, \ddot{N}_{2}, \ddot{N}\}$, 
\begin{IEEEeqnarray}{lCl}
	\ell(\tilde{P}) &\geq & \min_{\bar{\mathcal{B}}_{\bm{A}_{D,\mathbf{P}}}  \left( r'_{n} \right) } \ell(\Gamma)  \nonumber  \\
	& =&-\sqrt{r'_{n}} \sqrt{\mathbf{c}^{T} \bm{A}_{D,\mathbf{P}}^{-1} \mathbf{c}}. \label{eq:minvalue}
\end{IEEEeqnarray}
It follows that, for every $\tilde{P}  \in  \mathcal{P}_{n} \cap \mathcal{B}_{D,P} ( r_{n})$,  \eqref{eq:tylq4} can be further lower-bounded as 
\begin{equation}
	D_{\textnormal{KL}}(\tilde{P} \| Q)
	 \geq   D_{\textnormal{KL}}(P \| Q) -\sqrt{r'_{n}} \sqrt{\mathbf{c}^{\mathsf{T}} \bm{A}_{D,\mathbf{P}}^{-1} \mathbf{c}}- \frac{ M_{2} }{n^{3/2}}, \quad n \geq N_{2}. 
	\label{eq:tylq4a}
\end{equation}
Applying \eqref{eq:tylq4a} to \eqref{eq:tyerr}, we obtain that the type-II error is upper-bounded by
\begin{IEEEeqnarray}{lCl}
	\beta_{n}(\mc_{n}^{D}(r_{n})) & \leq&   \sum_{\tilde{P} \in \mathcal{P}_{n} \cap \mathcal{B}_{D,P} (r_{n})}  \exp \{-n D_{\textnormal{KL}}(\tilde{P} \| Q) \} \nonumber  \\
	& \leq & \sum_{\tilde{P} \in \mathcal{P}_{n} \cap \mathcal{B}_{D,P} (r_{n})}  \exp \left\lbrace  -nD_{\textnormal{KL}}(P \| Q)+ n \sqrt{r'_{n}} \sqrt{\mathbf{c}^{\mathsf{T}} \bm{A}_{D,\mathbf{P}}^{-1} \mathbf{c}} +  \frac{M_{2}}{\sqrt{n}} \right\rbrace  \nonumber  \\
	& \leq & (n+1)^{\mid \mathcal{Z}\mid}  \exp \left\lbrace  -nD_{\textnormal{KL}}(P \| Q) + n \sqrt{r'_{n}} \sqrt{\mathbf{c}^{\mathsf{T}} \bm{A}_{D,\mathbf{P}}^{-1} \mathbf{c}}  +  \frac{M_{2}}{\sqrt{n}} \right\rbrace,\quad n \geq N_{2}  \label{eq:tyerror}
\end{IEEEeqnarray}
where the last inequality follows since $| \mathcal{P}_{n} | \leq (n+1)^{| \mathcal{Z}|}$ \cite[Th.~11.1.1]{ATCB}. By taking logarithms on both sides of \eqref{eq:tyerror}, we obtain
\begin{equation}
	\ln \beta_{n}(\mc_{n}^{D}(r_{n})) \leq | \mathcal{Z}| \ln (n+1) -nD_{\textnormal{KL}}(P \| Q) + n \sqrt{r'_{n}} \sqrt{\mathbf{c}^{\mathsf{T}} \bm{A}_{D,\mathbf{P}}^{-1} \mathbf{c}} +  \frac{M_{2}}{\sqrt{n}}, \quad  n \geq N_{2}. \label{eq:sub} 
\end{equation}

We next note that
\begin{equation}
\sqrt{r'_n} \leq \frac{1}{\sqrt{n}}\sqrt{\mathsf{Q}^{-1}_{\chi^{2}_{\bm{\lambda},k-1}}( \epsilon)} + O\left(\frac{\delta_n}{\sqrt{n}}\right) + O\left(\frac{1}{n}\right). \label{eq:sqrt_rn}
\end{equation}
Indeed, by \eqref{eq:rvalue} and \eqref{eq:rvalueprime}, 
\begin{equation}
	\sqrt{ r'_{n}} = \frac{1}{\sqrt{n}} \sqrt{\mathsf{Q}^{-1}_{\chi^{2}_{\bm{\lambda},k-1}} ( \epsilon-M_{0} \delta_{n}) + \frac{M}{\sqrt{n}}}.  \label{eq:sqr}
\end{equation}
A Taylor-series expansion of $\sqrt{x+ \delta}$ around $x>0$ yields then that
\begin{IEEEeqnarray}{lCl}
	\sqrt{ r'_{n}} &\leq& \frac{1}{\sqrt{n}} \left[  \sqrt{\mathsf{Q}^{-1}_{\chi^{2}_{\bm{\lambda},k-1}} ( \epsilon-M_{0} \delta_{n})} +  \frac{M }{ 2 \sqrt{n} \sqrt{\mathsf{Q}^{-1}_{\chi^{2}_{\bm{\lambda},k-1}}( \epsilon-M_{0} \delta_{n})}}   \right] \nonumber \\
	&\leq& \frac{1}{\sqrt{n}}   \sqrt{\mathsf{Q}^{-1}_{\chi^{2}_{\bm{\lambda},k-1}} ( \epsilon-M_{0} \delta_{n})}  +  \frac{M}{ 2 n \sqrt{\mathsf{Q}^{-1}_{\chi^{2}_{\bm{\lambda},k-1}} ( \epsilon) }}    \label{eq:sub1}
\end{IEEEeqnarray}
where the second inequality follows because $\mathsf{Q}^{-1}_{\chi^{2}_{\bm{\lambda},k-1}}(\cdot)$ is monotonically decreasing. Using a Taylor-series expansion of $\sqrt{\mathsf{Q}^{-1}_{\chi^{2}_{\bm{\lambda},k-1}}(\epsilon-M_0\delta_n)}$ around $\epsilon$, we obtain
\begin{equation}
	\sqrt{\mathsf{Q}^{-1}_{\chi^{2}_{\bm{\lambda},k-1}} ( \epsilon-M_{0} \delta_{n})}
	= \sqrt{\mathsf{Q}^{-1}_{\chi^{2}_{\bm{\lambda},k-1}}( \epsilon) } +O\left( \delta_{n} \right) \label{eq:ts}
\end{equation}
which together with \eqref{eq:sub1} yields \eqref{eq:sqrt_rn}.

Applying \eqref{eq:sqrt_rn} to \eqref{eq:sub}, we obtain
\begin{IEEEeqnarray}{lCl}
	\ln \beta_{n}(\mc_{n}^{D}(r_{n})) & \leq & 
	|\mathcal{Z}| \ln (n+1) -nD_{\textnormal{KL}}(P \| Q)  + \sqrt{n} \sqrt{\mathbf{c}^{\mathsf{T}} \bm{A}_{D,\mathbf{P}}^{-1} \mathbf{c}} \sqrt{\mathsf{Q}^{-1}_{\chi^{2}_{\bm{\lambda},k-1}} ( \epsilon) } +O( \delta_{n}\sqrt{n}) + O(1)  +\frac{M_{2}}{\sqrt{n}} \nonumber \\
	& = & 
	-nD_{\textnormal{KL}}(P \| Q) + \sqrt{n} \sqrt{\mathbf{c}^{\mathsf{T}} \bm{A}_{D,\mathbf{P}}^{-1} \mathbf{c}}\sqrt{\mathsf{Q}^{-1}_{\chi^{2}_{\bm{\lambda},k-1}} ( \epsilon) } +O( \max\{ \delta_{n} \sqrt{n}, \ln n \})
\end{IEEEeqnarray}
upon collecting terms that grow at most as fast as $\max\{\delta_n\sqrt{n},\ln n\}$ as $n\to\infty$. This proves Part~1 of Theorem~\ref{divergence}.  
\subsection{Proof of  Part 2 of  Theorem  \ref{divergence}}
Define, for $0 <\epsilon <1$ and $n \in \mathbb{N}$, 
\begin{eqnarray}
	\mathcal{R}_{n}^{\epsilon} \triangleq \left\lbrace  r > 0 \colon P^{n} \left( D(\tp_{Z^{n}} \| P) \geq r \right)  \leq \epsilon \right\rbrace \label{eq:rnset}
\end{eqnarray} 
and
\begin{eqnarray}
	r^{\epsilon}_{n} \triangleq \inf \mathcal{R}_{n}^{\epsilon}. \label{eq:rn}
\end{eqnarray} 
By definition, if $r < r^{\epsilon}_{n}$, then the type-I error  $\alpha_{n}(\mc^{D}(r))$ exceeds $\epsilon$. Hence such a threshold violates
\eqref{eq:typeepsilon}. We thus assume without loss of optimality that $r\geq r^{\epsilon}_{n}$. In this case, 
\begin{equation}
	\beta_{n}(\mc_{n}^{D}(r)) \triangleq	Q^{n} \left(D(\tp_{Z^{n}} \| P)   < r \right)  \geq Q^{n} \left( D(\tp_{Z^{n}} \| P)   < r^{\epsilon}_{n} \right)  . \label{eq:rgeq}
\end{equation}
Consequently,  the type-II error of the divergence test $\mc_{n}^{D}(r)$ can be lower-bounded as
\begin{eqnarray}
	\beta_{n}(\mc_{n}^{D}(r))  &\geq&   Q^{n}(D(\tp_{Z^{n}} \| P)   < r^{\epsilon}_{n})  \nonumber \\
	&= &  \sum_{ z^{n}:\; \tp_{z^{n}} \in \mathcal{B}_{D,P}(r^{\epsilon}_{n})} Q^{n}(z^{n}) \nonumber \\
	&=&\sum_{\tilde{P} \in \mathcal{P}_{n} \cap \mathcal{B}_{D,P}(r^{\epsilon}_{n})}  Q^{n}(\tc(\tilde{P})). \label{eq:sec3ca}
\end{eqnarray}
The following lemma characterizes the asymptotic behavior of $r_n^{\epsilon}$.

\begin{lemma} \label{spectrum} For $\delta_{n}$ given in \eqref{eq:ratekl1}, we have
	\begin{eqnarray}
		r^{\epsilon}_{n} = \frac{1}{n} \mathsf{Q}^{-1}_{\chi^{2}_{\bm{\lambda},k-1}} ( \epsilon)+  O \left( \frac{\delta_{n}}{n}  \right). \label{eq:sec6}
	\end{eqnarray}
\end{lemma}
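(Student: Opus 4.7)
The plan is to translate the definition of $r_n^\epsilon$ through the uniform approximation in Lemma~\ref{convgdiv} and then invoke smoothness of the inverse tail probability. First I would set $c = n r$ and rewrite
\begin{equation}
r_n^\epsilon = \frac{1}{n}\,c_n^\epsilon,\qquad c_n^\epsilon \triangleq \inf\Bigl\{c>0 : P^n\bigl(nD(\tp_{Z^n}\|P)\geq c\bigr)\leq \epsilon\Bigr\}.
\end{equation}
By Lemma~\ref{convgdiv}, there exist $M_0>0$ and $N_0\in\mathbb{N}$ such that, for all $c>0$ and $n\geq N_0$,
\begin{equation}
\mathsf{Q}_{\chi^{2}_{\bm{\lambda},k-1}}(c) - M_0\delta_n \;\leq\; P^n\bigl(nD(\tp_{Z^n}\|P)\geq c\bigr) \;\leq\; \mathsf{Q}_{\chi^{2}_{\bm{\lambda},k-1}}(c) + M_0\delta_n.
\end{equation}

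Next I would use these two-sided bounds together with the strict monotonicity of $\mathsf{Q}_{\chi^{2}_{\bm{\lambda},k-1}}$ to sandwich $c_n^\epsilon$. For the upper bound, if $c\geq \mathsf{Q}^{-1}_{\chi^{2}_{\bm{\lambda},k-1}}(\epsilon-M_0\delta_n)$, then $\mathsf{Q}_{\chi^{2}_{\bm{\lambda},k-1}}(c)+M_0\delta_n\leq \epsilon$, so such a $c$ lies in the feasible set defining $c_n^\epsilon$; hence $c_n^\epsilon\leq \mathsf{Q}^{-1}_{\chi^{2}_{\bm{\lambda},k-1}}(\epsilon-M_0\delta_n)$. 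For the lower bound, any feasible $c$ satisfies $\mathsf{Q}_{\chi^{2}_{\bm{\lambda},k-1}}(c)\leq \epsilon+M_0\delta_n$, so $c\geq \mathsf{Q}^{-1}_{\chi^{2}_{\bm{\lambda},k-1}}(\epsilon+M_0\delta_n)$. Taking the infimum,
\begin{equation}
\mathsf{Q}^{-1}_{\chi^{2}_{\bm{\lambda},k-1}}(\epsilon+M_0\delta_n) \;\leq\; c_n^\epsilon \;\leq\; \mathsf{Q}^{-1}_{\chi^{2}_{\bm{\lambda},k-1}}(\epsilon-M_0\delta_n).
\end{equation}

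Finally I would finish by a first-order Taylor expansion of $\mathsf{Q}^{-1}_{\chi^{2}_{\bm{\lambda},k-1}}$ at $\epsilon$. The generalized chi-square random variable $\chi^2_{\bm{\lambda},k-1}$ is a positive linear combination of independent $\chi^2_1$ variables, so its cumulative distribution function is continuously differentiable on $(0,\infty)$ with strictly positive density on its support. Therefore $\mathsf{Q}^{-1}_{\chi^{2}_{\bm{\lambda},k-1}}$ is continuously differentiable at $\epsilon\in(0,1)$, yielding
\begin{equation}
\mathsf{Q}^{-1}_{\chi^{2}_{\bm{\lambda},k-1}}(\epsilon\pm M_0\delta_n)=\mathsf{Q}^{-1}_{\chi^{2}_{\bm{\lambda},k-1}}(\epsilon)+O(\delta_n),\quad n\to\infty.
\end{equation}
Combining the sandwich with this expansion gives $c_n^\epsilon=\mathsf{Q}^{-1}_{\chi^{2}_{\bm{\lambda},k-1}}(\epsilon)+O(\delta_n)$, and dividing by $n$ yields the claim.

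The only subtlety, and the main thing to verify, is that $\mathsf{Q}^{-1}_{\chi^{2}_{\bm{\lambda},k-1}}$ is locally Lipschitz at $\epsilon$ with constant independent of $n$; this follows from the positivity of the density of $\chi^{2}_{\bm{\lambda},k-1}$ at the quantile $\mathsf{Q}^{-1}_{\chi^{2}_{\bm{\lambda},k-1}}(\epsilon)>0$ via the inverse function theorem, and is purely deterministic (no dependence on $n$). Everything else is bookkeeping.
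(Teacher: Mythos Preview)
Your proposal is correct and follows essentially the same route as the paper: sandwich $r_n^\epsilon$ (equivalently $c_n^\epsilon=nr_n^\epsilon$) between $\mathsf{Q}^{-1}_{\chi^{2}_{\bm{\lambda},k-1}}(\epsilon\pm M_0\delta_n)$ using the uniform approximation from Lemma~\ref{convgdiv}, then Taylor-expand the inverse tail function at $\epsilon$. The only cosmetic differences are that the paper tests two explicit thresholds $\dot r_n,\ddot r_n$ rather than arguing directly over the feasible set, and in its lower bound uses $\epsilon+2M_0\delta_n$ to force a strict inequality, whereas your feasible-set argument yields the slightly tighter $\epsilon+M_0\delta_n$; neither affects the $O(\delta_n/n)$ conclusion.
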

\begin{proof}
	By \eqref{eq:ratekl1}, we have that, for $n \geq N_{0}$ and every $r>0$,
	\begin{equation}
	\mathsf{Q}_{\chi^{2}_{\bm{\lambda},k-1}}( nr )- M_{0} \delta_{n} \leq P^{n}(D(\tp_{Z^{n}} \| P)   \geq r) \leq 	\mathsf{Q}_{\chi^{2}_{\bm{\lambda},k-1}}( nr )+ M_{0} \delta_{n}.  \label{eq:spec1}
	\end{equation}
	Let
	\begin{equation}
		\dot{r}_{n}= \frac{1}{n}  \mathsf{Q}^{-1}_{\chi^{2}_{\bm{\lambda},k-1}} ( \epsilon- M_{0} \delta_{n}).
	\end{equation} 
	Substituting the value $r=\dot{r}_{n}$ in \eqref{eq:spec1}, we get for $n \geq N_{0}$,
	\begin{equation}
		P^{n}(D(\tp_{Z^{n}} \| P)   \geq \dot{r}_{n})  \leq  \epsilon. \label{eq:spec1rw1}
	\end{equation}
	This implies that $r_{n}^{\epsilon}$ exists and is bounded by
	\begin{IEEEeqnarray}{lCl}
		r^{\epsilon}_{n}  &\leq&  \dot{r}_{n} \nonumber \\
		& =& \frac{1}{n}  \mathsf{Q}^{-1}_{\chi^{2}_{\bm{\lambda},k-1}}( \epsilon- M_{0} \delta_{n}), \quad n \geq N_{0}. \label{eq:rhs}
	\end{IEEEeqnarray}
	Indeed, the set  $\mathcal{R}_{n}^{\epsilon}$ is lower-bounded by $0$. Then, it follows from \eqref{eq:spec1rw1} that, for $n \geq N_{0}$, we have $\dot{r}_{n} \in \mathcal{R}_{n}^{\epsilon}$.  This implies that  $\mathcal{R}_{n}^{\epsilon}$  is a non-empty subset of the real numbers $\mathbb{R}$ that has a lower bound.  Hence, by the completeness property of $\mathbb{R}$,  $r_{n}^{\epsilon}$ exists.  
	
	We next derive a lower bound on $r_n^{\epsilon}$. Let
	\begin{equation}
		\ddot{r}_{n}= \frac{1}{n}  \mathsf{Q}^{-1}_{\chi^{2}_{\bm{\lambda},k-1}}( \epsilon + 2M_{0} \delta_{n}). \label{eq:rtilde}
	\end{equation} 
	Substituting the value of $	\ddot{r}_{n}$ in \eqref{eq:spec1}, we get for $n \geq N_{0}$,
	\begin{equation}
		P^{n}(D(\tp_{Z^{n}} \| P)   \geq 	\ddot{r}_{n})   \geq   \epsilon + M_{0} \delta_{n} 
		>  \epsilon.  \label{eq:spec1rw1a}
	\end{equation}
	Thus, $\ddot{r}_{n}$ does not belong to the set $\mathcal{R}_{n}^{\epsilon}$. It follows from the property of the infimum that 
	\begin{equation}
		r^{\epsilon}_{n} \geq   \frac{1}{n}  \mathsf{Q}^{-1}_{\chi^{2}_{\bm{\lambda},k-1}}( \epsilon + 2M_{0} \delta_{n}), \quad n \geq N_{0}. \label{eq:rhsa}
	\end{equation}
	Thus, we obtain from \eqref{eq:rhs}  and \eqref{eq:rhsa} that $r_n^{\epsilon}$ is bounded by
	\begin{equation}
	\frac{1}{n}  \mathsf{Q}^{-1}_{\chi^{2}_{\bm{\lambda},k-1}}( \epsilon + 2M_{0} \delta_{n})\leq r_{n}^{\epsilon} \leq  \frac{1}{n}  \mathsf{Q}^{-1}_{\chi^{2}_{\bm{\lambda},k-1}} ( \epsilon- M_{0} \delta_{n}), \quad n \geq N_{0}. \label{eq:rnepsilonin}
	\end{equation}
	The lemma follows then from Taylor-series expansions of $ \mathsf{Q}^{-1}_{\chi^{2}_{\bm{\lambda},k-1}}( \epsilon + 2M_{0} \delta_{n}) $ and $  \mathsf{Q}^{-1}_{\chi^{2}_{\bm{\lambda},k-1}}( \epsilon- M_{0} \delta_{n})$ around $\epsilon$.
\end{proof}

Lemma~\ref{spectrum} implies that $r^{\epsilon}_{n} = \Theta(\frac{1}{n})$. By taking $\br =P$, it thus follows from Lemma~\ref{pinskercon}  that, for every \mbox{$T \in \mathcal{B}_{D,P}(r^{\epsilon}_{n})$}, 
\begin{equation}
	\|\mathbf{T}-\mathbf{P}\|_{2}= O\left( \frac{1}{\sqrt{n}}\right).\label{eq:orderd}
\end{equation}
Consequently, for every $T \in \mathcal{B}_{D,P}(r^{\epsilon}_{n})$, \eqref{eq:tayldivA} can be written as 
\begin{equation}
	D(T \| P) = (\mathbf{T}-\mathbf{P})^{\mathsf{T}} \bm{A}_{D,\mathbf{P}} ( \mathbf{T}-\mathbf{P})+O\left( \frac{1}{n^{3/2}}\right) \label{eq:tayldivAa}
\end{equation}
hence there exist $\bar{M}'_{0}>0$ and $\bar{N}'_{0} \in \mathbb{N}$ such that, for all $ n \geq  \bar{N}'_{0} $ and $T \in \mathcal{B}_{D,P}(r^{\epsilon}_{n})$,
\begin{equation}
	|D(T \| P) - (\mathbf{T}-\mathbf{P})^{\mathsf{T}} \bm{A}_{D,\mathbf{P}} ( \mathbf{T}-\mathbf{P}) | < \frac{ \bar{M}'_{0} }{n^{3/2}}. \label{eq:tylp2w}
\end{equation}

We next note that, since $\bm{A}_{D,\mathbf{P}} \succ 0$, the Rayleigh-Ritz theorem \cite[Th.~4.2.2]{RJ90} yields for every $\mathbf{T} \in \mathbb{R}^{k-1}$,
\begin{equation}
	\tilde{\lambda}_{\textnormal{min}} 	\|\mathbf{T}-\mathbf{P}\|^{2}_{2} \leq  (\mathbf{T}-\mathbf{P})^{T} \bm{A}_{D,\mathbf{P}} ( \mathbf{T}-\mathbf{P}) \label{eq:eignbound}
\end{equation}
where  $\tilde{\lambda}_{\textnormal{min}} >0$ is the minimum eigenvalue of the positive-definite matrix $\bm{A}_{D,\mathbf{P}}$. Furthermore, for  \mbox{$T \in	\bar{\mathcal{B}}_{\bm{A}_{D,\mathbf{P}}}\left( r^{\epsilon}_{n}-\frac{\bar{M}'_{0}}{ n^{3/2}}\right)$}, we have
\begin{equation}
    (\mathbf{T}-\mathbf{P})^{\mathsf{T}} \bm{A}_{D,\mathbf{P}} ( \mathbf{T}-\mathbf{P})\leq r^{\epsilon}_{n}-\frac{\bar{M}'_{0}}{n^{3/2}}. \label{eq:ADP_bound}
\end{equation}
Thus, combining \eqref{eq:eignbound} and \eqref{eq:ADP_bound}, and again using that, by Lemma~\ref{spectrum}, $r^{\epsilon}_{n} = \Theta(\frac{1}{n})$, we obtain that, for every $T \in\bar{\mathcal{B}}_{\bm{A}_{D,\mathbf{P}}}\left( r^{\epsilon}_{n}-\frac{\bar{M}'_{0}}{ n^{3/2}}\right)$,
\begin{equation}
    \|\mathbf{T}-\mathbf{P}\|_{2}= O\left( \frac{1}{\sqrt{n}}\right).
\end{equation}
It follows  from \eqref{eq:tayldivA} that  there exist $\bar{M}'_{1}>0$ and $\bar{N}'_{1} \in \mathbb{N}$ such that, for all $ n \geq  \bar{N}'_{1}$ and   $T \in	\bar{\mathcal{B}}_{\bm{A}_{D,\mathbf{P}}}\left( r^{\epsilon}_{n}-\frac{\bar{M}'_{0}}{ n^{3/2}}\right)$, 
\begin{equation}
	|D(T \| P) - (\mathbf{T}-\mathbf{P})^{\mathsf{T}} \bm{A}_{D,\mathbf{P}} ( \mathbf{T}-\mathbf{P}) | < \frac{ \bar{M}'_{1} }{n^{3/2}}. \label{eq:tylp2bw}
\end{equation}
This allows us to obtain the following  lemma.
\begin{lemma} \label{lemmaballw} For $ n \geq N'_{1}\triangleq\max \{ \bar{N}'_{0}, \bar{N}'_{1}\}$, we have
	\begin{equation}
	\bar{\mathcal{B}}_{\bm{A}_{D,\mathbf{P}}}\left( r^{\epsilon}_{n}-\frac{M'_{1}}{ n^{3/2}}\right)\subseteq \mathcal{B}_{D,P}(r^{\epsilon}_{n})\label{eq:klchiw}
	\end{equation}
	where $M'_{1}\triangleq\max \{ \bar{M}'_{0}, \bar{M}'_{1}\}$ and  $r^{\epsilon}_{n}$ is as in \eqref{eq:sec6}.
\end{lemma}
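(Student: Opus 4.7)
The plan is to mirror the argument of Lemma~\ref{lemmaball}, but to run the containment in the opposite direction: instead of starting from a divergence ball and passing to the quadratic-form ball, I would start from a quadratic-form ball and pass to the divergence ball. The essential tool is again the quadratic Taylor approximation \eqref{eq:tayldivA} of $D(T\|P)$, but this time applied on the quadratic-form ball rather than on the divergence ball. Lemma~\ref{spectrum} guarantees that $r^{\epsilon}_{n}=\Theta(1/n)$, which in particular ensures that $r^{\epsilon}_{n}-M'_{1}/n^{3/2}$ is positive for all $n$ large enough, so the ball on the left-hand side of \eqref{eq:klchiw} is nonempty and well-defined.

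Concretely, I would fix $n\geq N'_{1}$ and pick an arbitrary $T\in\bar{\mathcal{B}}_{\bm{A}_{D,\mathbf{P}}}\bigl(r^{\epsilon}_{n}-M'_{1}/n^{3/2}\bigr)$. Since $M'_{1}\geq \bar{M}'_{0}$, the monotonicity of these sublevel sets gives $T\in \bar{\mathcal{B}}_{\bm{A}_{D,\mathbf{P}}}\bigl(r^{\epsilon}_{n}-\bar{M}'_{0}/n^{3/2}\bigr)$, which is precisely the region on which the Taylor estimate \eqref{eq:tylp2bw} is valid. Applying \eqref{eq:tylp2bw} and using $\bar{M}'_{1}\leq M'_{1}$ on the remainder, I would then conclude
\begin{equation}
D(T\|P) \;<\; (\mathbf{T}-\mathbf{P})^{\mathsf{T}}\bm{A}_{D,\mathbf{P}}(\mathbf{T}-\mathbf{P}) + \frac{M'_{1}}{n^{3/2}} \;\leq\; \left(r^{\epsilon}_{n}-\frac{M'_{1}}{n^{3/2}}\right) + \frac{M'_{1}}{n^{3/2}} \;=\; r^{\epsilon}_{n},
\end{equation}
which shows $T\in\mathcal{B}_{D,P}(r^{\epsilon}_{n})$ and completes the proof.

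The only subtlety worth highlighting is the bookkeeping of the constants: the Taylor estimate \eqref{eq:tylp2bw} was derived on the quadratic-form ball of radius $r^{\epsilon}_{n}-\bar{M}'_{0}/n^{3/2}$ with remainder bounded by $\bar{M}'_{1}/n^{3/2}$, whereas the statement of the lemma uses a single constant $M'_{1}=\max\{\bar{M}'_{0},\bar{M}'_{1}\}$ on both sides. Taking this maximum simultaneously ensures (i) that our starting ball sits inside the region where the Taylor bound is valid, and (ii) that the remainder term is absorbed by the slack $M'_{1}/n^{3/2}$ we subtracted from $r^{\epsilon}_{n}$. No further quantitative control is needed because the argument is purely deterministic once $n\geq N'_{1}$; the asymptotic behavior of $r^{\epsilon}_{n}$ from Lemma~\ref{spectrum} was already used upstream to justify the Taylor bound \eqref{eq:tylp2bw}.
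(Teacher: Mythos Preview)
Your proposal is correct and follows essentially the same approach as the paper's proof: both use the containment $\bar{\mathcal{B}}_{\bm{A}_{D,\mathbf{P}}}(r^{\epsilon}_{n}-M'_{1}/n^{3/2}) \subseteq \bar{\mathcal{B}}_{\bm{A}_{D,\mathbf{P}}}(r^{\epsilon}_{n}-\bar{M}'_{0}/n^{3/2})$ (since $M'_{1}\geq \bar{M}'_{0}$) to place $T$ in the region where \eqref{eq:tylp2bw} applies, then bound $D(T\|P)$ by the quadratic form plus $\bar{M}'_{1}/n^{3/2}\leq M'_{1}/n^{3/2}$ and cancel against the subtracted slack. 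Your discussion of the constant bookkeeping is a helpful elaboration but does not differ in substance from the paper.
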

\begin{IEEEproof} 
	Since $M'_{1}=\max \{ \bar{M}'_{0}, \bar{M}'_{1}\}$,  we have $\bar{M}'_{0}\leq M'_{1}$. This implies that 
	\begin{equation}
	\bar{\mathcal{B}}_{\bm{A}_{D,\mathbf{P}}}\left( r^{\epsilon}_{n}-\frac{M'_{1}}{n^{3/2}}\right)  \subseteq \bar{\mathcal{B}}_{\bm{A}_{D,\mathbf{P} }} \left( r^{\epsilon}_{n}-\frac{\bar{M}'_{0}}{ n^{3/2}} \right).
	\end{equation}
	Then, using  \eqref{eq:tylp2bw}, we have for $T \in \bar{\mathcal{B}}_{\bm{A}_{D,\mathbf{P}}} \left( r^{\epsilon}_{n}-\frac{M'_{1}}{ n^{3/2}}\right)$ and $n \geq N'_{1}=\max \{ \bar{N}'_{0}, \bar{N}'_{1}\}$ that
	\begin{IEEEeqnarray}{lCl}
		D(T \| P)  &< &   (\mathbf{T}-\mathbf{P})^{\mathsf{T}} \bm{A}_{D,\mathbf{P}} ( \mathbf{T}-\mathbf{P})+ \frac{ \bar{M}'_{1} }{n^{3/2}} \notag\\
		%	&\leq &   \eta  d_{\chi^{2}} (T, P) + \frac{ M'_{1}}{n^{3/2}} \label{eq:hb}\notag\\
		&\leq &  \left(r^{\epsilon}_{n}-\frac{M'_{1}}{ n^{3/2}}\right) +\frac{ M'_{1} }{n^{3/2}}\notag\\
		&\leq& r^{\epsilon}_{n}
	\end{IEEEeqnarray}
	hence $T \in \mathcal{B}_{D,P}(r^{\epsilon}_{n})$.
\end{IEEEproof}

Applying Lemma~\ref{lemmaballw} to \eqref{eq:sec3ca}, we obtain that, when $r \geq r^{\epsilon}_{n}$ and $n \geq N_{1}'$, the type-II error of the divergence test $\mc_{n}^{D}(r)$ can be lower-bounded as
\begin{IEEEeqnarray}{lCl}
	\beta_{n}(\mc_{n}^{D}(r))  &\geq& \sum_{\tilde{P} \in \mathcal{P}_{n} \cap \mathcal{B}_{D,P}(r^{\epsilon}_{n})}  Q^{n}(\tc(\tilde{P})) \notag \\
	&\geq & \sum_{\tilde{P} \in \mathcal{P}_{n} \cap 	\bar{\mathcal{B}}_{\bm{A}_{D,\mathbf{P}}} \left(  \bar{r}_{n}\right) }  Q^{n}(\tc(\tilde{P})) \label{eq:sec3c}
\end{IEEEeqnarray}
where  
\begin{equation}
	\bar{r}_{n}=r^{\epsilon}_{n}-\frac{M'_{1}}{ n^{3/2}}. \label{eq:rbar}
\end{equation}

Similar to the proof of Part~1 of Theorem~\ref{divergence}, the right-hand side of \eqref{eq:sec3c} can be bounded by $\exp\{-n D_{\textnormal{KL}}(P_n^*\|Q)\}$ for some type distribution $P_n^* \in \mathcal{P}_{n} \cap \bar{\mathcal{B}}_{\bm{A}_{D,\mathbf{P}}}(\bar{r}_{n})$. We then note that, by Lemma~\ref{ctaylor} and \eqref{eq:orderd}, there exist $M_{2}'>0$ and $N_{2}' \in \mathbb{N}$ such that, for $T \in \mathcal{B}_{D,P}(r^{\epsilon}_{n})$, 
\begin{equation}
	\left|D_{\textnormal{KL}} (T \|  Q) -  D_{\textnormal{KL}}(P \| Q) - \sum_{i=1}^{k} (T_{i} -P_{i} ) \ln    \left( \frac{P_{i}}{ Q_{i}} \right)-  \frac{1}{2} d_{\chi^{2}} (T, P) \right| \leq \frac{ M_{2}' }{n^{3/2}},\quad n \geq  N_{2}'.  \label{eq:tylq2c}
\end{equation}
To upper-bound $D_{\textnormal{KL}}(P_n^*\|Q)$, we thus need to evaluate
\begin{equation}
\ell(T) = \sum_{i=1}^{k} (T_{i} -P_{i} ) \ln    \left( \frac{P_{i}}{ Q_{i}} \right)
\end{equation}
for $T=P_n^*$. We shall do so in Lemma~\ref{type} below. The probability distribution $\Gamma^*$ in Lemma~\ref{type} is well-defined if
\begin{equation}
 \sqrt{\bar{r}_{n}} < \frac{\psi}{\tau}  \sqrt{\mathbf{c}^{\mathsf{T}} \bm{A}_{D,\mathbf{P}}^{-1} \mathbf{c}} \label{eq:rnbarn}
\end{equation}
where $\psi$ and $\tau$ were defined in \eqref{eq:index3} and \eqref{eq:index3a}, respectively. Fortunately, we can find a sufficiently large $\tilde{N}'\in\mathbb{N}$ such that \eqref{eq:rnbarn} holds for $n \geq \tilde{N}'$, since, by Lemma~\ref{spectrum} and \eqref{eq:rbar}, $\bar{r}_{n}$ vanishes as $n \to \infty$.

\begin{lemma}\label{type}  Consider the probability distribution $\Gamma^{*}=(\Gamma^{*}_{1},\ldots,\Gamma^{*}_{k} )^{\mathsf{T}}$ that minimizes $\ell(\Gamma)$  over $\bar{\mathcal{B}}_{\bm{A}_{D,\mathbf{P}}}  \left( \bar{r}_{n} \right)$, defined in \eqref{eq:minproba}--\eqref{eq:minprobc}, namely
	\begin{equation}
		\Gamma^{*}_{i} = P_{i} + x_{i}^{*} \quad i=1,\ldots,k \label{eq:z1}
	\end{equation}
	where 
	\begin{IEEEeqnarray}{lCl}
		x_{i}^{*} &=& \frac{-\sqrt{\bar{r}_{n}  } b_{i}}{ \sqrt{\mathbf{c}^{\mathsf{T}} \bm{A}_{D,\mathbf{P}}^{-1} \mathbf{c}}},\quad i=1,\ldots,k-1. \label{eq:minprb2} \\
		x_{k}^{*} &=&-\sum_{i=1}^{k-1}x_{i}^{*} \label{eq:minprb3} 
	\end{IEEEeqnarray}
	with $b_{i}$ given  in \eqref{eq:ithcom}. Then, there exist $\tilde{N} \geq \tilde{N}'$ and a type distribution $\tp^{*}_{n}\in \mathcal{P}_{n} \cap \bar{\mathcal{B}}_{\bm{A}_{D,\mathbf{P}}}  \left( \bar{r}_{n} \right)$ such that
	\begin{equation}
		| n \ell(	\Gamma^{*}) -n\ell(\tp^{*}_{n})| \leq \kappa, \quad n \geq \tilde{N} \label{eq:hfunbound}
	\end{equation}
	for some constant $\kappa >0$.
\end{lemma}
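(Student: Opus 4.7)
The plan is to leverage the fact that $\ell(\Gamma) = \mathbf{c}^{\mathsf{T}}(\mathbf{\Gamma} - \mathbf{P})$ is a linear, hence $\|\mathbf{c}\|_2$-Lipschitz, functional of the coordinate vector. If I can produce a type $\tp^*_n \in \mathcal{P}_n \cap \bar{\mathcal{B}}_{\bm{A}_{D,\mathbf{P}}}(\bar{r}_n)$ with $\|\tilde{\mathbf{P}}^*_n - \mathbf{\Gamma}^*\|_2 = O(1/n)$, then Cauchy-Schwarz immediately yields $n|\ell(\Gamma^*) - \ell(\tp^*_n)| \leq n \|\mathbf{c}\|_2 \cdot O(1/n) = O(1)$, and \eqref{eq:hfunbound} follows. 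The task thus reduces to exhibiting such a type.

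The main obstacle is that $\Gamma^*$, being the minimizer of the non-constant linear functional $\ell$ over the compact convex set $\bar{\mathcal{B}}_{\bm{A}_{D,\mathbf{P}}}(\bar{r}_n)$, lies on its boundary; indeed, a direct computation from \eqref{eq:minprb2}--\eqref{eq:minprb3} gives $(\mathbf{\Gamma}^* - \mathbf{P})^{\mathsf{T}} \bm{A}_{D,\mathbf{P}}(\mathbf{\Gamma}^* - \mathbf{P}) = \bar{r}_n$ exactly. Consequently, rounding $\Gamma^*$ naively to the closest type on the $1/n$-coordinate grid risks pushing it just outside the ellipsoid. To avoid this, I would first shrink $\Gamma^*$ toward $P$ along the line segment connecting them, defining
\[
\mathbf{\Gamma}^\dagger_n \triangleq \mathbf{P} + \left(1 - \frac{c_0}{\sqrt{n}}\right)(\mathbf{\Gamma}^* - \mathbf{P})
\]
for a sufficiently large constant $c_0 > 0$ to be fixed below. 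Since $\bar{r}_n = \Theta(1/n)$ by Lemma~\ref{spectrum} and \eqref{eq:rbar}, and $\|\mathbf{\Gamma}^* - \mathbf{P}\|_2 = \Theta(1/\sqrt{n})$ from \eqref{eq:minprb2}--\eqref{eq:minprb3}, we have $(\mathbf{\Gamma}^\dagger_n - \mathbf{P})^{\mathsf{T}} \bm{A}_{D,\mathbf{P}}(\mathbf{\Gamma}^\dagger_n - \mathbf{P}) = (1 - c_0/\sqrt{n})^2 \bar{r}_n$, so $\Gamma^\dagger_n$ sits strictly inside the ellipsoid with inward margin $\Theta(c_0/n^{3/2})$. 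Moreover, because $P$ lies in the interior of $\mathcal{P}(\mathcal{Z})$ and $\Gamma^* \in \mathcal{P}(\mathcal{Z})$ (this is exactly why the hypothesis $\sqrt{\bar{r}_n} < (\psi/\tau)\sqrt{\mathbf{c}^{\mathsf{T}}\bm{A}_{D,\mathbf{P}}^{-1}\mathbf{c}}$ was invoked to define $\tilde{N}'$), the contraction $\Gamma^\dagger_n$ also belongs to $\mathcal{P}(\mathcal{Z})$ and, for $n$ large, is at a distance bounded away from zero from $\partial\mathcal{P}(\mathcal{Z})$.

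Next, I would approximate $\Gamma^\dagger_n$ by a type $\tp^*_n \in \mathcal{P}_n$ by rounding each of the first $k-1$ coordinates to the nearest multiple of $1/n$ and letting the $k$-th coordinate absorb the remainder; this gives $\|\tilde{\mathbf{P}}^*_n - \mathbf{\Gamma}^\dagger_n\|_2 \leq \sqrt{k-1}/n$, and the resulting vector has nonnegative entries for large $n$ because $\Gamma^\dagger_n$ is bounded away from the simplex boundary. The induced change in the ellipsoidal form is controlled by the expansion
\[
(\mathbf{T} - \mathbf{P})^{\mathsf{T}} \bm{A}_{D,\mathbf{P}}(\mathbf{T} - \mathbf{P}) - (\mathbf{\Gamma}^\dagger_n - \mathbf{P})^{\mathsf{T}} \bm{A}_{D,\mathbf{P}}(\mathbf{\Gamma}^\dagger_n - \mathbf{P}) = O\bigl(\|\mathbf{T} - \mathbf{\Gamma}^\dagger_n\|_2 \|\mathbf{\Gamma}^\dagger_n - \mathbf{P}\|_2\bigr) + O\bigl(\|\mathbf{T} - \mathbf{\Gamma}^\dagger_n\|_2^2\bigr),
\]
which for $\mathbf{T} = \tilde{\mathbf{P}}^*_n$ amounts to $O(1/n^{3/2})$. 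Choosing $c_0$ large enough that the inward margin $\Theta(c_0/n^{3/2})$ dominates this rounding perturbation guarantees $\tp^*_n \in \bar{\mathcal{B}}_{\bm{A}_{D,\mathbf{P}}}(\bar{r}_n)$ for all $n \geq \tilde{N}$, for some $\tilde{N} \geq \tilde{N}'$.

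Finally, the triangle inequality gives $\|\tilde{\mathbf{P}}^*_n - \mathbf{\Gamma}^*\|_2 \leq \|\tilde{\mathbf{P}}^*_n - \mathbf{\Gamma}^\dagger_n\|_2 + \|\mathbf{\Gamma}^\dagger_n - \mathbf{\Gamma}^*\|_2 = O(1/n) + (c_0/\sqrt{n})\|\mathbf{\Gamma}^* - \mathbf{P}\|_2 = O(1/n)$, whence the Lipschitz bound on $\ell$ delivers $n|\ell(\Gamma^*) - \ell(\tp^*_n)| \leq \kappa$ for some $\kappa > 0$ and all $n \geq \tilde{N}$. The delicate point of the argument is the scale matching: the retraction and the grid rounding both produce ellipsoid perturbations of the same order $1/n^{3/2}$, so $c_0$ must be chosen to dominate the implicit constants arising from the operator norm of $\bm{A}_{D,\mathbf{P}}$ and the magnitude of $\mathbf{\Gamma}^* - \mathbf{P}$.
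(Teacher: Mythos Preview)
Your argument is correct and follows the same high-level strategy as the paper—retract $\Gamma^*$ toward $P$, round to a type, then use Lipschitz continuity of $\ell$—but the two executions differ in one instructive respect. You retract by a factor $1-c_0/\sqrt{n}$, creating an inward margin of order $n^{-3/2}$, and then use a naive nearest-grid rounding whose cross-term perturbation in the quadratic form is also $O(n^{-3/2})$; choosing $c_0$ large enough to beat the implicit constants finishes the job. The paper instead retracts by only $1-\bar\alpha$ with $\bar\alpha=\Theta(1/n)$, which by itself would give a margin of only $\Theta(1/n^{2})$—too small to absorb an arbitrary $O(n^{-3/2})$ cross term. To compensate, the paper rounds each coordinate of $n\bar\Gamma^*_i$ via a sign rule (floor if $\langle \bm{A}_{D,\mathbf{P}}\mathbf{x}^*,\bm{e}_i\rangle>0$, ceiling otherwise) that forces the cross term $2\bm{\bar\delta}^{\mathsf T}\bm{A}_{D,\mathbf{P}}\mathbf{\bar y}$ to be nonpositive, leaving only the $O(1/n^{2})$ quadratic term to absorb. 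Your approach trades this directional-rounding trick for a coarser retraction; it is more elementary and avoids the case analysis, while the paper's version is sharper in the retraction scale (displacement $\Theta(n^{-3/2})$ versus your $\Theta(1/n)$), though both yield the same $O(1/n)$ total displacement after rounding and hence the same constant bound in \eqref{eq:hfunbound}.
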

\begin{IEEEproof} See Appendix~\ref{lemmatype}.
\end{IEEEproof}

We next lower-bound \eqref{eq:sec3c} using \eqref{eq:tylq2c} and Lemma~\ref{type}. Indeed, we have that
\begin{IEEEeqnarray}{lCl}
	\beta_{n}(\mc_{n}^{D}(r))  &\geq&  \sum_{\tilde{P} \in \mathcal{P}_{n} \cap \bar{\mathcal{B}}_{\bm{A}_{D,\mathbf{P}}} \left(  \bar{r}_{n}\right) }  Q^{n}(\tc(\tilde{P})) \nonumber \\
	& \geq &  Q^{n}(\tc(\tp^{*}_{n})) \nonumber \\
	& \geq&  \frac{1}{(n+1)^{|\mathcal{Z}|}} \exp \{-n D_{\textnormal{KL}}(\tp^{*}_{n} \| Q) \} \label{eq:prtyclc} 
\end{IEEEeqnarray}
for $n \geq \tilde{N}_{0} \triangleq \max \{N_{1}', N_{2}',\tilde{N}\}$ and some type distribution
$\tp^{*}_{n}  \in \mathcal{P}_{n} \cap  \bar{\mathcal{B}}_{\bm{A}_{D,\mathbf{P}}} \left(  \bar{r}_{n}\right)$ satisfying \eqref{eq:hfunbound}. The last inequality in 
\eqref{eq:prtyclc} follows from \cite[Th.~11.1.4]{ATCB}.

To further lower-bound $\beta_{n}(\mc_{n}^{D}(r))$, we use \eqref{eq:tylq2c} and \eqref{eq:hfunbound}, and we recall that
\begin{equation*}
    \ell(\Gamma^{*}) =- \sqrt{\bar{r}_{n}} \sqrt{\mathbf{c}^{\mathsf{T}} \bm{A}_{D,\mathbf{P}}^{-1} \mathbf{c}}    
\end{equation*}
to upper-bound $n D _{\textnormal{KL}}(\tp^{*}_{n} \| Q)$ as
\begin{IEEEeqnarray}{lCl}
	n D _{\textnormal{KL}}(\tp^{*}_{n} \|  Q)  & \leq &  nD_{\textnormal{KL}}(P \| Q) + n\ell(\tp^{*}_{n}) + \frac{1}{2} n d_{\chi^{2}} (\tp^{*}_{n}, P) + \frac{M'_{2}}{\sqrt{n}} \nonumber \\
	& \leq &  nD_{\textnormal{KL}}(P \| Q) + n \ell(\Gamma^{*}) +\kappa  + \frac{1}{2} n d_{\chi^{2}} (\tp^{*}_{n}, P) + \frac{M'_{2}}{\sqrt{n}}  \nonumber \\
	& \leq & nD_{\textnormal{KL}}(P \| Q) - n \sqrt{\bar{r}_{n}} \sqrt{\mathbf{c}^{\mathsf{T}} \bm{A}_{D,\mathbf{P}}^{-1} \mathbf{c}	}  + \kappa + \frac{1}{2} n d_{\chi^{2}} (\tp^{*}_{n}, P) + \frac{M'_{2}}{\sqrt{n}}, \quad n \geq \tilde{N}_{0}. \label{eq:sec2}
\end{IEEEeqnarray}
By the Rayleigh-Ritz theorem \cite[Th.~4.2.2]{RJ90}, the $\chi^2$-divergence, defined in \eqref{eq:chisddef}, can be upper-bounded as 
\begin{equation}
 d_{\chi^{2}} (T, P)  \leq \lambda'_{\textnormal{max}} 	\|\mathbf{T}-\mathbf{P}\|^{2}_{2} \label{eq:eignbounds}
\end{equation}
where $\lambda'_{\textnormal{max}} >0$ is the maximum eigenvalue of the positive-definite matrix $\bm{\Sigma}_{\mathbf{P}}$. Since $\tp_{n}^{*} \in \bar{\mathcal{B}}_{\bm{A}_{D,\mathbf{P}}} \left(  \bar{r}_{n}\right)$, we have from \eqref{eq:orderd} that
$	\|\mathbf{\tp}_{n}^{*} -\mathbf{P}\|_{2} =O(1/\sqrt{n})$.  It follows that $n d_{\chi^{2}} (\tp^{*}_{n}, P)=O(1)$, i.e., there exist $\check{M}>0$ and $\check{N} \in \mathbb{N}$ such that  
\begin{equation}
n d_{\chi^{2}} (\tp^{*}_{n}, P) \leq \check{M},\quad n \geq \check{N}. \label{eq:chsbounds}
\end{equation}
Together with \eqref{eq:sec2}, this yields
\begin{equation}
	n D _{\textnormal{KL}}(\tp^{*}_{n} \|  Q)  \leq 
nD_{\textnormal{KL}}(P \| Q) - n \sqrt{\bar{r}_{n}} \sqrt{\mathbf{c}^{\mathsf{T}} \bm{A}_{D,\mathbf{P}}^{-1} \mathbf{c}}  + \kappa + \frac{\check{M}}{2}  + \frac{M'_{2}}{\sqrt{n}} \label{eq:secaa}
\end{equation}
for $n \geq \tilde{N}_{1} \triangleq \max \left\lbrace \check{N}, \tilde{N}_{0} \right\rbrace$. Applying \eqref{eq:secaa} to \eqref{eq:prtyclc}, and taking logarithms on both sides of the inequality, we obtain that
\begin{equation}
	\ln \beta_{n}(\mc_{n}^{D}(r))  \geq -|\mathcal{Z}| \ln (n+1)  -nD_{\textnormal{KL}}(P \| Q) + n \sqrt{\bar{r}_{n}} \sqrt{\mathbf{c}^{\mathsf{T}} \bm{A}_{D,\mathbf{P}}^{-1} \mathbf{c}}- \kappa -  \frac{\check{M}}{2} - \frac{M'_{2}}{\sqrt{n}}.  \label{eq:sec5}
\end{equation}
Note that, by \eqref{eq:sec6} and \eqref{eq:rbar},  
\begin{equation}
	\bar{r}_{n}
	= \frac{1}{n}  \mathsf{Q}^{-1}_{\chi^{2}_{\bm{\lambda},k-1}}(\epsilon) +  O \left( \frac{\delta_{n}}{n}  \right) -\frac{M'_{1}}{ n^{3/2}}.  \label{eq:sec8c}
\end{equation}
A Taylor-series expansion of $\sqrt{\bar{r}_n}$ around $\frac{1}{n}  \mathsf{Q}^{-1}_{\chi^{2}_{\bm{\lambda},k-1}}(\epsilon)$ then gives
\begin{equation}
	\sqrt{\bar{r}_{n}} = \frac{1}{\sqrt{n}}  \sqrt{\mathsf{Q}^{-1}_{\chi^{2}_{\bm{\lambda},k-1}}(\epsilon)} + O\left(\frac{\delta_n}{\sqrt{n}}\right) +  O \left( \frac{1}{n} \right).   \label{eq:sec8}
\end{equation}
Applying \eqref{eq:sec8} to \eqref{eq:sec5}, and collecting terms that grow at most as fast as $\max\{ \delta_{n} \sqrt{n}, \ln n\}$, we thus obtain 
\begin{equation}
    \ln \beta_{n}(\mc_{n}^{D}(r)) \geq  
	-nD_{\textnormal{KL}}(P \| Q) + \sqrt{n}\sqrt{\mathbf{c}^{\mathsf{T}} \bm{A}_{D,\mathbf{P}}^{-1} \mathbf{c}}  \sqrt{   \mathsf{Q}^{-1}_{\chi^{2}_{\bm{\lambda},k-1}}(\epsilon)}  +O( \max\{ \delta_{n} \sqrt{n}, \ln n\}).
\end{equation}
This proves Part 2 of Theorem~\ref{divergence}.

\section{Summary and Conclusions}
\label{sec:conclusion}

We considered a binary hypothesis testing problem where the hypothesis test has access to the null hypothesis $P$ but not to the alternative hypothesis $Q$. A suitable test for this case is  the Hoeffding test, which accepts  $P$ if $D_{\textnormal{KL}}(\tp_{Z^{n}} \|P) < r$,  for some $r>0$, and otherwise accepts $Q$. We characterized the second-order asymptotic behavior of the type-II error $\beta_n$ of this test in Stein's regime, where the type-I error $\alpha_n$ is required to be bounded by some $\epsilon$. We further proposed a generalization of the Hoeffding test, termed divergence test, for which the KL divergence is replaced by some other (arbitrary) divergence, and characterized its second-order asymptotic behavior of the type-II error $\beta_n$ in Stein's regime. 

Our results show that, irrespective of the choice of divergence, the divergence test achieves the same first-order term as the Neyman-Pearson test. Thus, like the Hoeffding test, every divergence test is first-order optimal. However, the second-order term of the divergence test, and hence also of the Hoeffding test, is strictly smaller than the second-order term of the Neyman-Pearson test and scales unfavorably with the cardinality of $P$ and $Q$. Consequently, lack of knowledge of the alternative hypothesis $Q$ negatively impacts the second-order term.

We further demonstrated that the divergence test achieves the same second-order term as the Hoeffding test whenever the divergence belongs to the class of invariant divergences. The class of invariant divergences is large and includes the Rényi divergence and the $f$-divergences (and, hence, also the KL divergence). In contrast, when the divergence is non-invariant, the second-order term achieved by the divergence test differs from the second-order term achieved by the Hoeffding test and may be strictly larger for some alternative hypotheses $Q$. Potentially, this behavior could be exploited by a composite hypothesis test with partial knowledge of the alternative hypothesis $Q$ by tailoring the divergence of the divergence test to the set of possible alternative hypotheses $\mathcal{Q}$. 

\appendices

\section{Proof of Lemma~\ref{convgdiv}} \label{proofconvgdiv}
For an i.i.d.\ sequence $Z^{n}$ with distribution $ P \in \mathcal{P}(\mathcal{Z})$ taking value in the set $\mathcal{Z}=\{a_1,\ldots,a_k\}$, consider the $(k-1)$-dimensional random vector $\mathbf{\tp}_{Z^{n}}=(\tp_{Z^{n}}(a_{1}), \ldots, \tp_{Z^{n}}(a_{k-1}) )^{\mathsf{T}}$, and let
\begin{equation}
	\mathbf{X}_i=  \left( \mathbf{1}\{Z_i= a_{1} \}, \ldots, \mathbf{1}\{Z_i= a_{k-1} \} \right)^{\mathsf{T}}. \label{eq:rviid}
\end{equation}
The expectation vector and covariance matrix of  $\mathbf{X}_i$ are given by 
\begin{IEEEeqnarray}{rCl}
	E[\mathbf{X}_i] &=& \mathbf{P} \\
	\textnormal{Cov}(\mathbf{X}_i) &=&\bm{\Sigma}^{-1}_{\mathbf{P}}
\end{IEEEeqnarray}  
where $\bm{\Sigma}^{-1}_{\mathbf{P}}$ is the inverse of the matrix $\bm{\Sigma}_{\mathbf{P}}$ and has entries
\begin{equation}
	\bm{\Sigma}^{-1}_{\mathbf{P}}(i,j) =
	\begin{cases}
		P_{i}(1-P_{i}), \quad &  \text{for $i=j$} \\
		-P_{i}P_{j}, \quad &\text{for $i\neq j$.}
	\end{cases} \label{eq:covsigma}
\end{equation}
Since the sequence $\mathbf{X}_{1}, \ldots, \mathbf{X}_{n} $ is i.i.d., it follows from the central limit theorem \cite[Sec.~VIII.4]{feller71} that 
\begin{equation}
	\sqrt{n}\left(  \frac{1}{n} \sum_{i=1}^{n}{\mathbf{X}_{i}} -\mathbf{P} \right) \xrightarrow{d}\mathbf{V} \sim \mathcal{N}(\bm{0}, \bm{\Sigma}^{-1}_{\mathbf{P}}), \quad \textnormal{as $n\to\infty$.}
\end{equation}
Noting that $\frac{1}{n}\sum_{i=1}^{n}{\mathbf{X}_{i}}=\mathbf{\tp}_{Z^{n}}$, we  thus get that
\begin{equation}
	\mathbf{V}_{n}=	\sqrt{n}\left(  \mathbf{\tp}_{Z^{n}} -\mathbf{P} \right) \xrightarrow{d} \mathcal{N}(\bm{0}, \bm{\Sigma}^{-1}_{\mathbf{P}}) \quad \textnormal{as $n\to\infty$.} \label{eq:typecd}
\end{equation}
Together with  \cite[Lemma~5.3]{AAN209}, this implies that $\mathbf{V}_{n}=O_{P}(1)$, so
\begin{equation}
	\mathbf{\tp}_{Z^{n}}= \mathbf{P} + O_{P} \left( \frac{1}{\sqrt{n}}\right). 	\label{eq:ortype1} 
\end{equation}

To characterize the convergence of $D(\tp_{Z^{n}}\| P)$, we need the following proposition.
\begin{proposition}\label{rvprop}
Let $\{\mathbf{X}_{n}\}$ be a sequence of random vectors in $\mathbb{R}^{k-1}$, and let $\mathbf{a} \in \mathbb{R}^{k-1}$ be a constant vector, such that  
\begin{equation}
	\mathbf{X}_{n}=\mathbf{a}+ O_{P}(r_{n})
\end{equation}
 for some positive sequence $\{r_n\}$ satisfying $r_n\to 0$ as $n\to\infty$. Further let $g \colon\mathbb{R}^{k-1} \to \mathbb{R}$ be a twice continuously-differentiable function in some neighborhood $\Omega$ of $\mathbf{a}$. Then, we have
	\begin{equation}
		g(	\mathbf{X}_{n} ) = g(\mathbf{a} )+ \sum_{\ell=1}^{k-1}\frac{\partial g}{\partial x_{\ell}} (\mathbf{a})\bigl(\mathbf{X}_{n}(\ell) -a_{\ell}\bigr) +  \frac{1}{2}\sum_{\ell,m=1}^{k-1} \frac{\partial^{2} g}{\partial x_{\ell} x_{m}} (\mathbf{a}) \bigl(\mathbf{X}_{n}(\ell) -a_{\ell}\bigr)\bigl(\mathbf{X}_{n}(m) -a_{m}\bigr)+o_{P}(r_{n}^{2})
		\label{eq:ts4a}
	\end{equation}
	where $\mathbf{X}_{n}(\ell)$ is the $\ell$-th component of $\mathbf{X}_{n}$ and $a_{\ell}$ is the $\ell$-th component of $\mathbf{a}$.
\end{proposition}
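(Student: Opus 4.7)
The plan is to reduce the stochastic statement to a deterministic Taylor expansion with remainder, and then show that the remainder, when evaluated at the random argument $\mathbf{X}_n$, is of order $o_P(r_n^2)$. Since $g$ is twice continuously differentiable on a neighborhood $\Omega$ of $\mathbf{a}$, Taylor's theorem for multivariate functions yields, for every $\mathbf{x} \in \Omega$,
\begin{equation}
g(\mathbf{x}) = g(\mathbf{a}) + \sum_{\ell=1}^{k-1} \frac{\partial g}{\partial x_{\ell}}(\mathbf{a})\, (x_{\ell}-a_{\ell}) + \frac{1}{2}\sum_{\ell,m=1}^{k-1} \frac{\partial^2 g}{\partial x_{\ell} \partial x_{m}}(\mathbf{a})\, (x_{\ell}-a_{\ell})(x_{m}-a_{m}) + R(\mathbf{x}),
\end{equation}
where the remainder $R(\mathbf{x})$ satisfies $R(\mathbf{x}) = o(\|\mathbf{x}-\mathbf{a}\|_2^2)$ as $\mathbf{x}\to\mathbf{a}$. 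This is a standard consequence of the continuity of the second partial derivatives; it can be obtained by writing the Lagrange remainder with $H_g(\boldsymbol{\xi})$ for some $\boldsymbol{\xi}$ on the segment from $\mathbf{a}$ to $\mathbf{x}$, and then adding and subtracting $\frac{1}{2}(\mathbf{x}-\mathbf{a})^{\mathsf{T}} H_g(\mathbf{a})(\mathbf{x}-\mathbf{a})$, the remaining term being bounded by $\frac{1}{2}\|H_g(\boldsymbol{\xi})-H_g(\mathbf{a})\|\,\|\mathbf{x}-\mathbf{a}\|_2^2$ which tends to $0$ by continuity of $H_g$ at $\mathbf{a}$.

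Substituting $\mathbf{X}_n$ into this expansion, the deterministic linear and quadratic terms match the right-hand side of \eqref{eq:ts4a}, so it remains to show $R(\mathbf{X}_n) = o_P(r_n^2)$. Fix $\epsilon>0$ and $\eta>0$. By the hypothesis $\mathbf{X}_n = \mathbf{a} + O_P(r_n)$, there exist $M_\eta>0$ and $N_1\in\mathbb{N}$ such that
\begin{equation}
P\bigl(\|\mathbf{X}_n - \mathbf{a}\|_2 > M_\eta\, r_n\bigr) \leq \eta, \quad n \geq N_1.
\end{equation}
Since $R(\mathbf{x})/\|\mathbf{x}-\mathbf{a}\|_2^2 \to 0$ as $\mathbf{x}\to\mathbf{a}$, there exists $\delta>0$ (with the ball of radius $\delta$ contained in $\Omega$) such that $|R(\mathbf{x})| \leq (\epsilon/M_\eta^2)\|\mathbf{x}-\mathbf{a}\|_2^2$ whenever $\|\mathbf{x}-\mathbf{a}\|_2 \leq \delta$. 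Since $r_n\to 0$, choose $N_2$ with $M_\eta r_n \leq \delta$ for $n \geq N_2$. Then on the event $\{\|\mathbf{X}_n - \mathbf{a}\|_2 \leq M_\eta r_n\}$, we have $|R(\mathbf{X}_n)| \leq (\epsilon/M_\eta^2)(M_\eta r_n)^2 = \epsilon\, r_n^2$, and therefore
\begin{equation}
P\bigl(|R(\mathbf{X}_n)|/r_n^2 > \epsilon\bigr) \leq P\bigl(\|\mathbf{X}_n - \mathbf{a}\|_2 > M_\eta\, r_n\bigr) \leq \eta, \quad n \geq \max\{N_1,N_2\}.
\end{equation}
As $\epsilon$ and $\eta$ are arbitrary, this establishes $R(\mathbf{X}_n)=o_P(r_n^2)$ and completes the proof.

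No real obstacle is expected here; the argument is essentially bookkeeping that converts a deterministic $o(\|\mathbf{x}-\mathbf{a}\|_2^2)$ remainder into an $o_P(r_n^2)$ remainder via the boundedness in probability assumption. The only subtlety is making sure that the uniform bound on $|R(\mathbf{x})|$ on a shrinking deterministic neighborhood is matched against the event of high probability on which $\mathbf{X}_n$ lies in that neighborhood; this is handled by choosing $M_\eta$ first (via $O_P$) and only then picking the Taylor-remainder tolerance in terms of $M_\eta^2$, which is the standard pattern for converting $O_P/o_P$ statements through smooth functions.
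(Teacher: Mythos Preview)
Your proof is correct and takes essentially the same approach as the paper: the paper's own proof is a one-line reference stating that the argument ``follows along the same line as that of \cite[Prop.~6.1.1]{BDB91} by replacing the linear Taylor approximation of the function $g$ at $\mathbf{a}$ by the quadratic approximation,'' and what you have written is precisely a self-contained execution of that recipe---a deterministic second-order Taylor expansion with Peano remainder, followed by the standard $O_P\to o_P$ conversion.
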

\begin{IEEEproof}
 The proof follows along the same line as that of \cite[Prop.~6.1.1]{BDB91} by replacing the linear Taylor approximation of the function $g$ at $\mathbf{a}$ by the quadratic approximation.
 \end{IEEEproof}

Since $\bm{T}\mapsto D(\bm{T} \| \bm{P})$ is a smooth function of $\bm{T}$, and since \eqref{eq:ortype1} holds, it follows from Proposition~\ref{rvprop} that
\begin{IEEEeqnarray}{lCl}
 D( \tp_{Z^{n}}\| P)   &=& \left( \mathbf{\tp}_{Z^{n}}-\mathbf{P}\right) ^{\mathsf{T}}  \bm{A}_{D,\mathbf{P}} \left( \mathbf{\tp}_{Z^{n}}-\mathbf{P} \right) +o_{P}\left( \frac{1}{n}\right) \notag\\
	&= & \frac{1}{n} \mathbf{V}_{n}^{\mathsf{T}}  \bm{A}_{D,\mathbf{P}} \mathbf{V}_{n}+o_{P}\left( \frac{1}{n}\right) \label{eq:divstat}
\end{IEEEeqnarray}
where the last equality follows from \eqref{eq:typecd}. 

Define the quadratic form 
\begin{eqnarray}
	q_{ \bm{A}_{D,\mathbf{P}}}(\mathbf{v})\overset{\Delta}{=}  \mathbf{v}^{\mathsf{T}}  \bm{A}_{D,\mathbf{P}} \mathbf{v}, \quad \mathbf{v} \in \mathbb{R}^{k-1}.
\end{eqnarray}
Note that $\mathbf{v} \mapsto q_{ \bm{A}_{D,\mathbf{P}}}(\mathbf{v})$ is a continuous function of $\mathbf{v}$. It thus follows from \eqref{eq:typecd} and \cite[Prop.~6.3.4]{BDB91} that 
\begin{equation}
	q_{ \bm{A}_{D,\mathbf{P}}}(\mathbf{V}_{n})  \xrightarrow{d}  q_{ \bm{A}_{D,\mathbf{P}}}(	\mathbf{V}). \label{eq:comt}
\end{equation}

We next demonstrate that $q_{ \bm{A}_{D,\mathbf{P}}}(\mathbf{V})$ has a generalized chi-square distribution. Indeed, let the unique positive square-root of $\bm{\Sigma}^{-1}_{\mathbf{P}}$ be denoted  by $\bm{\Sigma}_{\mathbf{P}}^{-1/2}$, and note that, since $\bm{\Sigma}_{\mathbf{P}}^{-1/2}  \bm{A}_{D,\mathbf{P}} \bm{\Sigma}_{\mathbf{P}}^{-1/2} $ is positive-definite, there exists an orthogonal matrix $S$ satisfying 
\begin{equation}
	\bm{\Sigma}_{\mathbf{P}}^{-1/2}  \bm{A}_{D,\mathbf{P}} \bm{\Sigma}_{\mathbf{P}}^{-1/2}=S  \Lambda S^{\mathsf{T}}
\end{equation}
where $\Lambda$ is the diagonal matrix consisting of the eigenvalues of $\bm{\Sigma}_{\mathbf{P}}^{-1/2}  \bm{A}_{D,\mathbf{P}} \bm{\Sigma}_{\mathbf{P}}^{-1/2}$.  We can therefore transform the Gaussian vector $\mathbf{V}$ into a standard Gaussian vector $\mathbf{U} =S^{\mathsf{T}} \bm{\Sigma}_{\mathbf{P}}^{1/2} \mathbf{V}$ and express the quadratic form  $q_{ \bm{A}_{D,\mathbf{P}}}(	\mathbf{V})$ as
\begin{equation}
q_{ \bm{A}_{D,\mathbf{P}}}(	\mathbf{V})= \mathbf{U}^{\mathsf{T}} \Lambda \mathbf{U}.  \label{eq:eigenquad}
\end{equation}
Denoting the eigenvalues of $\bm{\Sigma}_{\mathbf{P}}^{-1/2}  \bm{A}_{D,\mathbf{P}} \bm{\Sigma}_{\mathbf{P}}^{-1/2}$  by  $\lambda_{i}$, $i=1, \ldots,k-1$, this, in turn, can be written as
\begin{equation}
	q_{ \bm{A}_{D,\mathbf{P}}}(	\mathbf{V})= \sum_{i=1}^{k-1} \lambda_{i} U_{i}^{2}
\end{equation}
where $U_{1}, \ldots, U_{k-1}$ are independent standard normal random variables. Thus, taking $\Upsilon_{i}=U_{i}^{2}$, $i=1,\ldots,k-1$, it follows from Definition~\ref{genchisrv} that
\begin{equation}
    q_{ \bm{A}_{D,\mathbf{P}}}(	\mathbf{V})= \sum_{i=1}^{k-1} \lambda_{i} \Upsilon_{i}
\end{equation}
has a generalized chi-square distribution with vector parameter $\bm{\lambda}=(\lambda_{1},\ldots,\lambda_{k-1})^{\mathsf{T}}$ and degrees of freedom $k-1$. We therefore obtain from \eqref{eq:divstat}, \eqref{eq:comt}, and  \cite[Prop.~6.3.3]{BDB91} that
\begin{equation}
	n D( \tp_{Z^{n}}\| P)
	=  q_{ \bm{A}_{D,\mathbf{P}}}(\mathbf{V}_{n}) +o_{P}\left( 1 \right)
\end{equation}
converges in distribution to the generalized chi-square random variable $q_{ \bm{A}_{D,\mathbf{P}}}(\mathbf{V})$. 

 Let $\mathsf{F}_{n}(\cdot)$ denote the cumulative distribution function (CDF) of $n D( \tp_{Z^{n}}\| P)$, and let $ \mathsf{F}_{\chi^{2}_{\bm{\lambda},k-1}}(\cdot)$ denote the CDF of the generalized chi-square distribution with vector parameter $\bm{\lambda}=(\lambda_{1},\ldots,\lambda_{k-1})^{\mathsf{T}}$ and degrees of freedom $k-1$. Since the CDF $ \mathsf{F}_{\chi^{2}_{\bm{\lambda},k-1}}(\cdot)$ is continuous, convergence in distribution of the statistic $n D( \tp_{Z^{n}}\| P)$ to the generalized chi-square random variable $\chi^2_{\bm{\lambda},k-1}$ implies that $\mathsf{F}_{n}(\cdot)$ converges to $ \mathsf{F}_{\chi^{2}_{\bm{\lambda},k-1}}(\cdot)$ pointwise. It then follows by \cite[Th.~7.6.2]{RC199} that this convergence is, in fact, uniform. \comment{Thus, for every $\varepsilon>0$, there exists an $N_{\epsilon} \in \mathbb{N}$ such that, for all $n \geq N_{\varepsilon}$,
	\begin{equation}
		\sup_{c > 0} 	\left| 	\mathsf{F}_{n}(c)- 	 \mathsf{F}_{\chi^{2}_{\bm{\lambda},k-1}}(c) \right| \leq \varepsilon. \label{eq:ratediv2d2}
	\end{equation}}
	This implies that the tail probability of the statistic $ n	D(\tp_{Z^{n}} \| P)$ can be approximated as 
	\begin{equation}
		P^{n}(n	D (\tp_{Z^{n}} \| P) \geq c)=  \mathsf{Q}_{\chi^{2}_{\bm{\lambda},k-1}}(c) + O(\delta_{n}) \label{eq:ratediv2d}
	\end{equation}
	for all $c >0$ and some positive sequence  $\delta_{n}$ that is independent of $c$ and satisfies  $\lim_{n \rightarrow \infty} \delta_{n}=0$. This proves Lemma~\ref{convgdiv}.

\section{Proof of Lemma~\ref{pinskercon} } \label{proofpinsker}
Recall that $f_R(\mathbf{T})\triangleq D(T\|R)$. We shall first show that, for every sequence $\{\mathbf{T}_{n}\}$ in $\bd$ such that $f_{\br}(\mathbf{T}_{n}) \rightarrow 0$  as $n \rightarrow \infty$, we have $\mathbf{T}_{n}\rightarrow \mathbf{\br}$  as $n \rightarrow \infty$.
Note that, by the definition in \eqref{eq:domaindef}, the set $\bd$  is a bounded set in $\mathbb{R}^{k-1}$. Consequently, the closure of $\bd$, denoted by  $\overline{\bd}$,  is a compact set in $\mathbb{R}^{k-1}$. It follows from the Bolzano-Weierstrass theorem that $\{\mathbf{T}_{n}\}$ has a converging subsequence $\{\mathbf{T}_{n_i}\}$, and we denote the limit of this subsequence by $\mathbf{Q}' \in \overline{\bd}$.  If $\mathbf{Q}'$ lies on the boundary $\partial \bd$ then, by \eqref{eq:divdefexcon},
\begin{equation}
\liminf_{i  \rightarrow \infty}	f_{\br}(\mathbf{T}_{n_i})>0.  \label{eq:divdefexcon1}
\end{equation}
 Similarly, if $\mathbf{Q}' \in  \bd $, then
\begin{equation}
\lim_{i  \rightarrow \infty}	f_{\br}(\mathbf{T}_{n_i}) = f_{\br}(\mathbf{Q}') \geq 0  \label{eq:divdefexcon2}
\end{equation}
with equality if, and only if, $\mathbf{Q}' = \mathbf{\br}$. Here, the limit exists and is equal to $f_{\br}(\mathbf{Q}')$ because $f_{\br}(\cdot)$ is continuous on $\bd$. Since  $f_{\br}(\mathbf{T}_{n}) \to 0$ as $n \to \infty$ by assumption, \eqref{eq:divdefexcon1}--\eqref{eq:divdefexcon2} imply that that every convergent subsequence of $\{\mathbf{T}_{n}\}$ must converge to $\mathbf{\br}$. 
However, if every converging subsequence of $\{\mathbf{T}_{n}\}$ converges to $\mathbf{\br}$ as $n \to \infty$, then $\{\mathbf{T}_{n}\}$ also converges to $\mathbf{\br}$ \cite[Th.~3.4.9]{BS200}. Hence, $ D(T_n \| \br) \rightarrow 0$ implies that $\| \mathbf{T}_n-\mathbf{\br} \|_{2} \rightarrow 0$.

We next show that, if  $f_{\br}(\mathbf{T}_n)=O(b_{n})$  for a strictly positive sequence $\{b_n\}$ satisfying $\lim_{n \rightarrow \infty} b_{n}=0$, then $\|\mathbf{T}_n-\mathbf{\br}\|_{2} = O(\sqrt{b_n})$. By the above argument, if $f_{\br}(\mathbf{T}_n)=O(b_{n})$ for $b_n \to 0$, then $\|\mathbf{T}_n-\mathbf{\br}\|_{2} \rightarrow 0$. It follows from \eqref{eq:tayldivA} that there exist $\delta>0$ and  $M>0$ such that  
\begin{equation}
	|f_{\br}(\mathbf{T}_n) -(\mathbf{T}_n-\mathbf{\br})^{\mathsf{T}} \bm{A}_{D,\mathbf{\br}} ( \mathbf{T}_n-\mathbf{\br}) |  \leq M \| \mathbf{T}_n-\mathbf{\br} \|_{2}^{3}  \label{eq:tayldivA1}
\end{equation}
whenever $0<\| \mathbf{T}_n-\mathbf{\br} \|_{2} < \delta$. Recall that, by the Rayleigh-Ritz theorem \cite[Th.~4.2.2]{RJ90}, we have
\begin{equation}
 (\mathbf{T}_n-\mathbf{\br})^{\mathsf{T}} \bm{A}_{D,\mathbf{\br}} ( \mathbf{T}_n-\mathbf{\br}) \geq \tilde{\lambda}_{\textnormal{min}} 	\|\mathbf{T}_n-\mathbf{\br}\|^{2}_{2} \label{eq:rrt}
\end{equation}
where $\tilde{\lambda}_{\textnormal{min}} >0$ is the minimum eigenvalue of  $\bm{A}_{D,\mathbf{\br}} $. Combining \eqref{eq:tayldivA1} and \eqref{eq:rrt} then yields 
\begin{equation}
 f_{\br}(\mathbf{T}_n) \geq \tilde{\lambda}_{\textnormal{min}} \|\mathbf{T}_n-\mathbf{\br}\|^{2}_{2} - M \| \mathbf{T}_n-\mathbf{\br} \|_{2}^{3} \geq \|\mathbf{T}_n-\mathbf{\br}\|^{2}_{2} \left(\tilde{\lambda}_{\textnormal{min}} - M\delta\right) \label{eq:rrtc}
\end{equation}
for $0<\| \mathbf{T}_n-\mathbf{\br} \|_{2} < \delta$. Without loss of generality, we can assume that $\delta\leq \tilde{\lambda}_{\textnormal{min}}/(2M)$, since a smaller $\delta$ in \eqref{eq:tayldivA1} is less restrictive. Consequently, $(\tilde{\lambda}_{\textnormal{min}} - M\delta)>0$ and we conclude from \eqref{eq:rrtc} that $f_{\br}(\mathbf{T}_n)=O(b_n)$ implies that $\|\mathbf{T}_n-\mathbf{\br} \|_{2} =O(\sqrt{b_{n}})$. This proves Lemma~\ref{pinskercon}.
\section{Proof of Lemma~\ref{ctaylor}} \label{claimtaylor}
For the probability distributions $T=(T_{1},\ldots, T_{k})^{\mathsf{T}}$ and $Q \in \mathcal{P}(\mathcal{Z})$,  the function
\begin{equation}
	f_{Q}(\mathbf{T}) \triangleq D_{\textnormal{KL}}(T \| Q)
	= \sum_{i=1}^{k-1} T_{i}   \ln    \left( \frac{T_{i}}{ Q_{i}} \right) + \left( 1-\sum_{i=1}^{k-1} T_{i}\right)   \ln    \left( \frac{(1-\sum_{i=1}^{k-1} T_{i})}{ Q_{k}}\right) \label{eq:ts0}
\end{equation}
is a smooth function   of $\mathbf{T}=(T_{1}, \ldots, T_{k-1})^{\mathsf{T}}$. The  partial derivatives of  $f_{Q}(\cdot)$ with respect to $T_{i}, i=1,\ldots, k-1$ up to third orders  are given by
\begin{IEEEeqnarray}{lCl}
	\frac{\partial f_{Q}(\mathbf{T})}{\partial T_{i}}  &=& \ln \left( \frac{T_{i}}{ Q_{i}} \right) -\ln \left( \frac{T_{k}}{ Q_{k}} \right)  \label{eq:ts1}\\
	\frac{\partial^{2} f_{Q}(\mathbf{T})}{\partial T_{i} T_{j}}  &=& 
	\begin{cases}
		\frac{1}{T_{i}} +\frac{1}{T_{k}},  \quad	& \text{if}  \quad i=j\\
		\frac{1}{T_{k}},  \quad 	&   \text{if}  \quad  i \neq j
	\end{cases}
	\label{eq:ts2}\\
	\frac{\partial^{3} f_{Q}(\mathbf{T})}{\partial T_{i} T_{j} T_{\ell}}  &=& 
	\begin{cases}
		\frac{-1}{T_{i}^{2}} +\frac{1}{T_{k}^{2}},  \quad	& \text{if}  \quad i=j=\ell\\
		\frac{1}{T_{k}^{2}},  \quad 	&     \text{otherwise.}
	\end{cases}
	\label{eq:ts3}
\end{IEEEeqnarray}
Consequently,  the second-order Taylor approximation of  $f_{Q}(\mathbf{T})$  around $\mathbf{P}$ is given by
\begin{equation}
	f_{Q}(\mathbf{T}) =  f_{Q}(\mathbf{P})+ \sum_{i=1}^{k-1}\frac{\partial f_{Q}(\mathbf{P})}{\partial T_{i}} (T_{i} -P_{i}) +  \frac{1}{2}\sum_{i,j=1}^{k-1} \frac{\partial^{2} f_{Q}(\mathbf{P})}{\partial T_{i} T_{j}} (T_{i} -P_{i})(T_{j} -P_{j}) +R_{3}(\mathbf{T})
	\label{eq:ts4}
\end{equation}
where the remainder term $R_{3}(\mathbf{T})$  in  Lagrange's form is given  by
\begin{equation}
	R_{3}(\mathbf{T})= \frac{1}{3!} \sum_{i,j,\ell=1}^{k-1} \frac{\partial^{3} f_{Q}(\bm{\Pi})}{\partial T_{i} T_{j} T_{\ell}}  (T_{i} -P_{i})(T_{j} -P_{j})(T_{\ell} -P_{\ell})
	\label{eq:ts5}
\end{equation}
for $\mathbf{\Pi}=\mathbf{P}+t(\mathbf{T} - \mathbf{P})$ and some $t \in (0,1)$. Using the triangle inequality, the absolute value of this term can be upper-bounded as
\begin{equation}
	|R_{3}(\mathbf{T})| \leq  \frac{1}{3!} \sum_{i,j,\ell=1}^{k-1} \left|  \frac{\partial^{3} f_{Q}(\mathbf{\Pi})}{\partial T_{i} T_{j} T_{\ell}}  \right|  \left| (T_{i} -P_{i})(T_{j} -P_{j})(T_{\ell} -P_{\ell}) \right|.
	\label{eq:ts5w2}
\end{equation}

When $\|\mathbf{T} -\mathbf{P} \|_{2} \rightarrow 0$, we have that $P_{i}-\delta<T_{i}$ for an arbitrary $\delta>0$ and every $i=1,\ldots, k-1$. This implies that  $P_{k}-(k-1) \delta <T_{k}$. Choosing
\begin{equation}
    \delta\triangleq\frac{1}{4(k-1)} \min_{1\leq i \leq k} P_{i}
\end{equation}
it follows that $P_{i}-\delta>0, i=1,\ldots, k-1$ and $P_{k}-(k-1) \delta >0$. Consequently, 
\begin{equation}
	\frac{1}{T_{i}^{2}} < \frac{1}{(P_{i}-\delta)^{2}}, \quad  i=1,\ldots, k-1  \label{eq:lbty}
\end{equation}
and 
\begin{equation}
	\frac{1}{T_{k}^{2}} < \frac{1}{(P_{k}-(k-1) \delta)^{2}}. \label{eq:lbtya}
\end{equation}
We then obtain that, for every $\mathbf{T}$ satisfying $ 0 < \|\mathbf{T} -\mathbf{P}\|_{2} < \delta$,
\begin{equation}
	\left| 	\frac{\partial^{3} f_{Q}(\mathbf{T})}{\partial T_{i} T_{j} T_{l}} \right| \leq  C_{1}
\end{equation}
where
\begin{equation}
	C_{1}\triangleq \max_{i}  \frac{1}{(P_{i}-\delta)^{2}} +\frac{1}{(P_{k}-(k-1) \delta)^{2}}.
\end{equation}
Hence, \eqref{eq:ts5w2} can be further upper-bounded as
\begin{IEEEeqnarray}{lCl}
	|R_{3}(\mathbf{T})| &\leq & \frac{1}{3!} C_{1} \sum_{i,j,\ell=1}^{k-1}  \left| (T_{i} -P_{i})(T_{j} -P_{j})(T_{\ell} -P_{\ell}) \right|  \notag \\
	&\leq &	C' \| \mathbf{T} -\mathbf{P}\|_{2}^{3}
	\label{eq:ts5w1e}
\end{IEEEeqnarray}
where $C'\triangleq\frac{1}{3!} C_{1}(k-1)^{3}$. The second inequality in \eqref{eq:ts5w1e} follows since $|T_{i} -P_{i}| \leq \|\mathbf{T} -\mathbf{P}\|_{2},i=1,\ldots, k-1$. 

We next note that, since $T_{k}-P_{k}=\sum_{i=1}^{k-1}(P_{i}-T_{i})$, we obtain from \eqref{eq:ts1} and \eqref{eq:ts2} that
\begin{IEEEeqnarray}{lCl}
	\sum_{i=1}^{k-1}\frac{\partial f_{Q}(\mathbf{P})}{\partial T_{i}} (T_{i} -P_{i}) &=& 
	\sum_{i=1}^{k-1}  \left[ \ln \left( \frac{P_{i}}{ Q_{i}} \right) -\ln \left( \frac{P_{k}}{ Q_{k}} \right) \right]  (T_{i} -P_{i})  \notag \\
	&=& \sum_{i=1}^{k} (T_{i} -P_{i}) \ln \left( \frac{P_{i}}{ Q_{i}} \right)   
	\label{eq:ts8}
\end{IEEEeqnarray}
and 
\begin{IEEEeqnarray}{lCl}
	\sum_{i,j=1}^{k-1} \frac{\partial^{2} f_{Q}(\mathbf{P})}{\partial T_{i} T_{j}} (T_{i} -P_{i})(T_{j} -P_{j}) &=&
	\sum_{i=1}^{k-1} \left( \frac{1}{P_{i}} +\frac{1}{P_{k}}\right)  (T_{i} -P_{i})^{2}+  \sum_{\substack{i,j=1,\ldots,k-1,\\i\neq j}}  \frac{(T_{i} -P_{i})(T_{j} -P_{j})}{P_{k}}  \notag   \\
	&=&
	\sum_{i=1}^{k-1}  \frac{(T_{i} -P_{i})^{2}}{P_{i}} +\sum_{i,j=1 }^{k-1}  \frac{(T_{i} -P_{i})(T_{j} -P_{j})}{P_{k}}  \notag \\
	&=& \sum_{i}^{k}  \frac{(T_{i} -P_{i})^{2}}{P_{i}} \notag \\
	& =& d_{\chi^{2}} (T, P).
	\label{eq:ts9}
\end{IEEEeqnarray}
Combining \eqref{eq:ts5w1e}--\eqref{eq:ts9} with \eqref{eq:ts4}, we thus obtain that 
\begin{equation}
D_{\textnormal{KL}}(T \| Q) =  D(P \| Q) + \sum_{i=1}^{k} (T_{i} -P_{i} ) \ln    \left( \frac{P_{i}}{ Q_{i}} \right)+  \frac{1}{2} d_{\chi^{2}} (T, P) + O(\| \mathbf{T} -\mathbf{P}\|_{2}^{3})  
\end{equation}
as $\|\mathbf{T} -\mathbf{P}\|_{2} \rightarrow 0$. This proves Lemma~\ref{ctaylor}.
\section{Proof of Lemma~\ref{lemmaoptm}} \label{prooflemmaoptm}
Let $\Gamma=(\Gamma_{1}, \ldots, \Gamma_{k})^{\mathsf{T}} \in \mathcal{P}(\mathcal{Z})$ be a probability distribution with coordinate $\mathbf{\Gamma}=(\Gamma_{1}, \ldots, \Gamma_{k-1})^{\mathsf{T}}$.   
Lemma~\ref{lemmaoptm} is tantamount to solving the minimization problem
\begin{equation}
	\left.\begin{aligned}
		&\underset{\mathbf{x}}{\text{minimize}} \; \; \tilde{\ell}(\mathbf{x}) \triangleq \mathbf{c}^{\mathsf{T}} \mathbf{x}\\
		&\text{subject to:}\\[5pt]
		&g_{0}(\mathbf{x})  \triangleq  \mathbf{x}^{\mathsf{T}} \bm{A}_{D,\mathbf{P}} \mathbf{x} - \tilde{r} \leq 0 
	\end{aligned}\,\,\right\} \label{eq:mp2}
\end{equation}
where  $\mathbf{c}=(c_{1},\ldots,c_{k-1})^{\mathsf{T}}$ was defined in \eqref{eq:cidef}, $\bm{A}_{D,\mathbf{P}}$ was defined in \eqref{eq:matrixa}, and $\mathbf{x}=(x_{1},\ldots,x_{k-1})^{\mathsf{T}}$ is a $(k-1)$-dimensional vector with entries $x_{i}=\Gamma_{i}-P_{i}$, $i=1,\ldots,k-1$.

To find the solution of this constrained optimization problem, we consider the Karush-Kuhn-Tucker (KKT) conditions. Since the objective function of the minimization problem  $\tilde{\ell}(\mathbf{x})$ and  the constraint function $g_{0}(\mathbf{x})$  are convex and continuously differentiable in $\mathbb{R}^{k-1}$, the KKT conditions are necessary and sufficient for optimality. Let us first consider the Lagrangian function
\begin{equation}
	\mathcal{L}(\mathbf{x}, \mu_{0}) = \tilde{\ell}(\mathbf{x}) + \mu_{0} g_0(\mathbf{x})
\end{equation}
where  $\mu_{0}$ is a KKT multiplier. Evaluating the KKT conditions, we obtain that
\begin{IEEEeqnarray}{rCl}
	c_{i}+ \mu_{0} \left(  2 \sum_{j=1}^{k-1} \bm{A}_{D,\mathbf{P}}(i,j) x_{j} \right) & = & 0,\quad i=1,\ldots,k-1 \label{eq:kkt1a}\\
	\sum_{i=1}^{k-1} \sum_{j=1}^{k-1} \bm{A}_{D,\mathbf{P}}(i,j) x_{i}x_{j}- \tilde{r} & \leq & 0 \label{eq:kkt2a} \\
	\mu_{0}  &\geq & 0 \label{eq:kkt5a}  \\
	\mu_{0} \left( \sum_{i=1}^{k-1} \sum_{j=1}^{k-1} \bm{A}_{D,\mathbf{P}}(i,j) x_{i}x_{j}- \tilde{r}\right)   & = & 0 \label{eq:kkt8a}
\end{IEEEeqnarray}
where $\bm{A}_{D,\mathbf{P}}(i,j)$ denotes the row-$i$, column-$j$ entry of $\bm{A}_{D,\mathbf{P}}$. We next argue that $\mu_{0} =0$ is not a valid solution. Indeed, suppose that $\mu_{0} =0$.
Then,  \eqref{eq:kkt1a}  yields
$c_{i}=0$ for $i=1,\ldots,k-1$, which in turn only holds if $P=Q$. However, we have $P \neq Q $ by the assumption of  Theorem~\ref{divergence}, so $\mu_{0} =0$ is not valid. In the following, we therefore assume that $\mu_{0} > 0$. From \eqref{eq:kkt1a}, we then obtain that 
\begin{equation}
	\mathbf{x} = \frac{-1}{2 \mu_{0}}  \bm{A}_{D,\mathbf{P}}^{-1} \mathbf{c} \label{eq:kkts4}
\end{equation}
where $\bm{A}_{D,\mathbf{P}}^{-1}$ exists since $\bm{A}_{D,\mathbf{P}} \succ 0$. Furthermore, \eqref{eq:kkt8a} implies that
\begin{equation}
	\sum_{i=1}^{k-1} \sum_{j=1}^{k-1} \bm{A}_{D,\mathbf{P}}(i,j) x_{i}x_{j}- \tilde{r} =0. \label{eq:kkts5}
\end{equation}
Substituting \eqref{eq:kkts4} in  \eqref{eq:kkts5}, and using \eqref{eq:kkt5a}, we get
\begin{equation}
	\mu_{0}= \frac{1}{2 \sqrt{\tilde{r}}}  \sqrt{\mathbf{c}^{\mathsf{T}} \bm{A}_{D,\mathbf{P}}^{-1} \mathbf{c}} \label{eq:kkts7}
\end{equation}
which combined with \eqref{eq:kkts4} yields the optimal solution
\begin{equation}
	\mathbf{x}^{*}= \frac{-\sqrt{\tilde{r}} \bm{A}_{D,\mathbf{P}}^{-1} \mathbf{c}}{ \sqrt{\mathbf{c}^{\mathsf{T}} \bm{A}_{D,\mathbf{P}}^{-1} \mathbf{c}}}.    \label{eq:kkts8}
\end{equation}
It follows that the  probability distribution  $\Gamma^{*}=(\Gamma_{1}^{*}, \ldots, \Gamma_{k}^{*})^{\mathsf{T}}$ that minimizes the function $\ell(\Gamma)$  is given by
\begin{equation}
	\Gamma^{*}_{i} = P_{i} + x_{i}^{*}, \quad i=1,\ldots,k \label{eq:minprob1}
\end{equation}
where 
\begin{IEEEeqnarray}{lCl}
	x_{i}^{*} &=& \frac{-\sqrt{\tilde{r}} b_{i}}{ \sqrt{\mathbf{c}^{\mathsf{T}} \bm{A}_{D,\mathbf{P}}^{-1} \mathbf{c}}}, \quad i=1,\ldots,k-1 \label{eq:kkts9} \\
	x_{k}^{*} &=& -\sum_{i=1}^{k-1}x_{i}^{*}  \label{eq:minprob2}
\end{IEEEeqnarray}
and $b_{i} = (\bm{A}_{D,\mathbf{P}}^{-1} \mathbf{c})_{i}$.

We next argue that $\Gamma^{*}$ is indeed a valid probability distribution. Indeed, since $P$ is a probability distribution and, by \eqref{eq:minprob2}, $\sum_{i=1}^k x_i^*=0$, we have that $\sum_{i=1}^{k}\Gamma_{i}^{*}=1$. It thus remains to show that $\Gamma_{i}^{*}>0$ for $ i=1,\ldots,k$.

First, let us consider the indices $i = 1,\ldots,k-1$. If $b_{i}\leq 0$, then,  by \eqref{eq:kkts9}, $x_{i}^{*}\geq 0$ and we directly obtain that $\Gamma^{*}_{j} = P_{i} + x_i^* >0$. If $b_{i}>0$, then $i \in \mathcal{I}_{+}$.  Since $\sqrt{\tilde{r}} < \frac{\psi}{\tau}  \sqrt{\mathbf{c}^{\mathsf{T}} \bm{A}_{D,\mathbf{P}}^{-1} \mathbf{c}}$ by assumption, we obtain
\begin{IEEEeqnarray}{lCl}
	x_{i}^{*} &=& \frac{- \sqrt{\tilde{r}}b_{i}}{ \sqrt{\mathbf{c}^{\mathsf{T}} \bm{A}_{D,\mathbf{P}}^{-1} \mathbf{c}}} \notag \\
	&	>& \frac{- b_{i}}{ \sqrt{\mathbf{c}^{\mathsf{T}} \bm{A}_{D,\mathbf{P}}^{-1} \mathbf{c}}}
	\frac{\psi}{\tau}  \sqrt{\mathbf{c}^{\mathsf{T}} \bm{A}_{D,\mathbf{P}}^{-1} \mathbf{c}} \notag\\
	&\geq& 	- P_i 
\end{IEEEeqnarray}
where the last step follows because  $\frac{ \psi}{\tau} \leq \frac{ P_{i}}{b_{i}}$ since $\psi \leq P_{i}$ by  \eqref{eq:index3} and $\tau \geq b_{i}$, $i \in \mathcal{I}_{+}$ by \eqref{eq:index3a} and \eqref{eq:index4a}. It follows that $\Gamma^{*}_{j} =  P_{j} + x_{j}^{*} >0$.

Next, we consider $\Gamma^{*}_{k}$.  By \eqref{eq:kkts9} and \eqref{eq:minprob2}, we have 
\begin{equation}
	\Gamma^{*}_{k} = P_{k} + \frac{\sqrt{\tilde{r}} \sum_{i=1}^{k-1}b_{i}}{ \sqrt{\mathbf{c}^{\mathsf{T}} \bm{A}_{D,\mathbf{P}}^{-1} \mathbf{c}}}.  \label{eq:minprob3}
\end{equation} 
If $\sum_{i=1}^{k-1}b_{i} \geq 0$, then $x^*\geq 0$ and we directly obtain that $\Gamma_k^* = P_k + x_k^* > 0$. If $\sum_{i=1}^{k-1}b_{i}<0$, then the assumption  $\sqrt{\tilde{r}} < \frac{\psi}{\tau}  \sqrt{\mathbf{c}^{\mathsf{T}} \bm{A}_{D,\mathbf{P}}^{-1} \mathbf{c}}$ yields that 
\begin{IEEEeqnarray}{lCl}
	\sqrt{\tilde{r}} &<&  \frac{\psi}{\tau}  \sqrt{\mathbf{c}^{\mathsf{T}} \bm{A}_{D,\mathbf{P}}^{-1} \mathbf{c}} \notag\\
	& \leq &  \frac{P_{k}}{-\sum_{i=1}^{k-1}b_{i}}  \sqrt{\mathbf{c}^{\mathsf{T}} \bm{A}_{D,\mathbf{P}}^{-1} \mathbf{c}} 
	\label{eq:minprob4}
\end{IEEEeqnarray} 
where the second inequality follows because  $\tau \geq \tau_{1}=-\sum_{i=1}^{k-1}b_{i}$ by \eqref{eq:index3a} and \eqref{eq:index4}, and because $\psi \leq P_{k}$ by  \eqref{eq:index3}. It follows that
\begin{equation}
	x_{k}^{*} = \frac{\sqrt{\tilde{r}} \sum_{i=1}^{k-1}b_{i}}{ \sqrt{\mathbf{c}^{\mathsf{T}} \bm{A}_{D,\mathbf{P}}^{-1} \mathbf{c}}} >-P_{k}
	\label{eq:minprob5}
\end{equation} 
which implies that $	\Gamma^{*}_{k}  =  P_{k} +x_{k}^{*} >0$.

We conclude the proof of Lemma~\ref{lemmaoptm} by noting that the value of $\ell(\Gamma^{*})$ is given by 
\begin{equation}
\ell(\Gamma^{*})= \tilde{\ell}(\mathbf{x}^*) = \mathbf{c}^{\mathsf{T}} \mathbf{x}^{*}.   
\end{equation}
By \eqref{eq:kkts8}, this evaluates to
\begin{equation}
\ell(\Gamma^{*}) =-\sqrt{\tilde{r}} \sqrt{\mathbf{c}^{\mathsf{T}} \bm{A}_{D,\mathbf{P}}^{-1} \mathbf{c}}
\end{equation}
which is \eqref{eq:min1}.
\section{Proof of Lemma~\ref{type}} \label{lemmatype}
 Consider the probability distribution $\Gamma^{*}=(\Gamma^{*}_{1},\ldots,\Gamma^{*}_{k} )^{\mathsf{T}}$ that minimizes $\ell(\Gamma)$  over $\bar{\mathcal{B}}_{\bm{A}_{D,\mathbf{P}}}  \left( \bar{r}_{n} \right)$, namely,
\begin{equation}
	\Gamma^{*}_{i} = P_{i} + x_{i}^{*}, \quad i=1,\ldots,k \label{eq:minprob1a}
\end{equation} 
where 
\begin{IEEEeqnarray}{lCl}
	x_{i}^{*} &=& \frac{-\sqrt{\bar{r}_{n}} b_{i}}{ \sqrt{\mathbf{c}^{\mathsf{T}} \bm{A}_{D,\mathbf{P}}^{-1} \mathbf{c}}},\quad i=1,\ldots,k-1\label{eq:minprob2b} \\
	x_{k}^{*} &=&-\sum_{i=1}^{k-1}x_{i}^{*}  \label{eq:minprob2a} 
\end{IEEEeqnarray}
and $b_{i} = (\bm{A}_{D,\mathbf{P}}^{-1} \mathbf{c})_{i}$. We shall assume that $n\geq\tilde{N}'$ so that, by \eqref{eq:rnbarn}, $\sqrt{\bar{r}_{n}} < \frac{\psi}{\tau}  \sqrt{\mathbf{c}^{\mathsf{T}} \bm{A}_{D,\mathbf{P}}^{-1} \mathbf{c}}$. Consequently, we have $\Gamma^{*}_{i} >0$, $i=1,\ldots,k$.

To prove the lemma, we need to find a type distribution $\tp^{*}_{n}=(\tp^{*}_{n}(a_{1}), \ldots, \tp^{*}_{n}(a_{k}) )^{\mathsf{T}}$ such that for a sufficiently large $\tilde{N}\geq \tilde{N}'$ and all $n\geq\tilde{N}$ the following is true:
\begin{enumerate}
\item $\tp^{*}_{n} \in 	\bar{\mathcal{B}}_{\bm{A}_{D,\mathbf{P}}}\left( \bar{r}_{n}\right)$, i.e.,	
\begin{equation}
	(\mathbf{\tp}^{*}_{n}-\mathbf{P})^{\mathsf{T}} \bm{A}_{D,\mathbf{P}} ( \mathbf{\tp}^{*}_{n}-\mathbf{P}) \leq \bar{r}_{n} \label{eq:z9}
\end{equation}
where $\mathbf{\tp}^{*}_{n}=(\tp^{*}_{n}(a_{1}), \ldots, \tp^{*}_{n}(a_{k-1}) )^{\mathsf{T}}$ is the coordinate of  $\tp^{*}_{n}$.
\item For every $ i=1, \ldots,k$,  $n \tp^{*}_{n}(a_{i}) $ is a positive  integer and
\begin{equation}
	\sum_{i=1}^{k}n \tp^{*}_{n}(a_{i}) =n. \label{eq:z8}  
\end{equation}
\item $\tp^{*}_{n}$ satisfies 
\begin{equation}
	|n \ell(\Gamma^{*}) -n\ell(\tp^{*}_{n})| \leq \kappa, \quad \text{ for some } \kappa >0. \label{eq:zz}
\end{equation}
\end{enumerate} 
To this end,  let us write the coordinates   $\mathbf{\Gamma}^{*}=(\Gamma^{*}_{1},\ldots,\Gamma^{*}_{k-1} )^{\mathsf{T}}$ of $\Gamma^{*}$ as
\begin{equation}
\mathbf{\Gamma}^{*} = \mathbf{P} +\mathbf{x}^{*}
\label{eq:gamma*}
\end{equation} 
where $\mathbf{x}^{*}=(x_{1}^{*},\ldots,x_{k-1}^{*})^{\mathsf{T}}$ with $x_{i}^{*}, i=1, \ldots,k-1$ defined in \eqref{eq:minprob2b}. Next define, for some $0<\bar{\alpha}<1$ to be specified later,
\begin{equation}
	\mathbf{\bar{\Gamma}}^{*} \triangleq \mathbf{P} +(1-\bar{\alpha})\mathbf{x}^{*}.
	\label{eq:gammabar}
\end{equation} 
We then choose  $\mathbf{\tp}^{*}_{n}=(\tp^{*}_{n}(a_{1}), \ldots, \tp^{*}_{n}(a_{k-1}) )^{\mathsf{T}}$ as follows: 
\begin{equation}
	n \tp^{*}_{n}(a_{i})  =
	\begin{cases}
			\lfloor  n \bar{\Gamma}^{*}_{i} \rfloor, \quad	& \text{if} \;  \langle \bm{A}_{D} \mathbf{x}^{*}, \bm{e}_{i} \rangle >0 \\
		\lceil n \bar{\Gamma}^{*}_{i} \rceil, \quad 	&   \text{if}  \;  \langle \bm{A}_{D} \mathbf{x}^{*}, \bm{e}_{i} \rangle \leq 0
	\end{cases} 
\label{eq:tydef1}
\end{equation}
 where $\lfloor  \cdot \rfloor $ is the floor function; $\lceil \cdot \rceil$ is the ceiling function; $\bm{e}_i=(0,\ldots,0,1,0,\ldots,0)^{\mathsf{T}}$ denotes the standard basis vector in $\mathbb{R}^{k-1}$ whose components are all zero except  at position $i$, where it is one; and $\langle  \cdot, \cdot \rangle $ denotes the dot product in 
$\mathbb{R}^{k-1}$.

We next show that $ \mathbf{\tp}^{*}_{n}$ satisfies the conditions~\eqref{eq:z9}--\eqref{eq:zz}. For ease of exposition, we define the vector
\mbox{$\bm{{\delta}}=(\delta_{1}, \ldots, 	\delta_{k-1})^{\mathsf{T}}$} as
\begin{equation}
\bm{\delta} \triangleq	n \mathbf{\tp}^{*}_{n}-n	\mathbf{\bar{\Gamma}}^{*}. \label{eq:deltadef}
\end{equation} 
It follows immediately from~\eqref{eq:tydef1} that 
\begin{eqnarray}
|\delta_{i} | <1, \quad i=1,\ldots,k-1. \label{eq:deltab}
\end{eqnarray} 
Furthermore,
\begin{IEEEeqnarray}{lCll}
\delta_{i} & \leq & 0, \quad	& \text{if $\langle \bm{A}_{D,\mathbf{P}} \mathbf{x}^{*}, \bm{e}_{i} \rangle  >0$} \label{eq:delta1}\\
	\delta_{i} & \geq &0, \quad 	&  \text{if $\langle  \bm{A}_{D,\mathbf{P}} \mathbf{x}^{*}, \bm{e}_{i} \rangle  \leq 0.$}\label{eq:delta2}
\end{IEEEeqnarray}
By \eqref{eq:gammabar} and \eqref{eq:deltadef}, we can express  $\mathbf{\tp}^{*}_{n}$ as
\begin{IEEEeqnarray}{lCl}
 \mathbf{\tp}^{*}_{n}	&=& \mathbf{P} +(1-\bar{\alpha})\mathbf{x}^{*} +\frac{\bm{\delta}}{n}  \notag\\
&	=& \mathbf{P} +\mathbf{\bar{x}} \label{eq:btypdef2} 
\end{IEEEeqnarray} 
where 
\begin{equation}
\mathbf{\bar{x}} 	\triangleq (1-\bar{\alpha})\mathbf{x}^{*} +\frac{\bm{\delta}}{n}. \label{eq:barxdef1} 
\end{equation} 

\subsection{Proof of  \eqref{eq:z9}}
Consider
 	\begin{eqnarray}
	(\mathbf{\tp}^{*}_{n}-\mathbf{P})^{\mathsf{T}} \bm{A}_{D,\mathbf{P}} ( \mathbf{\tp}^{*}_{n}-\mathbf{P})= 	\mathbf{\bar{x}}^{\mathsf{T}} \bm{A}_{D,\mathbf{P}} \mathbf{\bar{x}}. 
\end{eqnarray}
For convenience, we define
\begin{equation}
\mathbf{\bar{y}} \triangleq (1-\bar{\alpha})\mathbf{x}^{*}\quad \text{and} \quad  \bm{\bar{\delta}} \triangleq \frac{\bm{\delta}}{n}.  \label{eq:barxdef1a} 
\end{equation} 
Then, we have
	\begin{IEEEeqnarray}{lCl}
	\mathbf{\bar{x}}^{\mathsf{T}} \bm{A}_{D,\mathbf{P}} \mathbf{\bar{x}} &=& (\mathbf{\bar{y}} + \bm{\bar{\delta}})^{\mathsf{T}}
	 \bm{A}_{D,\mathbf{P}} (\mathbf{\bar{y}} + \bm{\bar{\delta}}) \notag \\
	 &=&\mathbf{\bar{y}}^{\mathsf{T}}\bm{A}_{D,\mathbf{P}} \mathbf{\bar{y}}+ 2 \bm{\bar{\delta}}^{\mathsf{T}}\bm{A}_{D,\mathbf{P}}\mathbf{\bar{y}}+  \bm{\bar{\delta}}^{\mathsf{T}}\bm{A}_{D,\mathbf{P}} \bm{\bar{\delta}} \label{eq:inp}
	 \end{IEEEeqnarray}
 where we have used that $\bm{A}_{D,\mathbf{P}}$ is a symmetric matrix. After some algebraic manipulations, it can be shown that $(\mathbf{x}^{*})^{\mathsf{T}} \bm{A}_{D,\mathbf{P}} \mathbf{x}^{*}=\bar{r}_{n}$. Consequently,
\begin{equation}
	\mathbf{\bar{y}}^{\mathsf{T}}\bm{A}_{D,\mathbf{P}} \mathbf{\bar{y}} 
	=(1-\bar{\alpha})^{2} \bar{r}_{n}. \label{eq:inp1}
\end{equation}
Furthermore, using the Rayleigh-Ritz theorem \cite[Th.~4.2.2]{RJ90}, we can upper-bound the third term  on the right-hand side of~\eqref{eq:inp} by
\begin{equation}
 \bm{\bar{\delta}}^{\mathsf{T}}\bm{A}_{D,\mathbf{P}} \bm{\bar{\delta}} \leq \tilde{\lambda}_{\textnormal{max}} \|  \bm{\bar{\delta}}\|_{2}^{2} \label{eq:inp2}
\end{equation}
where $\tilde{\lambda}_{\textnormal{max}}$ is the maximum eigenvalue of $\bm{A}_{D,\mathbf{P}}$.
Finally, writing $\bm{{\delta}}=\sum_{i=1}^{k-1}\delta_{i} \bm{e}_{i}$, and using \eqref{eq:barxdef1a}, we upper-bound the second term on the right-hand side of~\eqref{eq:inp} by
\begin{IEEEeqnarray}{lCl}
2\bm{\bar{\delta}}^{\mathsf{T}}\bm{A}_{D,\mathbf{P}}\mathbf{\bar{y}} &=& 2 \langle  \bm{A}_{D,\mathbf{P}} \mathbf{\bar{y}}, \bm{\bar{\delta}} \rangle \notag\\
&=& 2\left\langle (1-\bar{\alpha})\bm{A}_{D,\mathbf{P}} \mathbf{x}^{*} , \frac{1}{n}\sum_{i=1}^{k-1}\delta_{i} \bm{e}_{i} \right\rangle  \notag \\
&=& \frac{2 (1-\bar{\alpha})}{n} \sum_{i=1}^{k-1} \delta_{i} \langle  \bm{A}_{D,\mathbf{P}} \mathbf{x}^{*} , \bm{e}_{i} \rangle  \notag\\
&\leq & 0 \label{eq:inp3}
\end{IEEEeqnarray}
where the last inequality follows from \eqref{eq:delta1}--\eqref{eq:delta2} and because we assumed that $0<\bar{\alpha}<1$.

Combining \eqref{eq:inp1}--\eqref{eq:inp3} with \eqref{eq:inp}, we obtain
\begin{equation}
\mathbf{\bar{x}}^{\mathsf{T}} \bm{A}_{D,\mathbf{P}} \mathbf{\bar{x}} 
 \leq (1-\bar{\alpha})^{2} \bar{r}_{n} + \tilde{\lambda}_{\textnormal{max}} \|  \bm{\bar{\delta}}\|_{2}^{2}. \label{eq:inp5}
\end{equation}
Furthermore, \eqref{eq:deltab} and \eqref{eq:barxdef1a} yield that
\begin{equation}
\|  \bm{\bar{\delta}}\|_{2}^{2} = \frac{\|  \bm{\delta}\|_{2}^{2} }{n^{2}}
< \frac{(k-1)}{n^{2}}. \label{eq:inp6}
\end{equation}
Consequently,
\begin{IEEEeqnarray}{lCl}
\mathbf{\bar{x}}^{\mathsf{T}} \bm{A}_{D,\mathbf{P}} \mathbf{\bar{x}}  &<&(1-\bar{\alpha})^{2} \bar{r}_{n} +  \frac{\tilde{\lambda}_{\textnormal{max}} (k-1)}{n^{2}} \notag \\
&=& \bar{r}_{n}+(\bar{\alpha}^{2}\bar{r}_{n} -\bar{\alpha} \bar{r}_{n}) +\left(  \frac{\tilde{\lambda}_{\textnormal{max}} (k-1)}{n^{2}}  -\bar{\alpha} \bar{r}_{n}\right).\label{eq:inp7}
\end{IEEEeqnarray}

From \eqref{eq:sec6} and \eqref{eq:rbar}, we have that $\bar{r}_{n}=\Theta(\frac{1}{n})$, so  there exist $0<\check{M}_{1} \leq \check{M}_{2} < \infty$ and $\check{N}_{1} \in \mathbb{N}$ such that
\begin{equation}
\frac{\check{M}_{1}}{ n} \leq \bar{r}_{n} \leq  \frac{\check{M}_{2}}{ n}, \quad n \geq \check{N}_{1}. \label{eq:rno1}
\end{equation}
We choose
\begin{equation}
\bar{\alpha}= \frac{2\tilde{\lambda}_{\textnormal{max}} (k-1)}{\check{M}_{1} n} \label{eq:rno2}
	\end{equation}
which is positive and vanishes as $n$ tends to infinity, hence it satisfies $0< \bar{\alpha} <1$ for all $n \geq  \check{N}_{2} $ and a sufficiently large $\check{N}_{2} \geq \check{N}_{1}$. For this choice of $\bar{\alpha}$, we have that
\begin{equation}
	\left( 	\frac{\tilde{\lambda}_{\textnormal{max}} (k-1)}{n^{2}}  -\bar{\alpha} \bar{r}_{n}\right)  \leq -\frac{\tilde{\lambda}_{\textnormal{max}} (k-1)}{n^{2}} <0, \quad n \geq \check{N}_{2}. \label{eq:rno4}
\end{equation}
Furthermore, 
\begin{equation}
	\bar{\alpha}^{2} \bar{r}_{n}= \Theta \left( \frac{1}{n^{3}}\right) \quad \text{and} \quad  \bar{\alpha}\bar{r}_{n}= \Theta \left( \frac{1}{n^{2}}\right).
\end{equation}
This implies that there  exists an $\check{N}_{3} \in \mathbb{N}$ such that
\begin{equation}
	\bar{\alpha}^{2} \bar{r}_{n}-\bar{\alpha}\bar{r}_{n} <0, \quad n \geq \check{N}_{3}. \label{eq:rno5}
\end{equation}
Applying \eqref{eq:rno4} and  \eqref{eq:rno5} to \eqref{eq:inp7}, we obtain that, for $n \geq \max\left\lbrace  \check{N}_{2},  \check{N}_{3} \right\rbrace $,  
\begin{IEEEeqnarray}{lCl}
	\mathbf{\bar{x}}^{\mathsf{T}} \bm{A}_{D,\mathbf{P}} \mathbf{\bar{x}}  &< &\bar{r}_{n}+(\bar{\alpha}^{2}\bar{r}_{n} -\bar{\alpha} \bar{r}_{n}) +\left(  \frac{\tilde{\lambda}_{\textnormal{max}} (k-1)}{n^{2}}  -\bar{\alpha} \bar{r}_{n}\right) \notag \\
	&\leq& \bar{r}_{n}.\label{eq:inp8}
\end{IEEEeqnarray}
This proves that $\mathbf{\tp}^{*}_{n}$ satisfies \eqref{eq:z9} for sufficiently large $n$.
	
\subsection{Proof of \eqref{eq:z8}}
For $i=1,\ldots,k-1$, $n \tp^{*}_{n}(a_{i}) $  is clearly an integer, since it is obtained by applying the floor or ceiling function to $n\bar{\Gamma}^{*}_{i}$; cf.~\eqref{eq:tydef1}. We next show that $n \tp^{*}_{n}(a_{i}) $ is also positive. To this end, we demonstrate that $\bar{\Gamma}^{*}_{i}>1/n$ for sufficiently large $n$, from which we directly obtain that $n \tp^{*}_{n}(a_{i}) \geq \lfloor n\bar{\Gamma}^{*}_{i}\rfloor \geq 1$.

Indeed, by \eqref{eq:gammabar}, we have that
\begin{equation}
	\bar{\Gamma}^{*}_{i} = P_{i} + x^{*}_{i}-\bar{\alpha} x^{*}_{i}, \quad i=1,\ldots,k-1. \label{eq:gammabar1}
\end{equation} 
If $x^{*}_{i}=0$, then $\bar{\Gamma}^{*}_{i}  = P_{i}>0$. It follows that there exists an $\bar{N}_1$ such that $\bar{\Gamma}^{*}_{i}  > 1/n$ for all $n\geq \bar{N}_1$. If $x^{*}_{i}<0$, then $-\bar{\alpha} x^{*}_{i}>0$. Since $\Gamma^{*}_{i} >0$ for $n \geq \tilde{N}_1$, we then obtain that
\begin{IEEEeqnarray}{lCl}
	\bar{\Gamma}^{*}_{i} & = & P_{i} + x^{*}_{i}-\bar{\alpha}x^{*}_{i} \notag\\
	&=&\Gamma^{*}_{i} -\bar{\alpha} x^{*}_{i} \notag\\
	& >&\frac{1}{n}\label{eq:gammabar2}
\end{IEEEeqnarray}
for all $n\geq \bar{N}_2$ and some sufficiently large $\bar{N}_2$. If $x^{*}_{i}>0$, then $\bar{\alpha} x^{*}_{i}>0$ and, by \eqref{eq:minprob2b},
\begin{equation}
\frac{b_{i}}{ \sqrt{\mathbf{c}^{\mathsf{T}} \bm{A}_{D,\mathbf{P}}^{-1} \mathbf{c}}}<0.
\end{equation}
Since $\bar{r}_{n}=\Theta(\frac{1}{n})$, this implies that $x^{*}_{i}=\Theta(1/\sqrt{n})$.  Similarly, $\bar{\alpha}$ in \eqref{eq:rno2} satisfies $\bar{\alpha}=\Theta(1/n)$. It follows that  $x^{*}_{i}-\bar{\alpha}x^{*}_{i}=\Theta(1/\sqrt{n}) -\Theta(1/n^{3/2})$, so there exists an $\bar{N}_3$ such that $x^{*}_{i}-\bar{\alpha} x^{*}_{i}>1/n$  for $n \geq \bar{N}_3$. Consequently,
\begin{equation}
	\bar{\Gamma}^{*}_{i}  =  P_{i} + (x^{*}_{i}-\bar{\alpha} x^{*}_{i}) >\frac{1}{n}, \quad \quad n \geq \bar{N}_3.
\end{equation} 
We conclude that $\bar{\Gamma}^{*}_{i} > 1/n$ for all $ n \geq \max\lbrace \bar{N}_1, \bar{N}_2,\bar{N}_2 \rbrace $, hence $n \tp^{*}_{n}(a_{i})$, $i=1,\ldots,k-1$ is a positive integer.

We next show that
\begin{equation}
\sum_{i=1}^{k-1} n \tp^{*}_{n}(a_{i}) <n. \label{eq:not_n}
\end{equation}
In this case
\begin{equation}
	n \tp^{*}_{n}(a_{k}) =n- \sum_{i=1}^{k-1} n \tp^{*}_{n}(a_{i})  \label{eq:typek}
\end{equation}
is a positive integer and $\tp^{*}_{n}=(\tp^{*}_{n}(a_{1}), \ldots, \tp^{*}_{n}(a_{k}) )^{\mathsf{T}}$ satisfies \eqref{eq:z8} as desired. To prove \eqref{eq:not_n}, we use \eqref{eq:minprob2a} and \eqref{eq:btypdef2} to express the left-hand side of \eqref{eq:not_n} as
\begin{IEEEeqnarray}{lCl}
	\sum_{i=1}^{k-1} n \tp^{*}_{n}(a_{i}) 
	&=&\sum_{i=1}^{k-1} nP_{i} + \sum_{i=1}^{k-1}n(1-\bar{\alpha}) x_{i}^{*} + \sum_{i=1}^{k-1}\delta_{i} \notag \\
	&=&n-nP_{k}  -n x_{k}^{*} + n \bar{\alpha} x_{k}^{*} + \sum_{i=1}^{k-1}\delta_{i}. \label{eq:sumn1}
\end{IEEEeqnarray} 
 Using \eqref{eq:deltab}, this can be upper-bounded as
 \begin{equation}
 	\sum_{i=1}^{k-1} n \tp^{*}_{n}(a_{i}) 	< n-nP_{k}  -n x_{k}^{*} + n \bar{\alpha}_{n} x_{k}^{*} + (k-1). \label{eq:sumn4}
 \end{equation} 
Recall that $\bar{\alpha}=\Theta(1/n)$. Further note that \eqref{eq:minprob2b} and \eqref{eq:minprob2a} imply that
 \begin{equation}
x^{*}_{k}	=O\left(\frac{1}{\sqrt{n}}\right)\label{eq:sumn4a}
\end{equation} 
since $ \bar{r}_{n}=\Theta(1/n)$. Consequently, $ -n x_{k}^{*} + n \bar{\alpha} x_{k}^{*} + (k-1) =O(\sqrt{n})$. Since $nP_{k}=\Theta(n)$, we can therefore find an $\bar{N}_4$ such that
\begin{equation}
-nP_{k}  -n x_{k}^{*} + n \bar{\alpha} x_{k}^{*} + (k-1)<0, \quad n \geq \bar{N}_4.
\end{equation}
Thus, \eqref{eq:not_n} follows, and the type distribution $\tp^{*}_{n}=(\tp^{*}_{n}(a_{1}), \ldots, \tp^{*}_{n}(a_{k}) )^{\mathsf{T}}$ with $n \tp^{*}_{n}(a_{k})$ as indicated in \eqref{eq:typek} satisfies \eqref{eq:z8}.

\subsection{Proof of \eqref{eq:zz}}
It follows from \eqref{eq:hfun1} and  \eqref{eq:barxdef1} that 
\begin{IEEEeqnarray}{lCl}
	|n \ell(\Gamma^{*}) -n\ell(\tp^{*}_{n})| &=& \left| \sum_{i=1}^{k-1} nc_{i}x_{i}^{*}-\sum_{i=1}^{k-1}nc_{i} \bar{x}_{i} \right|  \notag \\
	&=& \left| \sum_{i=1}^{k-1}  c_{i} (n\bar{\alpha} x^{*}_{i} -\delta_{i})  \right| \notag  \\
	& \leq & \sum_{i=1}^{k-1}\bigl(|c_{i}| | n\bar{\alpha} x^{*}_{i}|+ | \delta_{i} |  |c_{i}|\bigr).  \label{eq:inp9}
\end{IEEEeqnarray}
Note that
\begin{equation}
	n \bar{\alpha} x^{*}_{i}=O\left( \frac{1}{\sqrt{n}}\right), \quad i=1,\ldots,k-1
\end{equation}
since $\bar{\alpha}=\Theta(1/n)$, and, by applying $\bar{r}_{n}=\Theta(1/n)$ to \eqref{eq:minprob2b}, $x^*_i = O(1/\sqrt{n})$. This implies that there exist $\widehat{M} >0$ and $\widehat{N} \in \mathbb{N}$ such that
\begin{equation}
	|n\bar{\alpha}x^{*}_{i}| \leq \widehat{M}, \quad n \geq \widehat{N}. \label{eq:inp10}
\end{equation}
Together with \eqref{eq:deltab}, this implies that \eqref{eq:inp9} can be further upper-bounded as
\begin{IEEEeqnarray}{lCl}
	|n \ell(\Gamma^{*}) -n\ell(\tp^{*}_{n})| &\leq &( \widehat{M}+1)\sum_{i=1}^{k-1}|c_{i}| \notag \\
	&\triangleq& \kappa, \quad n\geq \widehat{N}
\end{IEEEeqnarray}
where $\kappa$ is a positive constant. This proves \eqref{eq:zz}.

Having demonstrated that the type distribution $\tp^{*}_{n}$, as defined in \eqref{eq:tydef1}, satisfies \eqref{eq:z9}--\eqref{eq:zz}, we obtain Lemma \ref{type}.

\section*{Acknowledgment}
\addcontentsline{toc}{section}{Acknowledgment}
The authors would like to thank Shun Watanabe for his helpful comments during the initial phase of this work and for pointing out relevant literature.

\bibliography{Bibliography.bib}

% Generated by IEEEtran.bst, version: 1.14 (2015/08/26)
\begin{thebibliography}{10}
\providecommand{\url}[1]{#1}
\csname url@samestyle\endcsname
\providecommand{\newblock}{\relax}
\providecommand{\bibinfo}[2]{#2}
\providecommand{\BIBentrySTDinterwordspacing}{\spaceskip=0pt\relax}
\providecommand{\BIBentryALTinterwordstretchfactor}{4}
\providecommand{\BIBentryALTinterwordspacing}{\spaceskip=\fontdimen2\font plus
\BIBentryALTinterwordstretchfactor\fontdimen3\font minus
  \fontdimen4\font\relax}
\providecommand{\BIBforeignlanguage}[2]{{%
\expandafter\ifx\csname l@#1\endcsname\relax
\typeout{** WARNING: IEEEtran.bst: No hyphenation pattern has been}%
\typeout{** loaded for the language `#1'. Using the pattern for}%
\typeout{** the default language instead.}%
\else
\language=\csname l@#1\endcsname
\fi
#2}}
\providecommand{\BIBdecl}{\relax}
\BIBdecl

\bibitem{H65}
W.~Hoeffding, ``Asymptotically optimal tests for multinomial distributions,''
  \emph{The Annals of Mathematical Statistics}, pp. 369--401, Apr. 1965.

\bibitem{ATCB}
T.~Cover and J.~Thomas, \emph{Elements of Information Theory}, 2nd~ed.\hskip
  1em plus 0.5em minus 0.4em\relax New York, NY, USA: Wiley-Interscience, 2006.

\bibitem{T214}
V.~Y. Tan, ``Asymptotic estimates in information theory with non-vanishing
  error probabilities,'' \emph{Foundations and Trends{\textregistered} in
  Communications and Information Theory}, vol.~11, no. 1-2, pp. 1--184, 2014.

\bibitem{FM02}
M.~Feder and N.~Merhav, ``Universal composite hypothesis testing: A competitive
  minimax approach,'' \emph{IEEE Transactions on Information Theory}, vol.~48,
  no.~6, pp. 1504--1517, Jun. 2002.

\bibitem{VT68}
H.~L. Van~Trees, \emph{Detection, Estimation, and Modulation theory, Part I:
  Detection, Estimation, and Linear Modulation Theory}.\hskip 1em plus 0.5em
  minus 0.4em\relax John Wiley \& Sons, 2004.

\bibitem{Zeitouni92}
O.~Zeitouni, J.~Ziv, and N.~Merhav, ``When is the generalized likelihood ratio
  test optimal?'' \emph{IEEE Transactions on Information Theory}, vol.~38,
  no.~5, pp. 1597--1602, Sep. 1992.

\bibitem{Unnikrishnan11}
J.~Unnikrishnan, D.~Huang, S.~P. Meyn, A.~Surana, and V.~V. Veeravalli,
  ``Universal and composite hypothesis testing via mismatched divergence,''
  \emph{IEEE Transactions on Information Theory}, vol.~57, no.~3, pp.
  1587--1603, Mar. 2011.

\bibitem{BF22-1}
P.~Boroumand and A.~{Guillén i Fàbregas}, ``Mismatched binary hypothesis
  testing: Error exponent sensitivity,'' \emph{IEEE Transactions on Information
  Theory}, vol.~68, no.~10, pp. 6738--6761, Oct. 2022.

\bibitem{BF22-2}
------, ``Composite {Neyman-Pearson} hypothesis testing with a known
  hypothesis,'' in \emph{2022 IEEE Information Theory Workshop (ITW)}, Mumbai,
  India, Nov. 2022, pp. 131--136.

\bibitem{Watanabe18}
S.~Watanabe, ``Second-order optimal test in composite hypothesis testing,'' in
  \emph{Proc. 2018 International Symposium on Information Theory and Its
  Applications (ISITA)}, Singapore, Oct. 2018, pp. 722--726.

\bibitem{Gut89}
M.~Gutman, ``Asymptotically optimal classification for multiple tests with
  empirically observed statistics,'' \emph{IEEE Transactions on Information
  Theory}, vol.~35, no.~2, pp. 401--408, Mar. 1989.

\bibitem{YTan20}
Y.~Li and V.~Y.~F. Tan, ``Second-order asymptotics of sequential hypothesis
  testing,'' \emph{IEEE Transactions on Information Theory}, vol.~66, no.~11,
  pp. 7222--7230, Nov. 2020.

\bibitem{W938}
S.~S. Wilks, ``The large-sample distribution of the likelihood ratio for
  testing composite hypotheses,'' \emph{The Annals of Mathematical Statistics},
  vol.~9, no.~1, pp. 60--62, Mar. 1938.

\bibitem{WatanabeBITS18}
\BIBentryALTinterwordspacing
S.~Watanabe, ``{Neyman-Pearson} test and {Hoeffding} test,'' Jan. 2018, slides,
  Bombay Information Theory Seminar (BITS). [Online]. Available:
  \url{https://drive.google.com/file/d/1kj_lU6hkaHx8iKLOJ_9s81LrI0NglIVy/view}
\BIBentrySTDinterwordspacing

\bibitem{AB216}
S.-I. Amari, \emph{Information Geometry and Its Applications}.\hskip 1em plus
  0.5em minus 0.4em\relax Springer, 2016, vol. 194.

\bibitem{CL86}
L.~L. Campbell, ``An extended {{\v{C}}encov} characterization of the
  information metric,'' \emph{Proceedings of the American Mathematical
  Society}, vol.~98, no.~1, pp. 135--141, Sep. 1986.

\bibitem{AC210}
S.-I. Amari and A.~Cichocki, ``Information geometry of divergence functions,''
  \emph{Bulletin of the Polish Academy of Sciences. Technical Sciences},
  vol.~58, no.~1, pp. 183--195, 2010.

\bibitem{CC82}
N.~N. {\v{C}}encov, \emph{Statistical Decision Rules and Optimal Inference},
  ser. Translations of Mathematical Monographs.\hskip 1em plus 0.5em minus
  0.4em\relax Providence, R.I.: American Mathematical Society, 1982, vol.~53.

\bibitem{CZPA209}
A.~Cichocki, R.~Zdunek, A.~H. Phan, and S.-I. Amari, \emph{Nonnegative Matrix
  and Tensor Factorizations: Applications to Exploratory Multi-Way Data
  Analysis and Blind Source Separation}.\hskip 1em plus 0.5em minus 0.4em\relax
  John Wiley \& Sons, 2009.

\bibitem{CS204}
I.~Csisz{\'a}r and P.~C. Shields, \emph{Information Theory and Statistics: A
  Tutorial}.\hskip 1em plus 0.5em minus 0.4em\relax Now Publishers Inc, 2004.

\bibitem{TR84}
T.~R. Read, ``Closer asymptotic approximations for the distributions of the
  power divergence goodness-of-fit statistics,'' \emph{Annals of the Institute
  of Statistical Mathematics}, vol.~36, pp. 59--69, Dec. 1984.

\bibitem{Y72}
J.~K. Yarnold, ``Asymptotic approximations for the probability that a sum of
  lattice random vectors lies in a convex set,'' \emph{The Annals of
  Mathematical Statistics}, pp. 1566--1580, Oct. 1972.

\bibitem{UZ09}
V.~V. Ulyanov and V.~N. Zubov, ``Refinement on the convergence of one family of
  goodness-of-fit statistics to chi-squared distribution,'' \emph{Hiroshima
  Mathematical Journal}, vol.~39, no.~1, pp. 133--161, Mar. 2009.

\bibitem{RJ90}
R.~A. Horn and C.~R. Johnson, \emph{Matrix Analysis}.\hskip 1em plus 0.5em
  minus 0.4em\relax Cambridge University Press, 1990.

\bibitem{feller71}
W.~Feller, \emph{An Introduction to Probability Theory and its Applications},
  2nd~ed.\hskip 1em plus 0.5em minus 0.4em\relax New York: John Wiley \& Sons,
  1971, vol.~II.

\bibitem{AAN209}
\BIBentryALTinterwordspacing
A.~Alexanderian, ``Some notes on asymptotic theory in probability,'' Apr. 2015,
  technical note. [Online]. Available:
  \url{https://aalexan3.math.ncsu.edu/articles/asymp-final.pdf}
\BIBentrySTDinterwordspacing

\bibitem{BDB91}
P.~J. Brockwell and R.~A. Davis, \emph{Time Series: Theory and Methods}.\hskip
  1em plus 0.5em minus 0.4em\relax Springer-Verlag, 1991.

\bibitem{RC199}
R.~B. Ash and C.~A. Doleans-Dade, \emph{Probability and Measure Theory},
  2nd~ed.\hskip 1em plus 0.5em minus 0.4em\relax Harcourt/Academic Press, 1999.

\bibitem{BS200}
R.~G. Bartle and D.~R. Sherbert, \emph{Introduction to Real Analysis},
  3rd~ed.\hskip 1em plus 0.5em minus 0.4em\relax John Wiley \& Sons, 2000.

\end{thebibliography}
\bibliographystyle{IEEEtran}

\end{document}